\documentclass[11pt]{article}
\usepackage{amssymb,amsfonts}
\usepackage{amsthm}
\usepackage{amsmath}
\usepackage{graphicx}
\usepackage[curve,matrix,arrow,color]{xy}
\usepackage[mathscr]{eucal}

\usepackage[all,dvips,color]{xypic}
\xyoption{line}

\usepackage{epsf}
 \usepackage{epsfig}

 \usepackage{epstopdf}
\usepackage{tikz}

\DeclareSymbolFont{extraup}{U}{zavm}{m}{n}
%% \DeclareMathSymbol{\varheart}{\mathalpha}{extraup}{86}
\DeclareMathSymbol{\vardiamond}{\mathalpha}{extraup}{87}

\addtolength{\oddsidemargin}{-67pt}
\addtolength{\evensidemargin}{-67pt}
\addtolength{\topmargin}{-60pt}
\addtolength{\textheight}{100pt}
\addtolength{\textwidth}{125pt}

\makeatletter
\@addtoreset{equation}{section}
\makeatother
\renewcommand{\theequation}{\thesection.\arabic{equation}}
\def\begeqar{\begin{eqnarray}}
\def\endeqar{\end{eqnarray}}
\def\begeq{\begin{equation}}
\def\endeq{\end{equation}}

\def\wgta#1#2#3#4{\hbox{\rlap{\lower.35cm\hbox{$#1$}}
\hskip.2cm\rlap{\raise.25cm\hbox{$#2$}}
\rlap{\vrule width1.3cm height.4pt}
\hskip.55cm\rlap{\lower.6cm\hbox{\vrule width.4pt height1.2cm}}
\hskip.15cm
\rlap{\raise.25cm\hbox{$#3$}}\hskip.25cm\lower.35cm\hbox{$#4$}\hskip.6cm}}

\def\wgtb#1#2#3#4{\hbox{\rlap{\raise.25cm\hbox{$#2$}}
\hskip.2cm\rlap{\lower.35cm\hbox{$#1$}}
\rlap{\vrule width1.3cm height.4pt}
\hskip.55cm\rlap{\lower.6cm\hbox{\vrule width.4pt height1.2cm}}
\hskip.15cm
\rlap{\lower.35cm\hbox{$#4$}}\hskip.25cm\raise.25cm\hbox{$#3$}\hskip.6cm}}

\def\begeqar{\begin{eqnarray}}
\def\endeqar{\end{eqnarray}}

\newcommand{\scal}{scaling~}
\newcommand{\sscal}{Scaling~}

\newcommand{\surj}{\twoheadrightarrow}
\newcommand{\inj}{\hookrightarrow}

\newcommand{\proj}{P}

\newcommand{\one}{\boldsymbol{1}}%{1\kern-4pt 1}
\newcommand{\tensor}{\otimes}
\newcommand{\tensore}{\boxtimes}

\newcommand{\twr}{[\one]_{>2}}
\newcommand{\btwr}{\bar{[\one]}_{>2}}
\newcommand{\strt}{T}
\newcommand{\bstrt}{\bar{\strt}}

\newcommand{\q}{\mathfrak{q}}
\newcommand{\ffrac}[2]{\mbox{\footnotesize$\displaystyle\frac{#1}{#2}$}}

\newcommand{\PTL}[1]{TL^a_{#1}}
\newcommand{\TL}[1]{TL_{#1}}
\newcommand{\JTL}[1]{JTL_{#1}}

\newcommand{\ATL}[1]{T^a_{#1}}

\newcommand{\Soc}{\mathrm{Soc}}
\newcommand{\Top}{\mathrm{Top}}
\newcommand{\Mid}{\mathrm{Mid}}
\newcommand{\Char}{\mathrm{char}}

\newcommand{\LQG}{U_{\q} s\ell(2)}
\newcommand{\LQGi}{U_{i} s\ell(2)}

\newcommand{\LQGodd}{U^{\text{odd}}_{\q} s\ell(2)}
\newcommand{\LQGoddi}{U^{\text{odd}}_{i} s\ell(2)}

\newcommand{\K}{\mathsf{K}}
\newcommand{\F}{\mathsf{F}}
\newcommand{\FF}[1]{\F_{#1}}
\newcommand{\f}{\mathsf{f}}

\newcommand{\E}{\mathsf{E}}

\newcommand{\EE}[1]{\E_{#1}}
\newcommand{\h}{\mathsf{h}}
\newcommand{\e}{\mathsf{e}}

\newcommand{\half}{%
  \mathchoice{\ffrac{1}{2}}{\frac{1}{2}}{\frac{1}{2}}{\frac{1}{2}}}

\newcommand{\HXXZ}{H_{\mathrm{XXZ}}(K)}
\newcommand{\PXXZ}{P_{\mathrm{XXZ}}(K)}

\newcommand{\inn}[2]{\langle#1,#2\rangle}

\newcommand{\chid}{\chi^{\dagger}}
\newcommand{\etad}{\eta^{\dagger}}
\newcommand{\bbeta}{\bar{\eta}}
\newcommand{\bbchi}{\bar{\chi}}
\newcommand{\bbetad}{\bar{\eta}^{\dagger}}
\newcommand{\bbchid}{\bar{\chi}^{\dagger}}
\newcommand{\ferm}{\theta}
\newcommand{\fermd}{\theta^{\dagger}}
\newcommand{\sferm}{\psi}
\newcommand{\bsferm}{\bar{\psi}}
\newcommand{\sfermp}{\psi^{2}}
\newcommand{\sfermm}{\psi^{1}}
\newcommand{\bsfermp}{\bar{\psi}^{2}}
\newcommand{\bsfermm}{\bar{\psi}^{1}}
\newcommand{\phip}{\phi^{2}}
\newcommand{\phim}{\phi^{1}}

\newcommand{\Hilb}{\mathcal{H}}
\newcommand{\cHilb}{\overline{\Hilb}}

\newcommand{\veven}[1]{|v^{\text{even}}\rangle}
\newcommand{\vodd}[1]{|v^{\text{odd}}\rangle}

\newcommand{\vac}{\boldsymbol{\Omega}}
\newcommand{\lvac}{\boldsymbol{\omega}}

\newcommand{\oN}{\mathbb{N}}
\newcommand{\oC}{\mathbb{C}}
\newcommand{\oZ}{\mathbb{Z}}

\newcommand{\step}{\epsilon}
\newcommand{\Endo}{\mathrm{End}}

\newcommand{\Hom}{\mathrm{Hom}}

\newcommand{\Vir}{\mathcal{V}}

\newcommand{\VirN}{\boldsymbol{\mathcal{V}}}
\newcommand{\interLie}{\mathfrak{S}}
\newcommand{\interch}{\interLie}
\newcommand{\interchco}{\overline{\interLie}}
\newcommand{\interchalg}{\mathcal{S}}
\newcommand{\interfin}[1]{\interLie_{#1}}
\newcommand{\Clif}[1]{\mathscr{C}_{#1}}
\newcommand{\glinf}{\gl_{\infty}}
\newcommand{\bglinf}{\overline{\gl}_{\infty}}
\newcommand{\glinfc}{\gl'_{\infty}}
\newcommand{\glinfco}{\bglinf'}
\newcommand{\spinf}{\ssp_{\infty}}
\newcommand{\interinfin}{\interLie_{\infty}}
\newcommand{\interinfinco}{\overline{\interLie}_{\infty}}
\newcommand{\interM}[1]{\mathscr{X}_{#1}}
\newcommand{\interP}[1]{\mathscr{Y}_{#1}}
\newcommand{\interMco}[1]{\overline{\mathscr{X}}_{#1}}
\newcommand{\interPco}[1]{\overline{\mathscr{Y}}_{#1}}

\newcommand{\Asp}{\mathcal{A}}
\newcommand{\Bsp}{\mathcal{B}}
\newcommand{\Csp}{\mathcal{C}}

\newcommand{\Egl}{\mathcal{E}}

\newcommand{\SU}{s\ell}
\newcommand{\nSU}{\boldsymbol{s\ell}}
\newcommand{\SUrep}[1]{C_{#1}}
\newcommand{\gl}{\mathfrak{gl}}
\newcommand{\ssp}{\mathfrak{sp}}

\newcommand{\EndcJTL}{\mathrm{End}_{\rule{0pt}{6.5pt}%
{\centJTL}}}

\newcommand{\HomVir}{\mathrm{Hom}_{\rule{0pt}{6.5pt}%
{\Vir(2)}}}

\newcommand{\chVv}{\Hilb_{N}}
\newcommand{\chV}{V}
\newcommand{\rep}{\pi}
\newcommand{\repgl}{\pi_{\gl}}

\newcommand{\repQG}{\rho_{\gl}}
\newcommand{\cent}{\mathfrak{Z}}
\newcommand{\centTL}{\cent_{\mathsf{TL}}}
\newcommand{\centJTL}{\cent_{\JTL{}}}

\newcommand{\N}{\mathrm N}

\newcommand{\bimnch}{\boldsymbol{\mathcal{H}}}
\newcommand{\bimn}{\bimnch^{\pm}}
\newcommand{\bimnl}{\bimnch^{\pm,(l)}}
\newcommand{\bimnr}{\bimnch^{\pm,(r)}}
\newcommand{\bimnbos}{\bimnch^{+}}
\newcommand{\bimnbosl}{\bimnch^{+,(l)}}
\newcommand{\bimnbosr}{\bimnch^{+,(r)}}
\newcommand{\bimnferl}{\bimnch^{-,(l)}}
\newcommand{\bimnferr}{\bimnch^{-,(r)}}
\newcommand{\bimnfer}{\bimnch^{-}}

 \newcommand{\nfer}{\boldsymbol{\psi}}
 \newcommand{\nferp}{\nfer^2}
 \newcommand{\nferm}{\nfer^1}

\newcommand{\first}{first }
\newcommand{\second}{second }

\newcommand{\energy}{energy }
\newcommand{\bz}{\bar{z}}
\newcommand{\bw}{\bar{w}}
\newcommand{\der}{\partial}
\newcommand{\bder}{\bar{\partial}}
\newcommand{\enrg}{S}
\newcommand{\virm}[1]{L^{(#1)}}
\newcommand{\bvirm}[1]{\bar{L}^{(#1)}}

\newcommand{\enrgpl}{\enrg_{\text{pl.}}}
\newcommand{\enrgcyl}{\enrg_{\text{cyl.}}}
\newcommand{\LL}{L'}

\newcommand{\Xodd}[1]{\mathsf{X}_{#1}}
\newcommand{\XX}{\mathsf{X}}
\newcommand{\PP}{\mathsf{P}}

\newcommand{\PPodd}[1]{\mathsf{T}_{#1}}
\newcommand{\TLX}{d^0}

\newcommand{\WP}{\mathcal{H}}

\newcommand{\VX}{\mathcal{X}}
\newcommand{\VP}{\mathcal{P}}
\newcommand{\IrrTL}[1]{(d^0_{#1})}
\newcommand{\PrTL}[1]{\mathscr{P}_{#1}}
\newcommand{\StTL}[1]{\mathscr{W}_{#1}}
\newcommand{\AIrrTL}[2]{\mathscr{L}_{#1,#2}}
\newcommand{\APrTL}[1]{\widehat{\mathscr{P}}_{#1}}

\newcommand{\AStTL}[2]{\mathscr{W}_{#1,#2}}

\newcommand{\filt}{\mathcal{F}}
\newcommand{\Firr}[2]{F_{#1,#2}^{(0)}}

\newcommand{\vand}{\mathscr{V}}
\newcommand{\vandT}{\mathscr{V}^{T}}
\newcommand{\Lv}{\mathsf{L}}
\newcommand{\Tv}{\mathsf{T}}

\newcommand{\Appfourier}{A~}
\newcommand{\Appglinf}{B}
\newcommand{\AppLim}{C}
\newcommand{\AppHln}{D}
\newcommand{\AppChar}{E}

\newcommand{\nord}[1]{\boldsymbol{\colon}\!\!{#1}\boldsymbol{\colon}\!\!}

\newtheorem{Thm}[subsection]{Theorem}
\newtheorem{thm}[subsubsection]{Theorem}
\newtheorem{Lemma}[subsection]{Lemma}
\newtheorem{lemma}[subsubsection]{Lemma}
\newtheorem{Prop}[subsection]{Proposition}
\newtheorem{prop}[subsubsection]{Proposition}

\newtheorem{conj}[subsubsection]{Conjecture}
\newtheorem{Cor}[subsection]{Corollary}%[section]
\newtheorem{cor}[subsubsection]{Corollary}

\theoremstyle{definition}

\newtheorem{dfn}[subsubsection]{Definition}
%[section]
\newtheorem{rem}[subsubsection]{Remark}

\begin{document}
\begin{center}

%%%%%%%%%%%%%%%%%%%%%%%%%%%%%%%%%%%%%%%%%%%%%%%%%%%%%%%%%%%%%%%%%%%%%%%%%%%%%%%
\Large{Associative algebraic approach to  logarithmic CFT in the bulk:\\ the continuum limit of the $\gl(1|1)$ periodic spin chain, \\Howe duality and  the interchiral algebra.}
%%%%%%%%%%%%%%%%%%%%%%%%%%%%%%%%%%%%%%%%%%%%%%%%%%%%%%%%%%%%%%%%%%%%%%%%%%%%%%%
\vskip 1cm

{\large A.M. Gainutdinov\,$^{a}$, N. Read\,$^{b}$, and
H. Saleur\,$^{a,c}$}

\vspace{1.0cm}

{\sl\small $^a$  Institut de Physique Th\'eorique, CEA Saclay,\\
Gif Sur Yvette, 91191, France\\}
{\sl\small $^b$ Department of Physics, Yale University, P.O. Box 208120,\\ New Haven, Connecticut 06520-8120, USA\\}
{\sl\small $^c$ Department of Physics and Astronomy,
University of Southern California,\\
Los Angeles, CA 90089, USA\\}

\end{center}

\begin{abstract}
We develop in this paper the principles of an  associative algebraic approach to \textit{bulk} logarithmic conformal field theories (LCFTs). We concentrate on the closed $\gl(1|1)$ spin-chain and its continuum limit --  the $c=-2$ symplectic fermions theory -- and rely on two technical companion papers, \textit{Continuum limit and symmetries of the periodic $\gl(1|1)$ spin chain} [Nucl.\;Phys.\;B 871 (2013) 245-288] and \textit{Bimodule structure in the periodic $\gl(1|1)$ spin chain} [Nucl.\;Phys.\;B 871 (2013) 289-329].

Our main result  is that the algebra of local Hamiltonians, the Jones--Temperley--Lieb algebra $\JTL{N}$, goes over in the continuum limit to a bigger algebra than $\VirN$, the product of the left and right Virasoro algebras. This algebra, $\interchalg$ -- which we call \textit{interchiral}, mixes the left and right moving sectors, and is generated, in the symplectic fermions case, by the additional field $S(z,\bar{z})\equiv S_{\alpha\beta}\psi^\alpha(z)\bar{\psi}^\beta(\bar{z})$,  with a symmetric form $S_{\alpha\beta}$ and conformal weights $(1,1)$. We discuss in details how  the space of states  of the LCFT (technically, a Krein space) decomposes onto representations of this algebra, and how this decomposition is related with properties of the finite spin-chain. We show that there is a complete  correspondence between algebraic properties of finite periodic spin chains and the continuum limit.

An important technical aspect of our analysis involves the fundamental new observation that the action of $\JTL{N}$  in the $\gl(1|1)$ spin chain is in fact isomorphic to  an enveloping algebra of a certain Lie algebra, itself a non semi-simple version of $\ssp_{N-2}$. The semi-simple part of $\JTL{N}$  is represented by $U \ssp_{N-2}$, providing a beautiful example of a classical Howe duality, for which we have a non semi-simple version in the full $\JTL{N}$ image represented in the spin-chain. On the continuum side,  simple modules over $\interchalg$ are identified with ``fundamental'' representations of~$\spinf$.

\end{abstract}

%\today

\section{Introduction}

Our understanding of logarithmic conformal field theory (LCFT) has
greatly improved recently thanks to a renewed focus on purely algebraic features and  in particular a systematic study of Virasoro
indecomposable modules.

Most of the developments have concerned   boundary LCFTs, where two lines of attack have been pursued. The
first one is rather abstract, and follows the pioneering work of
Rohsiepe~\cite{Rohsiepe} and Gaberdiel~\cite{Gab1}, where, in
particular, fusion is interpreted as the tensor product of the
symmetry algebra~\cite{GK1,MatRid,EbFl}. The other~\cite{Pearceetal,
ReadSaleur07-2} is somehow more concrete, in that it relies on lattice
regularizations, and exploits the (not entirely understood) similarity~\cite{PasquierSaleur} between the properties of lattice models and their conformally invariant continuum limits.
  The models involved in this second approach provide~\cite{Martinbook} representations of
associative algebras such as the Temperley--Lieb algebra. It turns out
that the representation theory of these algebras is, in a certain (categorical, see Sec.~\ref{categorical})
sense~\cite{ReadSaleur07-1}, similar to the one of chiral algebras in
LCFT. The structure of indecomposable modules and fusion rules can
then be predicted from the analysis of the scaling limit of the lattice
models~\cite{Pearceetal, ReadSaleur07-2,RP}. The results are in perfect agreement
(so far) with the conclusions of the first approach based on (considerably more involved) calculations in the Virasoro algebra~\cite{MatRid,KitRid}.

To be now a little more precise, the general philosophy of the lattice approach in the boundary case relies on the  analysis of  the microscopic  model as a bi-module
over two algebras. In physical terms, one of these algebras is
generated by the local hamiltonian density, and the other is the
`symmetry' commuting with the hamiltonian densities. More precisely,
the types of models we are interested in carry alternating fundamental
representations of a super Lie algebra, and admit a single invariant
nearest neighbor coupling, the `super equivalent' of the Heisenberg
interaction $H_i\approx \vec{S}_i\cdot\vec{S}_{i+1}$. The different
$H_i$'s turn out to obey some non-trivial algebraic relations, so that
the models provide a representation of a certain abstract algebra $A$
-- the ordinary and boundary Temperley--Lieb algebras for instance.

Meanwhile, the local Hamiltonians are invariant under the super Lie
algebra. In fact, they typically are invariant under a bigger algebra
which is, in technical terms, the centralizer of $A$. We recall here
that the centralizer $\cent_A$ of an associative algebra $A$ acting on a
representation space $\chV$ is the algebra of all intertwining
operators $\Endo_{A}(\chV)$, {\it i.e.}, $\cent_A$  is the algebra of the maximum dimension
such that $[\cent_{A},A]=0$. For the open case of finite $\gl(m|n)$-symmetric
spin chains, this centralizer was dubbed ${{\cal A}_{m|n}}$ in
\cite{ReadSaleur07-1}, where it was shown to be Morita equivalent to
$\LQG$ with $\q+\q^{-1}=m-n$, actually to the corresponding finite-dimensional $\q$-Schur algebra. Usually, the representation theory of
this centralizer -- typically a quantum group or an algebra Morita
equivalent to one -- is easier to study than the representation theory
of $A$ or its \scal limit, a chiral algebra. In general, all these
algebras are not semi-simple, and give rise to complicated
indecomposable modules. The point is, in part, that these modules have, even for finite spin chains, properties which are closely related with those
of the associated LCFT.

While the key observation in \cite{ReadSaleur07-1,Pearceetal} about the similarity between the algebraic properties of the lattice models and their
logarithmic continuum limit is not entirely understood, it can be
then  interpreted in some cases at least in terms of quantum groups. Recent studies of  centralizers of chiral algebras
(Virasoro and W-algebras)  in continuum  logarithmic models
have  indeed unraveled a remarkable equivalence~\cite{[FGST2],[FGST4],BFGT} between
the representation theory  and the fusion rules of the chiral algebras and
certain quantum groups. It is exactly the same quantum groups  that appear as centralizers in the lattice models, hence providing the link between finite size and scaling limit
properties.

%\vskip1cm
\bigskip

Turning now to bulk LCFTs, progress has been more modest. The problem
is that, on the more abstract side, one now expects indecomposability
under the left and right actions of the Virasoro algebras, leading to
potentially very complicated modules which have proven too hard to
study so far, except in some special cases. These include bulk
logarithmic theories~\cite{GabRun,GabRunW} with W-algebra
symmetries~\cite{[K-first],[FGST3]}, and WZW models on
supergroups which, albeit very simple as far as LCFTs go, provide
interesting lessons on the coupling of left and right
sectors~\cite{SaleurSchomerus}. On the more concrete side, while it is
possible to define and study lattice models whose continuum limit is a
(bulk) LCFT, the underlying structures are also very difficult to get:
symmetries are smaller, and the lattice algebras have much more
complicated representation theory. Nevertheless, it looks possible
to generalize the approach in~\cite{ReadSaleur07-1,ReadSaleur07-2}
thanks to recent results about  the centralizer $\centJTL$ of the Jones--Temperley--Lieb (JTL)
 algebra $\JTL{N}$ acting on periodic super-symmetric spin chains  with $N$ tensorands/sites. This
paper present our first results in this direction for the $\gl(1|1)$
case. It is based on two technical companion papers.  In the first one~\cite{GRS1},
we focussed on the symmetries of the spin chain, that is, the
centralizer, in the alternating product of the $\gl(1|1)$ fundamental
representation and its dual, of the JTL algebra. We proved that this
centralizer is only a subalgebra of $\LQG$ at $\q=i$ that we dubbed
$\LQGodd$. We then analyzed the continuum limit $N\to\infty$ of the JTL algebra:
using general arguments about the regularization of the stress-energy tensor, we identified families of JTL elements going over to
the Virasoro generators $L_n$ and $\bar{L}_n$. We also discussed the well
known $\SU(2)$ symmetry of symplectic fermions from the lattice point
of view, and showed that this symmetry, albeit present in the
continuum limit, does not have a simple, useful analog on the lattice.
In our second paper~\cite{GRS2}, we analyzed the decomposition of the spin chain over
the JTL algebra $\JTL{N}$, and obtained the full decomposition
as a bimodule over $\LQGodd$ and $\JTL{N}$.

Equipped with these results, we can now explore to what extent the remarkable properties observed in the open case \cite{ReadSaleur07-1} carry over to the bulk case. Our conclusion is that the algebraic properties of the finite periodic spin chain and the bulk LCFT are again very similar. The crucial new ingredient is that the JTL algebra goes over, in the continuum limit,
to a bigger operator algebra than $\VirN$, the product of the left and right Virasoro algebras. This algebra -- which we call \textit{interchiral}, mixes the left and right moving sectors, and is generated, in the symplectic fermions case, by the additional field
\begin{equation}\label{eq:interch-f}
S(z,\bar{z})\equiv S_{\alpha\beta}\psi^\alpha(z)\bar{\psi}^\beta(\bar{z}), \qquad S_{12} = S_{21}=1, \quad S_{11} = S_{22}=0,
\end{equation}
 with conformal weights $(1,1)$. More formally, we construct an inductive system of $\JTL{N}$ algebras,
together with their spin-chain representations, and identify the inductive limit with an infinite-dimensional
operator algebra $\interchalg$ generated by the modes of the field $S(z,\bar{z})$. Most of the present work is devoted to explicitly identifying this interchiral algebra $\interchalg$, studying its properties, and  using it to provide a new analysis of   the symplectic fermions LCFT. We believe that the concept of interchiral algebra will prove fundamental in the analysis of more complicated cases -- in particular those at central charge $c=0$ -- as will be discussed in forthcoming work.

\medskip

The paper is organized as follows. We start in section \ref{sec:remind} by a reminder of the main algebraic features, both on the lattice and in the scaling limit, for the open $\gl(1|1)$ spin chain (and its $\gl(n|n)$ generalizations). In section \ref{sec:remindclosed} we similarly remind the reader of the main algebraic properties of the closed $\gl(1|1)$ spin chain. The latter involves now a `periodic' version of the Temperley--Lieb algebra which we call, following~\cite{ReadSaleur07-1}, the Jones--Temperley--Lieb algebra $\JTL{N}$, and the symmetry algebra $\LQGodd$ which is, up to some trivial elements, its centralizer $\centJTL$. The structure of the corresponding bimodule is recalled in Fig.~\ref{JTL-Uqodd-bimod}. We also mention briefly the antiperiodic spin chain (where the $\gl(1|1)$ symmetry is broken), where one now deals with a twisted version $\JTL{N}^{tw}$ of the Jones--Temperley--Lieb algebra, while the centralizer is just $U s\ell(2)$ algebra. Section~\ref{Howe} is the first containing new results. We discuss there how, remarkably, the image of the $\JTL{N}$ in the $\gl(1|1)$ spin chains is in fact isomorphic to (the image of) the enveloping algebra of a certain non-semisimple Lie algebra containing $\ssp_{N-2}$ as its maximal semi-simple subalgebra. The semi-simple part of $\JTL{N}$ (that is, after quotienting by the radical) is represented by $U \ssp_{N-2}$, providing a beautiful example of a classical Howe duality~\cite{H3}, for which we have a non semi-simple version in the full $\JTL{N}$ case. The main goal of the rest of this paper is to understand
 the scaling limit of $\JTL{N}$ defined as a particular inductive limit $N\to\infty$ of algebras. To do this, we start in section~\ref{sec:scal-lim} by discussing the scaling limit of the $\gl(1|1)$ spin chain, in particular the inductive limit of $\JTL{N}$ as algebras of \textit{bilinears in fermion modes} -- subsections~\ref{sec:scaling},   \ref{commrels} and, of  the centralizer and the bimodule structure in~\ref{sec:bimod-scal-lim}. In subsection~\ref{sec:interinfin-sm}, we describe the Virasoro algebra content of the inductive limits of simple $\JTL{N}$-modules that appear in the periodic $\gl(1|1)$ model.  While on the finite lattice with $N$ sites the simples are just fundamental representations of $\ssp_{N-2}$, in the limit simple modules over the scaling limit of $\JTL{N}$ are identified with appropriate simple $\spinf$-modules. In~\ref{sec:scal-lim-tw}, we also describe the scaling limit of the anti-periodic $\gl(1|1)$ spin chains.

%We then turn to section  \ref{sec:interchiral} -- conceptually the most important of this paper -- where we identify the scaling limit of the $\JTL{N}$ algebras as  the {\sl interchiral algebra}. This identification requires making contact with several physics concepts, and relies on some conjectures which we are not proving in the present paper. Accordingly, section  \ref{sec:interchiral} is less rigorous, and  can  be considered   as the ``physics part'' of this paper.
We then turn to section  \ref{sec:interchiral} where we identify the scaling limit of the $\JTL{N}$ algebras as  the {\sl interchiral algebra}. This identification requires making contact with several physics concepts and introduction of completions of the inductive-limit algebras in section~\ref{sec:interch-interinfin}.
While the sections~\ref{Howe} and~\ref{sec:scal-lim} are mathematically the most important of this paper -- they contain our main theorems equipped with proofs --
the section  \ref{sec:interchiral} is conceptually the most important from physics point of view and relies on results provided with only ideas of a proof or conjectures which we are not proving in the present paper.
Accordingly, section  \ref{sec:interchiral} is less rigorous, and  can  be considered   as the ``physics part'' of this paper.

The idea of the interchiral algebra is that, for general models,  the scaling limit of $\JTL{N}$ will contain, in addition to the chiral and anti-chiral Virasoro algebras, the modes of non chiral fields such as the degenerate conformal field $\Phi_{2,1}\times\bar{\Phi}_{2,1}$. In the particular case of symplectic fermions, this field becomes  the \textit{interchiral field} $\enrg(z,\bar{z})$ from~\eqref{eq:interch-f}, and the scaling limit of $\JTL{N}$, which we discuss in section  \ref{sec:scal-lim}  from the point of view of  \textit{ bilinears in fermion modes}, can also be  identified using  bilinears in fermionic fields, which are discussed in subsection \ref{sec:interch-modes}. The interchiral algebra proper is introduced in subsection \ref{sec:interdef} where it is denoted by $\interchalg$. Subsection \ref{sec:interchalg-tw} extends the discussion to the antiperiodic model or the so-called ``twisted sector''.  In section \ref{sec:simpl-JTL-Vir} we discuss modules over the interchiral algebra in the LCFT and their relation with  $\JTL{N}$ modules on the lattice. Of particular importance is our discussion of the vacuum module over the interchiral algebra in subsections~\ref{sec:interch-simple} and~\ref{sec:interch-simple-2}. In subsection \ref{sec:indecompmod} we discuss indecomposable modules over $\interchalg$ and their relation with the  symplectic  fermion theory. This involves an analysis of the  space  of states of this theory as a module over the left-right Virasoro algebra $\VirN(2)$ (which we believe is new) in subsubsection \ref{sec:non-chiral-sympl-fer}.  We then show that this analysis agrees with the scaling limit of the bimodule over $\JTL{N}$ and $\centJTL$. A few conclusions -- and pointers to subsequent developments -- are given in the conclusion. Finally, several technical aspects are addressed in five appendices. In particular, our inductive limits constructions are in App.~\AppLim.

\subsection{Notations}
To help the reader navigate through this long paper, we provide a partial list of notations (consistent with all other papers in the series)

\begin{itemize}

\item[\mbox{}]$\TL{N}$ --- the (ordinary) Temperley--Lieb algebra,

\item[\mbox{}]$\ATL{N}$ --- the periodic Temperley--Lieb algebra with the translation $u$, or the algebra of affine \\\mbox{}\qquad\quad diagrams,

\item[\mbox{}]$\JTL{N}$ --- the Jones--Temperley--Lieb algebra,

\item[\mbox{}]$\centJTL$ --- the  centralizer of $\JTL{N}$,

\item[\mbox{}]$\repgl$ --- the spin-chain representation of $\JTL{N}$,

\item[\mbox{}]$\repQG$ --- the spin-chain representation of the quantum group $\LQG$,

\item[\mbox{}]$\e$, $\f$ --- the Lusztig's divided powers in $\LQG$,

\item[\mbox{}]$\Egl_{i,j}$ --- the elementary matrix with exactly one
nonzero entry, which is $1$ in the $(i,j)$ position,

 \item[\mbox{}]$\XX_{1,n}$ ---  the simple $\LQGi$-modules,

 \item[\mbox{}]$\PP_{1,n}$ ---  the projective $\LQGi$-modules,

 \item[\mbox{}]$\Xodd{n}$ --- the simple $\LQGoddi$- and $\centJTL$-modules,

 \item[\mbox{}]$\PPodd{n}$ ---  the indecomposable summands in spin-chain decomposition over the centralizer $\centJTL$,

\item[\mbox{}]$\StTL{j}$ --- the standard modules over $\TL{N}$,

\item[\mbox{}]$\PrTL{j}$ --- the projective modules over $\TL{N}$,

\item[\mbox{}]$\AIrrTL{j}{(-1)^{j+1}}$ --- the simple modules over $\JTL{N}$
  for which we also use the notation $(\TLX_{j})$,

\item[\mbox{}]$\AStTL{j}{(-1)^{j+1}}$ --- the standard modules over $\JTL{N}$,

\item[\mbox{}]$\APrTL{j}$ --- the indecomposable summands in spin-chain decomposition over $\JTL{N}$,

\item[\mbox{}]$\interfin{N}$ --- a Lie algebra introduced in Sec.~\ref{S_N-def},

\item[\mbox{}]$\interfin{N}'=\interfin{N}\oplus\oC\one$ --- a central extension of $\interfin{N}$,

\item[\mbox{}]$\glinf$ ---  the Lie algebra  of infinite matrices with finite number of non-zero elements, see Sec.~\ref{sec:spinf},

\item[\mbox{}]$\glinfc$ --- a central extension of the Lie algebra $\glinf$,

\item[\mbox{}]$\spinf$ --- the symplectic Lie algebra of infinite matrices,

\item[\mbox{}]$\interinfin$ --- a Lie algebra of infinite matrices -- the scaling limit of $\interfin{N}'$ -- introduced in Sec.~\ref{sec:interinfin},

\item[\mbox{}]$\interch$ --- a Lie algebra of local operators introduced in Sec.~\ref{sec:comm-rel-interch},

\item[\mbox{}]$\interchalg$ --- the interchiral algebra, see Sec.~\ref{sec:interdef},

\item[\mbox{}]$\Vir(2)$ --- the left Virasoro algebra with $c=-2$,

\item[\mbox{}]$\VirN(2)$ --- the product of the left and right Virasoro algebras,

\item[\mbox{}]$\VX_{j,1}$ --- the simple Virasoro modules,

\item[\mbox{}]$\VP_{n,1}$ --- the staggered Virasoro modules,

\item[\mbox{}] $\interM{j}$ --- a module over $U\interinfin$ which is obtained in the scaling limit of the JTL  modules $\AIrrTL{j}{(-1)^{j+1}}$,

\item[\mbox{}]  $\interP{j}$ --- a module over $U\interinfin$ which is obtained in the scaling limit of the JTL  modules $\APrTL{j}$,

\item[\mbox{}]$\bimnbos$ --- the bosonic  space of states in symplectic fermion theory,

\item[\mbox{}]$\bimnfer$ --- the fermionic space of states in symplectic fermion theory.

\end{itemize}

All the algebras in this paper are associative and defined over the field $\oC$ of complex numbers.

\section{A reminder of the open case}\label{sec:remind}

 \subsection{The open $\gl(1|1)$ super-spin chain}\label{sec:super-spin-ch-def}

The open  $\gl(1|1)$ super-spin chain~\cite{ReadSaleur07-1,GRS1} is a
 tensor product
 representation $\chVv=\tensor_{j=1}^{N}\oC^2$ of the Temperley--Lieb
 (TL) algebra of zero fugacity parameter $m$. We recall that the
 ordinary TL algebra denoted by $\TL{N}(m)$ is generated by $e_j$,
 with $1\leq j\leq N-1$, and has the defining relations
\begin{eqnarray}\label{TL-rel1}
e_j^2&=&me_j,\nonumber\\
e_je_{j\pm 1}e_j&=&e_j,\label{TL}\\
e_je_k&=&e_ke_j\qquad(j\neq k,~k\pm 1),\nonumber
\end{eqnarray}
where  $m$ is a real parameter.

 The representation space $\chVv$ consists of
 $N=2L$ tensorands
labelled $j=1,\ldots,2L$ with the fundamental representation of $\gl(1|1)$ on even sites and its dual on odd sites.
The representation of each TL generator $e_j$ is given by projectors on the $\gl(1|1)$-invariant in the
product of two neighbour tensorands
\begin{equation}\label{rep-TL-1}
e_j^{\gl}= (f_j+f_{j+1})(f_j^\dagger+f_{j+1}^\dagger),\qquad 1\leq
j\leq 2L-1,
\end{equation}
where we use a free fermion representation based on operators $f_j$
and $f_j^\dagger$ acting non-trivially only on $j$th tensorand and obeying
\begin{eqnarray}\label{f_j-rel}
\{f_j,f_{j'}\}=0,\quad\{f^{\dagger}_j,f^{\dagger}_{j'}\}=0,\quad\{f_j,f_{j'}^\dagger\}=(-1)^{j}\delta_{jj'},
\end{eqnarray}
where the minus sign  for an odd $j$ is due to presence of the dual representations of $\gl(1|1)$.
The generators $e_j^{\gl}$  provide then a
representation of $\TL{2L}(m=0)$ which
is known to be faithful.

 The representation space $\Hilb_{2L}$ is equipped with an inner
 product $\inn{\cdot}{\cdot}$ such that $\inn{f_j
 x}{y}=\inn{x}{f_j^{\dagger}y}$ for any $x,y\in\Hilb_{2L}$.  We stress
 that the inner product is indefinite because of the sign factors in
 the relations~\eqref{f_j-rel}. The Hamiltonian operator
 \begin{equation*}
 H_{\mathrm{op.}}=-\sum_{j=1}^{2L-1}e^{\gl}_j,
 \end{equation*}
with the `hamiltonian densities' $e^{\gl}_j$ defined in~\eqref{rep-TL-1}, is self-adjoint $H_{\mathrm{op.}}=H_{\mathrm{op.}}^{\dagger}$ with respect to this inner product (actually, each $e_j^{\gl}$ is a self-adjoint operator).
Its eigenvalues are real and the eigenvectors can easily be  computed. Because of the indefinite inner product, the self-adjoint Hamiltonian can have non-trivial Jordan cells and here it is indeed the case -- the Jordan cells are of rank two~\cite{ReadSaleur07-2}.

\medskip
The open $\gl(1|1)$ spin-chain exhibits a large symmetry algebra
dubbed ${\cal A}_{1|1}$ in~\cite{ReadSaleur07-1}. This algebra is the
centralizer $\centTL$ of $\TL{N}(0)$ and is
generated by the identity and the five generators
\begin{gather}
F_{(1)}=\sum_{1\leq j\leq N} f_j,\qquad
F^\dagger_{(1)}=\sum_{1\leq j\leq N} f_j^\dagger,\label{F1-def}\\
F_{(2)}=\sum_{1\leq j<j'\leq N}f_jf_{j'},\qquad
F_{(2)}^\dagger=\sum_{1\leq j<j'\leq N} f_{j'}^\dagger f_j^\dagger,\qquad
%% ${\N}=\sum_j (-1)^jf_j^\dagger f_j-L$,
{\N}=\sum_{1\leq j\leq N} (-1)^jf_j^\dagger f_j.\label{F2-def}
\end{gather}
We note that the formulas give, after trivial redefinitions, a
representation of the (Lusztig) quantum group $\LQG$ at $\q=i$. The fermionic
generators above, those with the subscript `$(1)$', are from the nilpotent part and the
bosonic ones form the $s\ell(2)$ subalgebra in $\LQG$. It will be
convenient in what follows to introduce slightly modified generators\footnote{We
dispense with the more correct notation $\repQG$ used in~\cite{GRS1}.} of $\LQG$
\begin{align}
\h&=\half\sum_{j=1}^{N} (-1)^jf_j^\dagger f_j-\ffrac{L}{2},\;
&\e &= \q^{-1}\sum_{1\leq j_1<j_2\leq N}f_{j_1}^\dagger f_{j_2}^\dagger,\;
&\f &= \q\sum_{1\leq j_1<j_2\leq N}f_{j_1}f_{j_2},\label{QG-ferm-2}\\
\K&=(-1)^{2\h},\;
&\E& = \sum_{j=1}^N f_j^\dagger\,\K,\;
&\F &= \q^{-1}\sum_{j=1}^N f_j\label{QG-ferm-1}
\end{align}
obeying in particular
\begin{align}
 \K\E\K^{-1}&=-\E,&
  \K\F\K^{-1}&=-\F,&
  \E^{2}=\F^{2}&=0,&\\
[\h,\e]&=\e,& [\h,\f]&=-\f,& [\e,\f]&=2\h,&
\end{align}
see more details in~\cite{GRS1}.
The $s\ell(2)$ Cartan generator $\h$ is related to the total-spin operator $S^z$  in the XX language, or the Fermion number $\N$ as
\begin{equation}
2\h=S^z=\N-L.
\end{equation}
For our alternating chain, the values of $S^z$ are integer since the
number of sites is even, and thus $\h$ takes integer or half integer
values.

\subsection{Bimodule on a finite lattice}

The decomposition of the open spin-chain as a bimodule over the pair
$(\TL{N},{\cal A}_{1|1})$ of mutual centralizers
is shown on Fig.~\ref{openbimodule-fin} for  the $N=8$ case and borrowed
from~\cite{ReadSaleur07-2}.
 \begin{figure}\centering
% \begin{center}
 \leavevmode
 \epsfysize=80mm{\epsffile{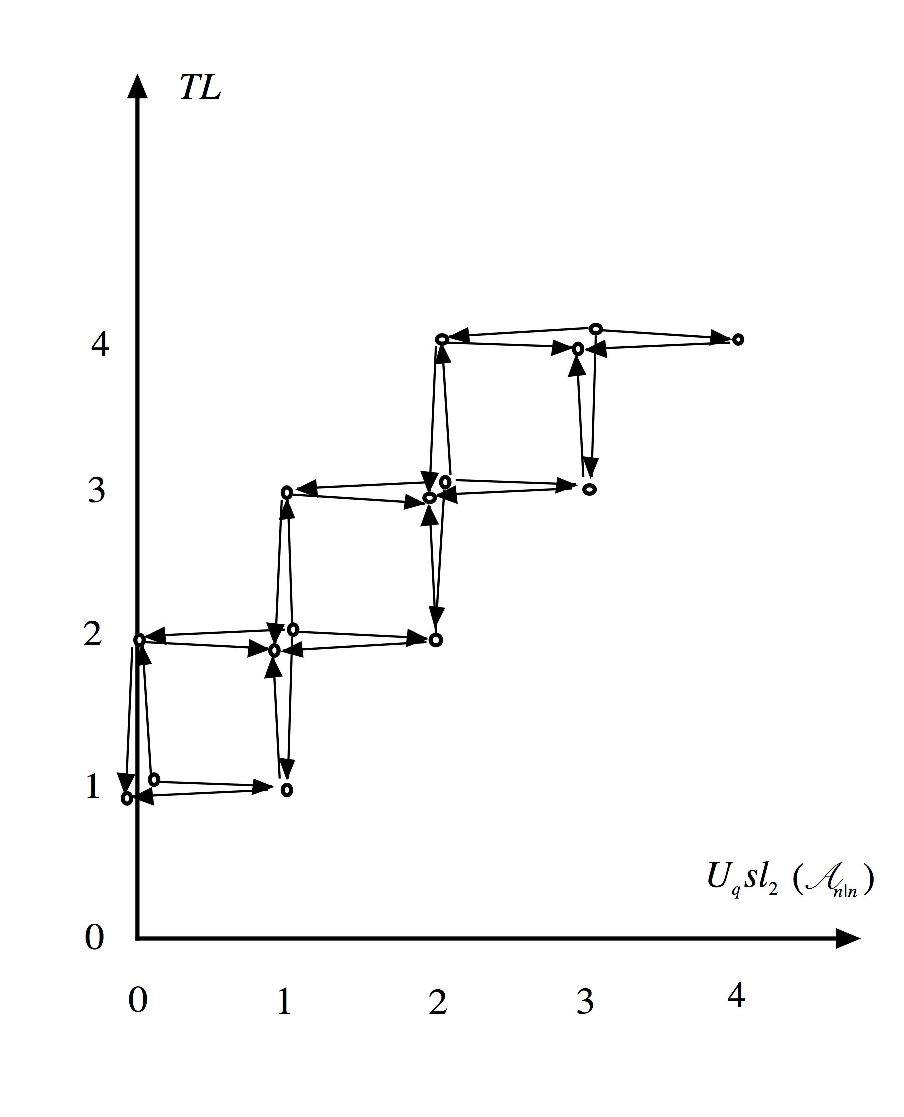}}
 % \end{center}
  %\protect
  \caption{The structure of the open $\gl(1|1)$ spin-chain (i.e., $n=1$) for $N=8$
  sites, as a representation of the product $\TL{N}\boxtimes\LQGi$. Some nodes
  occur twice and have been separated slightly for
  clarity.}\label{openbimodule-fin}
 \end{figure}
The label $j$ in the horizontal direction  corresponds  to
the double value of the $s\ell(2)$ spin $n$ -- the highest weight of
an $s\ell(2)$-module.  Each node with a Cartesian coordinate $(j,k)$ in the bimodule
diagram corresponds to a simple subquotient over the tensor product
$\TL{N}\boxtimes\LQGi$ of the commuting algebras and arrows show the action of both the algebras --
the Temperley--Lieb $\TL{N}$ acts in the vertical direction
(preserving the coordinate $j$), while $\LQGi$ acts in the horizontal
way.  Indecomposable projective $\TL{N}$-modules $\PrTL{k}$ can be
recovered by ignoring all the horizontal arrows, while tilting
$\LQGi$-modules $\PP_{1,j}$ (these are also projective~\cite{BFGT}) are obtained
by ignoring all the vertical arrows. For more details, see~\cite{ReadSaleur07-2,GRS2}.

\subsection{Bimodule in the continuum limit}\label{subsec:bimodopen}

The crucial observation of \cite{ReadSaleur07-2} is that an identical
bimodule structure, extending to arbitrarily high values of the $s\ell(2)$ spin $n=j/2$, is
present in the continuum limit. This is illustrated on
Fig.~\ref{openbimodule-cont}, where the same comments as in the finite
chain apply exactly, with the replacement of $\TL{N}$ by the Virasoro
algebra at central charge $c=-2$ denoted by $\Vir(2)$.

  \begin{figure}\centering
% \begin{center}
 \leavevmode
 \epsfysize=80mm{\epsffile{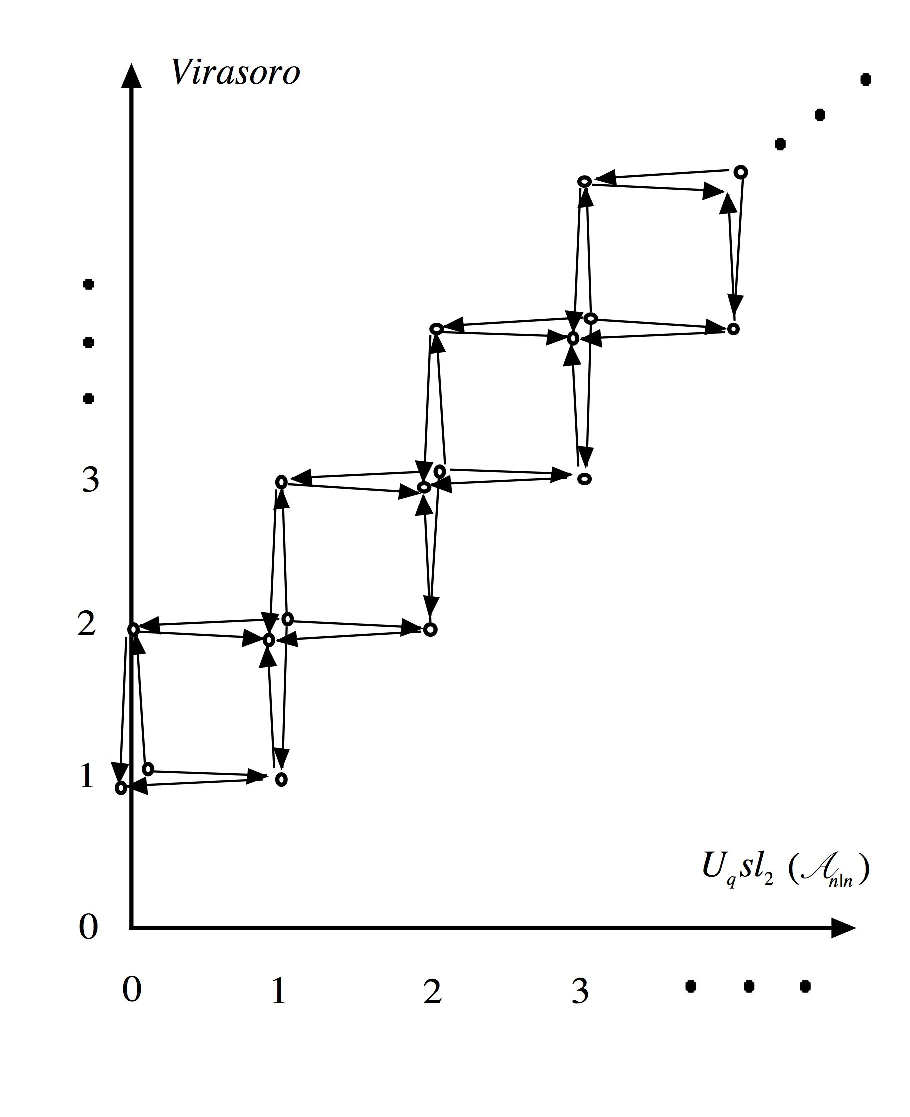}}
 % \end{center}
  %\protect
  \caption{Space of states of the continuum theory  as a representation of $\Vir(2)\boxtimes\LQGi$. The vertical labels $n=1,2,\dots$  are for the Virasoro $\Vir(2)$ irreducible representations of the conformal weights $\Delta_{n,1}=n(n-1)/2$.}\label{openbimodule-cont}
 \end{figure}

It is useful here to comment Fig.~\ref{openbimodule-cont} further.  In the boundary symplectic fermion theory,
 the space of states decomposes as a direct sum of bosonic $\WP^+$ and
 fermionic $\WP^-$ sectors mixed by the fermionic part of the
 $\gl(1|1)$ algebra. We remind that the  symplectic fermions form an
 $s\ell(2)$ doublet~\cite{Kausch} and the fermionic part of the
 $\gl(1|1)$ is formed by the two zero modes of the fermions.
 Each of the sectors is further decomposed  as a direct sum
 of modules over the product $\SU(2)\boxtimes\Vir(2)$ of
 the two  commuting algebras, $\SU(2)$ and Virasoro, as (see also~\cite{[FFHST]})
 \begin{equation}\label{W-proj-ch-decomp}
\WP^{+} = \bigoplus_{n\in\oN_0} \SUrep{n}\boxtimes\VP_{2n+1,1},\qquad
\WP^{-} = \bigoplus_{n\in\oN-\half} \SUrep{n}\boxtimes\VP_{2n+1,1},
%\WP^{-} = \bigoplus_{n\geq 1} \SUrep{n-\half}\boxtimes\VP_{2n,1},
\end{equation}
where $\SUrep{n}$ denotes a $(2n+1)$-dimensional $\SU(2)$-module of
 the (iso)spin $n$, and Virasoro modules $\VP_{n,1}$ are the so-called
 staggered modules introduced in~\cite{Rohsiepe}. Note that we use the
 term `isospin' for $\SU(2)$-modules on the CFT side in order to
 distinguish them from ones on the lattice, where the highest weight
 of an $\SU(2)$-module we call in general by `spin'. In the open case,
 the $U \SU(2)$ symmetry of the symplectic fermion theory~\cite{Kausch}
 and the $U \SU(2)$ part of the full quantum group $\LQG$ are
 actually coincident but this will not be true in the non-chiral
 case. Recall again that the value of $S^z$ for the highest weight at
 (isospin) $n$ is $j=2n$, the horizontal label used in
 Fig.~\ref{openbimodule-cont}.

The staggered
 $\Vir(2)$-modules  are indecomposable and have the following subquotient structure
\begin{equation}\label{sf-chiral-stagg-pic}
   \xymatrix@C=10pt@R=15pt@M=2pt@W=2pt{%
    &&{\VX_{2n+1,1}}\ar[dl]\ar[dr]&\\
    \VP_{2n+1,1} \quad = &\VX_{2n,1}\ar[dr]&&\VX_{2n+2,1}\ar[dl]\\
    &&{\VX_{2n+1,1}}&
 }
\end{equation}
%% \newpage
where $\VX_{2n+1,1}$ is the irreducible $\Vir(2)$-module with the
conformal dimension $\Delta_{2n+1,1}=n(2n+1)$, with a non-negative
integer or half-integer $n$. We note that a
south-east arrow represents an action of negative Virasoro modes while
a south-west arrow represent postive modes action. In the
diagram~\eqref{sf-chiral-stagg-pic}, the staggered module is a
`glueing'/extension of two indecomposable Kac modules which are
highest-weight modules. The one in the top composed of
$\VX_{2n+1,1}$ and $\VX_{2n+2,1}$ is the quotient of the Verma module
with the weight $\Delta_{2n+1,1}$ by the singular vector at the level $2n+1$,
and the second Kac module in the bottom composed of
$\VX_{2n,1}$ and $\VX_{2n+1,1}$ is a similar quotient (at the level $2n$) of the Verma module with
the weight $\Delta_{2n,1}$.

The different terms in $\WP^+$ and $\WP^-$ from~\eqref{W-proj-ch-decomp} are in turn connected by
the action of $\gl(1|1)$, resulting into the diagram on
Fig.~\ref{openbimodule-cont}, where each
node is a product of an irreducible $\Vir(2)$-module and an irreducible $\SU(2)$-module  of dimension $j+1$. We
see that the symmetry algebra in this CFT, the centralizer of $\Vir(2)$,
is the semi-direct product of the fermionic part of $\gl(1|1)$ and
 the enveloping algebra $U s\ell(2)$. This centralizer turns out to
 coincide with (a representation of) the quantum group
 $\LQG$ at $\q=i$, which we also denote by ${\cal
 A}_{1|1}$, as on a finite spin-chain.

\subsection{A note on other spin-chains}\label{categorical}
 Note that while we discuss only $\gl(1|1)$ spin-chains here, a diagram identical
 to Fig.~\ref{openbimodule-cont}
 describes the continuum limit of alternating  $\gl(n|n)$ spin-chains.
 We recall that these spin-chains  are
 defined in a similar fashion by introducing projections on
 $\gl(n|n)$-invariants~\cite{ReadSaleur01}. They give also  faithful
 representations of $\TL{N}(0)$ where the
 horizontal action in the bimodule diagram is  due to the symmetry algebra being now
 ${\cal A}_{n|n}$, an algebra which is Morita equivalent to ${\cal
 A}_{1|1}$~\cite{ReadSaleur07-1}.

 A similar analysis for other cases -- representations of $\TL{N}(m)$
 for other values of $m$ -- such as the alternating $\gl(2|1)$ spin-chain
 or the XXZ spin-chains with $\LQG$ symmetry at different roots of unity show that the lattice
 bimodule structure can be used along the same lines to infer
 properties of all known boundary logarithmic CFTs. In particular, the
 staggered Virasoro modules for different central charges abstractly discussed in \cite{GK1} or \cite{MatRid}
 can quickly be recovered in this fashion, at least their subquotient
 structure can be deduced from the bimodules. This opens in particular
 the way to measuring~\cite{DubJacSal2,VasJacSal} indecomposability parameters (also called
 $\beta$ invariants~\cite{KitRid}) characterising Virasoro-module structure completely, or to computing fusion
 rules using an induction procedure~\cite{ReadSaleur07-1,ReadSaleur07-2, [GV]}.

These similarities in subquotient structures and fusion rules indicate
an equivalence between corresponding tensor categories.  Braided
tensor structure or fusion data on the lattice part is given by an
induction bi-functor associated with two lattices of arbitrary sizes
joined to each other. Rigorously, this bi-functor gives a braided tensor
structure only in the infinite size limit, where a construction
of inductive limits of the finite categories of modules over the
Temperley--Lieb algebras is required. We note that a  systematic way to construct
these limits is based on the spin-chain bimodules.  Then, the inductive
limits should be compared with braided tensor categories of modules over the
Virasoro algebras. We believe that a direct construction of
centralizers of the Virasoro algeras in the LCFTs would give
the desired equivalence.

This equivalence can be established by direct calculation
in the case of symplectic fermions, where the continuum limit can be
explicitly carried out (for steps in this direction, see~\cite{GRS1}).
For other cases however, such a calculation seems
completely out of reach since models are only Bethe ansatz solvable,
and precious little is known about their continuum limits, apart from
critical exponents and some indecomposable modules. The equivalence thus
remains a very reasonable conjecture, which will take considerably
more work -- in particular, in defining the continuum or scaling limit --
before being rigorously established. It can nevertheless be easily
understood if one recognizes the similar role played by the quantum
group $\LQG$ both on the lattice and in the continuum~\cite{BFGT,BGT}.

Meanwhile, our strategy consists in postulating that similar
equivalences between the lattice and continuum models are present
in the bulk, or non-chiral, case as well. We shall then study the case
of closed $\gl(1|1)$ spin-chains
in detail, to draw lessons that we will then apply to more
complicated -- and physically interesting -- models.

\section{A reminder of the algebraic aspects in the closed case}\label{sec:remindclosed}

\subsection{The Jones--Temperley--Lieb algebra $\JTL{N}(0)$.}
\label{subsec:TL-alg-def}

A closed periodic spin-chain can be obtained by adding a last generator
$e_{N}^{\gl}$ defined using formula~\eqref{rep-TL-1} but this time
identifying the labels modulo $N$, in particular $f_{N+1}\equiv
f_1$, $f^\dagger_{N+1}\equiv f_1^\dagger$:
\begin{equation}\label{rep-JTL-2}
e_{N}^{\gl}=(f_{N}+f_{1})(f_{N}^\dagger+f_{1}^\dagger).
\end{equation}
The generators $e_j^{\gl}$, with $1\leq j\leq N$, satisfy the
same relations~\eqref{TL-rel1} but now the indices are interpreted
modulo $N$.

The relations~\eqref{TL-rel1} (defined modulo $N$) define an infinite
dimensional associative algebra denoted in our companion papers by
$\PTL{N}$~\cite{GL,GL1}. The algebra $\PTL{N}$ is known in the physics
literature as the periodic Temperley--Lieb
algebra~\cite{MartinSaleur,MartinSaleur1}. In fact, the formulas for
the generators in~\eqref{rep-TL-1} and~\eqref{rep-JTL-2} lead to many more relations; as a
result, the periodic $\gl(1|1)$ spin-chain provides  a non-faithful
representation of
$\PTL{N}$. Moreover, it  provides  a non-faithful
representation of a finite-dimensional quotient --  called \textit{the Jones--Temperley--Lieb}
algebra  $\JTL{N}(m=0)$ -- of a slightly bigger algebra
  enlarging $\PTL{N}$ by a translation operator $u^2$, see \textit{e.g.}~\cite[Sec.~2.1]{GRS2}. We denote this representation in what follows by
$\repgl:\JTL{N}(0)\to\Endo_{\oC}(\chVv)$.

To define the algebra $\JTL{2L}(m)$ (the following description is taken
almost verbatim
from \cite{ReadSaleur07-1}), we first introduce an operator $u^2$
(with inverse $u^{-2}$) which translates any state to the right by 2
sites (so as to be consistent with the distinction of two types of
sites carrying dual representations). We thus have $u^2e_ju^{-2}=
e_{j+2}$ and we also impose the relation $u^{2L} = 1$.  We then define
abstractly an algebra of diagrams as is customary for the ordinary
Temperley--Lieb (TL) algebra, but this time on an annulus (or finite
cylinder), in which a general basis element corresponds to a diagram
of $2L$ sites on the inner, and $2L$ on the outer boundary; the sites
are connected in pairs, but only configurations that can be
represented using lines inside the annulus that do not cross are
allowed.
Multiplication is defined in a natural way on these
diagrams, by joining an inner to an outer annulus, and removing the
interior sites. We emphasize that whenever a closed loop is produced
when diagrams are multiplied together, this loop must be replaced by a
numerical factor $m$ (as for the TL algebra), even for loops that wind
around the annulus, as well as for those that are homotopic to a
point. We also impose that non-isotopic (in the annulus) diagrams connecting  the same sites
 are identified. Then, the resulting algebra  is finite-dimensional
 and is generated by the elements $e_j$ and $u^2$, and
they obey the relations~\eqref{TL-rel1} defined modulo $N$, which
however are not a complete set of defining relations.  We note that the
numerical factor $m$ for winding loops is not a consequence of the
stated relations, but a separate assumption.

In what follows, we consider only the case $m=0$ and use the notation $\JTL{N}\equiv\JTL{N}(0)$.

\medskip

The Hamiltonian operator $H_{\mathrm{per.}}$ in the periodic case will
 be denoted simply by $H$
 and is given by
 \begin{equation}\label{hamil-def}
 H=-\sum_{j=1}^{2L}e^{\gl}_j.
 \end{equation}
This operator is also self-adjoint $H=H^{\dagger}$ with respect to the
inner product, as in the open case described above.  Its eigenvalues
are also real and the eigenvectors can easily be computed. As a
consequence of the indefinite inner product, the self-adjoint
Hamiltonian has non-trivial Jordan cells of rank-two~\cite{GRS1}, see
also Sec.~\ref{sec:scaling} below.

\subsection{The centralizer $\centJTL$ of  $\JTL{N}$}\label{subsec:cent}

In the closed case, while the $\gl(1|1)$ symmetry
generated by $F_{(1)}$, $F^\dagger_{(1)}$, and ${\N}$ remains, the
bosonic $s\ell(2)$ generators $F_{(2)}$ and $F^\dagger_{(2)}$ defined in~\eqref{F2-def} do not
commute with the action of $\JTL{N}$~\cite{GRS1}. What remains of  ${\cal A}_{1|1}$  or $\LQG$ is
only the fermionic  generators
\begin{equation}\label{Fodd-sym-def}
\begin{split}
F_{(2n+1)}&=\!\!\!\!\sum_{\substack{1\leq
  j_1<j_2<\,\dots\\\dots\,<j_{2n+1}\leq 2L}}f_{j_1}f_{j_2}\dots f_{j_{2n+1}},\\
F^\dagger_{(2n+1)}&=\!\!\!\!\sum_{\substack{1\leq
  j_1<j_2<\,\dots\\\dots\,<j_{2n+1}\leq
  2L}}f^\dagger_{j_1}f^\dagger_{j_2}\dots f^\dagger_{j_{2n+1}},
\end{split}
\qquad n\geq0,
\end{equation}
which, together with the bosonic operators  $\N$ and $F_{(2L)}$ and $F^\dagger_{(2L)}$, generate the
centralizer $\centJTL$ of the representation of $\JTL{2L}$.

 For further reference, we recall another and more convenient for us description of the centralizer in terms of standard $\LQG$ generators. We introduce the generators
\begin{equation}\label{LQGodd-LQG-hom}
\EE m\equiv \e^m\E\,\ffrac{\K^2+\one}{2},\qquad \FF n \equiv
\f^n\F\,\ffrac{\K^2+\one}{2},\qquad\quad m,n\geq0,
\end{equation}
which are represented on the spin-chain following the
expressions~\eqref{QG-ferm-2} and~\eqref{QG-ferm-1}.
They generate a subalgebra in $\LQG$ which we denote as
$\LQGodd$, see~\cite[Dfn. 3.3.1]{GRS1}. This subalgebra has the  basis $\EE n \FF m \h^k\K^l$, with
$n,m,k\geq0$ and $0 \leq l \leq 3$.
The positive Borel
subalgebra is generated by $\h$ and $\EE n$ while the negative subalgebra -- by $\h$ and $\FF n$, for $n\geq0$.

The result~\cite[Thm. 3.3.3]{GRS1} is then that the
centralizer $\centJTL$ of the (representation $\repgl$ of) $\JTL{2L}$ on the alternating
periodic $\gl(1|1)$ spin-chain $\Hilb_{2L}$ is the algebra
generated by (the representation of) $\LQGodd$ and $\f^L$, $\e^L$. The correspondence with the fermionic expressions~\eqref{Fodd-sym-def} is $\FF n\sim F_{(2n+1)}$, $\EE n\sim F^\dagger_{(2n+1)}$, and $\N\sim \h$, \textit{etc}.
For more details, see~\cite[Sec.~3.3.4]{GRS1}.

The representation theory of $\LQGodd$ was studied in our second paper~\cite{GRS2} of this series. The
indecomposable $\LQGodd$ modules~$\PPodd{n}$, with integer $1\leq n\leq L$, which
appear in the decomposition of $\Hilb_{2L}$ as direct summands are restrictions to the subalgebra $\LQGodd$
of the well known projective covers $\PP_{1,n}$ over $\LQG$. They are all
indecomposable, with dimension $4n$, the same as the dimension of
$\PP_{1,n}$.  Their subquotient structure is given also
in~\cite[Sec.~3.3]{GRS2}.

\subsection{Bimodule over $\JTL{N}$ and $\centJTL$}\label{ind-chain-bimod-subsec}

Recall that the representation $\repgl$ of $\JTL{N}$ is non-faithful (faithfulness
aspects will be discussed in more detail in our next paper
\cite{GRS4}.) There is thus no evident direct way to get a
decomposition of the periodic $\gl(1|1)$ spin-chain like in the open (faithful) case. For
example, the general theory~\cite{GL0} of projective modules over a
cellular algebra (which includes $\TL{N}(m)$ and $\JTL{N}(m)$
algebras), which can be applied for a faithful representation~\cite{GRS4}, is not directly useful here. Instead, an indirect
strategy is needed, which was discussed in detail in~\cite[Sec~5.2]{GRS2}, and
which we do not recall in detail here, concentrating only on the
essential aspects.

 \begin{figure}\centering
\mbox{}\bigskip\\
\mbox{}\medskip
  \def\svgwidth{470pt}
    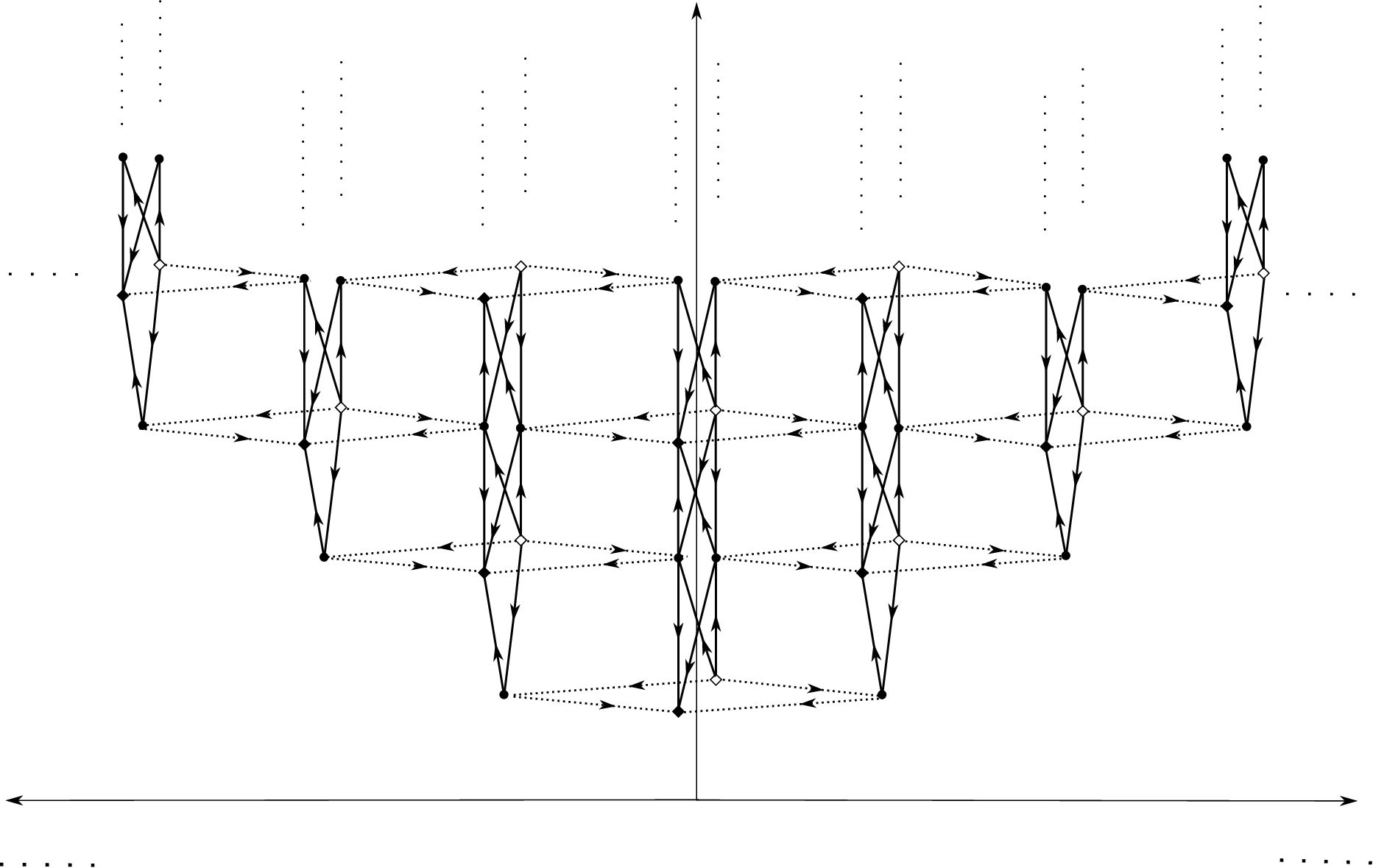
      \caption{Bimodule over the pair $(\JTL{N},\LQGodd)$ of commuting
      algebras.  The action of $\JTL{N}$ is depicted by vertical
      arrows while the action of $\LQGodd$ is shown by dotted
      horizontal lines. Each label $j$ in the horizontal axis corresponds to the sector for $S^z$ and the label runs from $-L$ on the left to $j=L$ on 	
      the right. Each vertical tower at a label $j$ is the diagram for $\APrTL{j}$.
      The first horizontal layer at the bottom contains four
      nodes $(\TLX_1)$ and dotted arrows mixing them compose the
      $\LQGodd$-module $\PPodd{1}$. The second layer contains eight
      nodes $(\TLX_2)$ and the dotted arrows depict the action in the
      indecomposable module $\PPodd{2}$, {\it etc.} We
      suppress long-range arrows representing action of the generators
      $\FF{>0}$ and $\EE{>0}$ in order to simplify diagrams. For
      example, the second layer of the bimodule contains in addition
      four long arrows going from the node $\diamond$ at $j=\mp1$ to the
      node $\bullet$ at $j=\pm2$, and from the node $\bullet$ at
      $j=\pm2$ to the node {\scriptsize$\vardiamond$} at $j=\mp1$. We also note that {\scriptsize$\vardiamond$}'s have only incoming arrows and their union is the socle (maximum semisimple submodule or the bottom level) of the bimodule; further, the union of $\bullet$'s is the socle of the quotient by the socle -- the middle level, while the $\diamond$'s have only outgoing arrows and their union is the top level.}
    \label{JTL-Uqodd-bimod}
    \end{figure}

We first give a diagram describing the subquotient structure of the
bimodule $\Hilb_{2L}$ over the pair $(\JTL{N},\centJTL)$.  The two
commuting actions are presented in Fig.~\ref{JTL-Uqodd-bimod} where we
show a direct sum of indecomposable spin-chain modules $\APrTL{j}$ over
$\JTL{N}$, and simple subquotients over $\JTL{N}$ will be denoted below by
$(\TLX_{|j|})$. The direct sum is depicted as a (horizontal) sequence of
diagrams for $\APrTL{j}$ from $j=-L$ on the left to $j=L$ on the
right. Each node in the diagram is a simple subquotient over the
product $\JTL{N}\boxtimes\LQGodd$. The action of $\JTL{N}$ is depicted
by vertical arrows while the action of $\LQGodd$ is shown by dotted
horizontal lines connecting different $\JTL{N}$-modules.

In the diagram on  Fig.~\ref{JTL-Uqodd-bimod}, the first horizontal
layer (at the bottom)  contains the space of ground states
and it consists of four
nodes, which are simple $\JTL{N}$-modules $(\TLX_1)$, and dotted
arrows mixing them. These nodes and arrows describe the indecomposable $\LQGodd$-module
$\PPodd{1}$. The second layer contains eight nodes corresponding to $(\TLX_2)$
and the dotted arrows contribute to the indecomposable module
$\PPodd{2}$, {\it etc}. We emphasize that we do not draw long-range arrows
representing action of the generators $\FF{>0}$ and $\EE{>0}$ in
modules $\PPodd{n>1}$ in order to simplify diagrams but the arrows can
be easily recovered using
the subquotient structure of $\PPodd{n}$ given in~\cite[Sec.~3.3]{GRS2} -- for example, the second layer of the
bimodule contains in addition four long arrows going from the node
$\diamond$ at $j=\mp1$ to the node $\bullet$ at $j=\pm2$, and from the
node $\bullet$ at $j=\pm2$ to the node {\scriptsize$\vardiamond$} at $j=\mp1$. With this
comment about arrows in mind, the reader can compare complexity of
this bimodule with the open-case bimodule in
Fig.~\ref{openbimodule-fin}.

As a $\JTL{N}$-module, the $\Hilb_{N}$ has a decomposition
\begin{equation}\label{decomp-JTL}
\Hilb_N|_{\rule{0pt}{7.5pt}%
\JTL{N}} =  \bigoplus_{j=-L+1}^{L-1} \APrTL{j}\boxtimes\Xodd{j} \oplus
\APrTL{L}\boxtimes\Xodd{L},
\end{equation}
where $N=2L$, $\Xodd{j}$ is the one-dimensional (with $S^z=j$) and  $\Xodd{L}$ is the two-dimensional simple
$\centJTL$-modules (the representation theory of the centralizer $\centJTL$ is described
 in~\cite[Sec.~3]{GRS2}), and
 $\APrTL{L}$ is the one-dimensional $\JTL{2L}$-module with the action of $u^2=(-1)^{L-1}\one$.
 For any sector with non-zero  $j=S^z$,
the subquotient structure for $\APrTL{j}$ is given in
Fig.~\ref{FF-JTL-mod}, on the right, while the tower for $\APrTL{0}$
is presented on the left.  All the towers are ended by the pair of
simple subquotients $(\TLX_L)$.  We note that the
two simple subquotients at each level of the ladders are
isomorphic. The Hamiltonian $H$ from~\eqref{hamil-def} acts by Jordan
blocks of rank $2$ on each pair of isomorphic simple subquotients
with one at the top (having only outgoing arrows) and the second
subquotient in the socle of
the module (having only ingoing arrows). The Jordan block structure is due to presence of zero
fermionic modes in the Hamiltonian as it is observed in~\cite{GRS1},
see also Sec.~\ref{sec:ham-spec}.
%\newpage
%\thispagestyle{empty}
 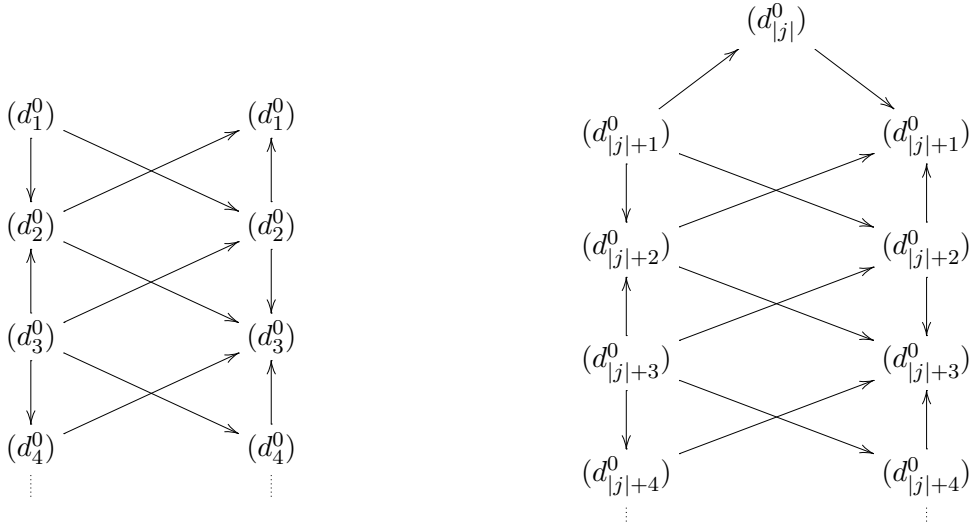
\begin{figure}\centering
\begin{equation*}
  \xymatrix@R=24pt@C=30pt{
    &{\mbox{}}&&\\
    {(\TLX_1)}\ar@[][d]\ar[drr]
    &&{(\TLX_1)}\\
    {(\TLX_2)}\ar@[][urr]\ar[drr]
    &&{(\TLX_2)}\ar[u]\ar@[][d]\\
    {(\TLX_3)}\ar[u]\ar@[][d]\ar@[][urr]\ar[drr]
    &&{(\TLX_3)}\\
    {(\TLX_4)}\ar@[][urr] \ar@{.}[]-<0pt,18pt>
    &&{(\TLX_4)}\ar[u]
    \ar@{.}[]-<0pt,18pt>}
\qquad\qquad\qquad
  \xymatrix@=22pt
  {{}&(\TLX_{|j|})\ar[dr]&&\\
    {(\TLX_{|j|+1})}\ar[ur]\ar@[][d]\ar[drr]
    &&{(\TLX_{|j|+1})}\\
    {(\TLX_{|j|+2})}\ar@[][urr]\ar[drr]
    &&{(\TLX_{|j|+2})}\ar[u]\ar@[][d]\\
    {(\TLX_{|j|+3})}\ar[u]\ar@[][d]\ar@[][urr]\ar[drr]
    &&{(\TLX_{|j|+3})}\\
    {(\TLX_{|j|+4})}\ar@[][urr] \ar@{.}[]-<0pt,18pt>
    &&{(\TLX_{|j|+4})}\ar[u]
    \ar@{.}[]-<0pt,18pt>}
\end{equation*}
      \caption{The two-strands structure of the spin-chain
      $\JTL{N}$-modules $\APrTL{j}$ for $j=0$ on the left and $j\ne0$
      on the right side. The towers are ended by the pair of
      $(\TLX_L)$.}
    \label{FF-JTL-mod}
    \end{figure}
% \newpage

 We note that the $\JTL{N}$-modules $\APrTL{j}$ in
Fig.~\ref{JTL-Uqodd-bimod} are drawn in opposite direction `from
bottom to top' comparing to diagrams in Fig.~\ref{FF-JTL-mod}, in
order to  the space of ground states to be in the bottom of the diagram.

Finally, we recall some information about the simple $\JTL{N}$
modules, which we denote in~\cite{GRS2} by $\AIrrTL{j}{(-1)^{j+1}}$ or more
conveniently by $(\TLX_{j})$ above. The number $d_j^0$ is their dimension,
obtained from the dimension of the spin-chain $\JTL{N}$-modules $\APrTL{j}$
given by
\begin{equation}
\dim \APrTL{j} \equiv \widehat{d}_j=\binom{2L}{L+j}
\end{equation}
which is just the dimension of the sector with $S^z=j$ in the related
  XX spin chain with $2L$ sites. We  then have from the subquotient
  structure for $\APrTL{j}$ in Fig.~\ref{FF-JTL-mod} that
\begin{equation*}
%\widehat{d}_{j,(-1)^{j+1}}^0=
\dim \AIrrTL{j}{(-1)^{j+1}} \equiv d_j^0=\sum_{j'\geq j} (-1)^{j'-j}d_{j'}\qquad
\text{with}\qquad d_j = \binom{2L}{L+j} -  \binom{2L}{L+j+1},
\end{equation*}
which can be written as
\begin{equation}\label{dim-irr-per}
%\widehat{d}_{j,(-1)^{j+1}}^0=
d_j^0=
\binom{2L-2}{L-j}-
\binom{2L-2}{L-j-2},
\end{equation}
where $0\leq j\leq L$. Note that $d_0^0=0$ and we thus have that all simple modules over $\JTL{N}$ that appear in this spin-chain are those $\AIrrTL{j}{(-1)^{j+1}}$ with $1\leq j\leq L$, see also more details in our second paper~\cite{GRS2}.
We note also the obvious identities
\begin{equation*}
\sum_{j=-L}^L \widehat{d}_j=\sum_{j=1}^L 4d_j^0=2^{2L}
\end{equation*}
which give the full dimension of the $\gl(1|1)$ spin-chain indeed.

It turns out that in this very degenerate case of $\JTL{N}$ representations the $\AIrrTL{j}{(-1)^{j+1}}$ modules are also simple for the subalgebra $\TL{N}$. Of course, the difference with the open case is in the structure of indecomposable modules -- the complexity of towers in Fig.~\ref{JTL-Uqodd-bimod} can be compared with the simpler `diamond'-type structure of TL modules in Fig.~\ref{openbimodule-fin}.

\subsection{The antiperiodic spin-chain}\label{subsec:antiper-mod}
We can also consider the alternating $\gl(1|1)$ spin-chain with
antiperiodic conditions for the fermions, obtained by setting
$f^{(\dagger)}_{2L+1}=-f^{(\dagger)}_1$. The generators $e_j$, for
$1\leq j\leq 2L-1$, have the same representation~\eqref{rep-TL-1}
while the last generator is then given by
\begin{equation}\label{rep-JTL-anti}
e_{2L}=(f_{2L}-f_{1})(f_{2L}^\dagger-f_{1}^\dagger),
\end{equation}
to be compared with~\eqref{rep-JTL-2}.
 This expression provides  a
representation of another quotient (different from $\JTL{N}$) of the  affine TL algebra:
in the diagram language, non-contractible loops now should be replaced
 by the weight~$2$ (the
 dimension of the fundamental or its dual, instead of the
 superdimension in the periodic case);
the relation $u^{N}=(-1)^j$ is also imposed in the sector
with $2j$ through-lines.
 We will call the corresponding algebra $\JTL{N}^{tw}$, see a precise
definition in our \second paper~\cite[Sec.~6.2]{GRS2}.

  We emphasize that this antiperiodic $\gl(1|1)$ spin chain does not
have $\gl(1|1)$ symmetry any longer. Instead, we have~\cite[Thm.3.4.1]{GRS1} that
the centralizer of  $\JTL{N}^{tw}$  is generated
 by the $s\ell(2)$ generators $\e$ and $\f$, or $F_{(2)}^\dagger$ and
 $F_{(2)}$ defined in~\eqref{F2-def}.

We then recall  the bimodule structure~\cite[Sec.~6.2]{GRS2} over the pair $(\JTL{N}^{tw},U s\ell(2))$:
\begin{equation}\label{antip-bimod}
\Hilb_N|_{\rule{0pt}{7.5pt}%
\JTL{N}^{tw}\boxtimes U s\ell(2)} = \bigoplus_{j=0}^{L}\AIrrTL{j}{(-1)^j} \boxtimes \XX_{1,j+1}
\end{equation}
where $\XX_{1,j}$ is the $j$-dimensional simple $U s\ell(2)$-module
 and $\AIrrTL{j}{(-1)^j}$ is a simple module over $\JTL{N}^{tw}$.
 The dimension of $\AIrrTL{j}{(-1)^j}$ is $2d^0_{j+1} + d^0_{j} +
d^0_{j+2}$ and is
\begin{equation}\label{dimIrrATL-anti}
\dim \AIrrTL{j}{(-1)^j} = \widehat{d}_j -\widehat{d}_{j+2} = \binom{2L}{L+j} - \binom{2L}{L+j+2}.
\end{equation}
Recall that the dimension of each $S^z=j$ sector $\Hilb_{[j]}$ is $\binom{2L}{L+j}$.
% Then, the dimensions~\eqref{dimIrrATL-anti} correspond to single subtractions.

%%%%%%%%%%%%%%%%%%%%%%%%%%%%%%%%%%%%

\section{JTL algebra and Howe duality}\label{Howe}

We now give an interpretation of the simple modules over
 $\JTL{N}$ and $\JTL{N}^{tw}$ with the dimensions~\eqref{dim-irr-per}
and~\eqref{dimIrrATL-anti}, respectively, from the point of view of  representation  theory of
 symplectic Lie algebras.
 We first recall a convenient  notation for  lattice fermions and then describe the representation theory of $\JTL{N}$ in the periodic and anti-periodic spin chains in the  context of (a non-semisimple version of)
 Howe duality.

 \subsection{A Lie algebra of fermion bilinears}\label{Howe-ss-0}
Recall that the $e_j$ generators of the $\JTL{N}$ (or $\JTL{N}^{tw}$) algebras in the (anti-)periodic spin-chains are linear combinations of bilinears in the fermions $f_j$ and $f_j^{\dagger}$. For the periodic case, we have the representation
\begin{equation*}
e_j^{\gl}= (f_j+f_{j+1})(f_j^\dagger+f_{j+1}^\dagger),\qquad 1\leq
j\leq 2L,
\end{equation*}
identifying the labels modulo $N$, in particular $f_{N+1}\equiv
f_1$, and $f^\dagger_{N+1}\equiv f_1^\dagger$
 (see also expressions in~\eqref{rep-TL-1} and~\eqref{rep-JTL-2}).  For the anti-periodic model, we have a similar expression~\eqref{rep-JTL-anti}, where the fermions are anti-periodic. It turns out that the commutators of these combinations of bilinears can be expressed again in fermionic bilinears, and of course   belong to the JTL algebra. Therefore, they generate a finite-dimensional Lie algebra. On the other hand, the spin-chain images of the JTL algebra contain many operators which are not bilinears. These non-bilinear operators are generated by the Lie algebra elements\footnote{Note that the translation generator $u^2$ action in our spin-chains  can be essentially expressed by products of $e_j$'s, see more concrete statements in~\cite{GRS1,GRS2}.} because the $e_j$ generators belong to the Lie algebra. To say things differently, the full image of  the JTL algebra is isomorphic to an enveloping algebra of the Lie algebra of the fermionic bilinears (of course, it is also true for the open case discussed in Sec.~\ref{sec:remind} that the TL algebra, which is a subalgebra in the JTL algebra, is  generated by a Lie algebra of fermionic bilinears).
We begin our analysis with the anti-periodic spin-chain which is much simpler  than the periodic one because of its semi-simplicity.

The centralizer of  $\JTL{N}^{tw}$  is generated as well by special bilinears in the fermions and these are
 the $s\ell(2)$ generators $\e$ and $\f$, or $F_{(2)}^\dagger$ and
 $F_{(2)}$ defined in~\eqref{F2-def}.  It is a well-known fact due to Howe~\cite{H3} that,  in the Clifford algebra $\Clif{4L}$, the centralizing algebra of this image of $U s\ell(2)$ -- which is the enveloping algebra of the symplectic Lie algebra $\ssp_2$ as well -- is the enveloping algebra of $\ssp_{2L}$. Note that the dimensions~\eqref{dimIrrATL-anti} of the simple $\JTL{N}^{tw}$-modules coincide indeed  with dimensions of fundamental representations of $\ssp_{2L}$. We can thus easily get the following theorem.
  \begin{thm}\label{thm:Lie-JTLtw}
The image $\repgl\bigl(\JTL{N}^{tw}\bigr)$ of the twisted Jones--Temperley--Lieb algebra in the anti-periodic $\gl(1|1)$ spin-chain is isomorphic to the image of a representation of the enveloping algebra $U \ssp_{2L}$ with all simple $\ssp_{2L}$-modules corresponding to one-column
 Young diagrams.
\end{thm}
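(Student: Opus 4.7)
My plan is to realize $\repgl(\JTL{N}^{tw})$ as the image of $U\ssp_{2L}$ through the classical Howe duality of the pair $(\ssp_{2},\ssp_{2L})$ inside the Clifford algebra $\Clif{4L}$, and then match the list of simple summands by a dimension check.

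First, I would observe that every generator $e_j$ (including the twisted last one in~\eqref{rep-JTL-anti}) is a quadratic polynomial in the $4L$ fermions $f_j,f_j^{\dagger}$. Since the commutator of two Clifford-bilinears is again a Clifford-bilinear, the $e_j$'s generate a Lie subalgebra $\mathfrak{g}$ of the fermion-bilinear Lie algebra (a real form of $\mathfrak{o}(4L)$ adapted to the inner product~\eqref{f_j-rel}), and $\repgl(\JTL{N}^{tw})$ is the image of the enveloping algebra $U\mathfrak{g}$ on $\Hilb_{N}$. Next, since the generators $\e,\f$ of~\eqref{QG-ferm-2} commute with every $e_j$ in the twisted case (Sec.~\ref{subsec:antiper-mod}), $\mathfrak{g}$ lies in the centralizer of the $\ssp_{2}$-action among fermion bilinears; by Howe duality this centralizer is exactly $\ssp_{2L}$, so $\mathfrak{g}\subseteq\ssp_{2L}$ and the image of $\JTL{N}^{tw}$ sits inside the image of $U\ssp_{2L}$ on $\Hilb_{N}$.

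To upgrade this inclusion to an equality, I would combine the bimodule decomposition~\eqref{antip-bimod} with the double-commutant theorem: the decomposition exhibits $\repgl(\JTL{N}^{tw})$ and $U\ssp_{2}$ as mutual centralizers on $\Hilb_{N}$, while Howe's theorem identifies the full centralizer of $U\ssp_{2}$ on the spinor module with the image of $U\ssp_{2L}$; comparing the two yields equality of images in $\Endo(\Hilb_{N})$. Once this is known, the Howe correspondence pairs each $\ssp_{2}$-summand $\XX_{1,j+1}$ of~\eqref{antip-bimod} with a unique simple $\ssp_{2L}$-module, which must be the fundamental representation attached to the one-column Young diagram of length $L-j$: indeed the Weyl-formula dimensions of those fundamentals agree with the dimensions~\eqref{dimIrrATL-anti} of $\AIrrTL{j}{(-1)^j}$ via the elementary identity
\begin{equation*}
\binom{2L}{L+j}-\binom{2L}{L+j+2}=\binom{2L}{L-j}-\binom{2L}{L-j-2},
\end{equation*}
which follows from $\binom{2L}{L+j}=\binom{2L}{L-j}$.

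The main technical hurdle is pinning down the precise real form of $\mathfrak{o}(4L)$ dictated by the indefinite pairing in~\eqref{f_j-rel} and verifying that the $\ssp_{2}$ of~\eqref{QG-ferm-2} is genuinely the member of a Howe dual pair in that real form rather than some twist of it; in particular one must check that the non-trivial signs coming from $\{f_j,f^{\dagger}_{j'}\}=(-1)^{j}\delta_{jj'}$ do not deform the $(\ssp_{2},\ssp_{2L})$ pair into something else. Once this is settled, the argument reduces to a direct invocation of Howe's theorem plus the routine dimension bookkeeping above.
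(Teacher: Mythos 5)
Your proposal is correct and follows essentially the same route as the paper: the semisimple, multiplicity-free bimodule~\eqref{antip-bimod} gives the double-centralizer property between $\repgl(\JTL{N}^{tw})$ and $U\ssp_{2}$, classical $(\ssp_{2},\ssp_{2L})$ Howe duality in $\Clif{4L}$ identifies the centralizer of $U\ssp_{2}$ with the image of $U\ssp_{2L}$, and the dimension count~\eqref{dimIrrATL-anti} pins down the one-column fundamentals. Your worry about the real form is moot since everything is complexified (the signs in~\eqref{f_j-rel} are absorbed by a change of basis over $\oC$), so the argument stands as written.
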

 Recall that the anti-periodic spin-chain is completely reducible as a $\JTL{N}^{tw}$-module -- see the corresponding semisimple $(\JTL{N}^{tw},U\ssp_{2})$-bimodule in Sec.~\ref{subsec:antiper-mod} -- and we thus have the double-centralizing property~\cite[Thm.~4.1.13]{goodwall}. Then, the proof of Thm.~\ref{thm:Lie-JTLtw} is a simple consequence of the classical ($\ssp(n)$,\,$\ssp(m)$)-type Howe duality, where $n=2$ and $m=2L$ in our case.

We next describe explicitly the representation of $\ssp_{2L}$ from Thm.~\ref{thm:Lie-JTLtw}, and introduce a  more convenient fermion basis which we will use below in studying scaling limits.

\subsubsection{The Clifford algebra}\label{sec:ham-spec}
The periodic spin chain admits the natural   action of  a Clifford algebra $\Clif{4L}$ with
$4L$ generators: $\chi_{p}$, $\eta_p$ and their
adjoints $\chi^{\dagger}_{p}$, $\eta^{\dagger}_p$, where we set
$p=m\pi/L$ and $0\leq m\leq L-1$. As operators, they are defined in
App.~\Appfourier in~\eqref{eq:chi-eta-def} and  satisfy the following anti-commutation relations
\begin{equation}\label{chi-eta-rels-0}
\left\{\chi^{\dagger}_p,\chi_{p'}\right\} = \left\{\eta^{\dagger}_p,\eta_{p'}\right\} =
\delta_{p,p'}, \qquad
\left\{\chi_p,\eta_{p'}\right\} =
\left\{\chi^{\dagger}_p,\eta^{(\dagger)}_{p'}\right\} =
\left\{\eta^{\dagger}_p,\chi^{(\dagger)}_{p'}\right\} = 0.
\end{equation}

The Hamiltonian from~\eqref{hamil-def} can then be written~\cite[Sec.~4.1]{GRS1} in this Clifford algebra as
\begin{equation}\label{eq:def-hamilt-gl}
H=
2\sum_{\substack{p=\step\\\text{step}=\step}}^{\pi-\step}\sin{p}\bigl(\chi^{\dagger}_p\chi_{p}
- \eta^{\dagger}_p\eta_{p}\bigr) + 4 \chi^{\dagger}_0\eta_0,\qquad \step=\ffrac{\pi}{L},
\end{equation}
{\it i.e.},
the $\chi^{(\dagger)}_p$ and $\eta^{(\dagger)}_p$ fermions are creation and
annihilation operators that generate an eigenvector of
$H$  of the energy $2\sin(p)$ and momentum $p$. We note that $H$ is non-diagonalizable and  its off-diagonal part
$\chi^{\dagger}_0\eta_0$  generates  Jordan cells of rank $2$.
The creation operators generate  all root-
and eigen-vectors of $H$ from the space of ground states.
The latter has the following structure:
\begin{equation}\label{diag:space-gr-st}
   \xymatrix@=20pt{
     &\lvac\ar[dl]_{\chi^{\dagger}_0} \ar[dr]^{-\eta_0} \ar[dd]^(0.45){H}&&\\
     \phip\ar[dr]_{\eta_0}
       &&\phim\ar[dl]^{\chi^{\dagger}_0} &\\
     &\vac&&
   }
\end{equation}
where the two bosonic
states -- the vacuum $\vac$ and the state $\lvac$ -- have $S^z=0$ and
 form a
two-dimensional Jordan cell for the lowest eigenvalue for $H$, while the two fermionic states $\phip$ and $\phim$ belong to the
sectors with $S^z=+1$ and $S^z=-1$, respectively. We also
showed in~\eqref{diag:space-gr-st} the action of the quantum-group generators $\F\sim\eta_0$
and $\E\sim\chi^{\dagger}_0$.

All the excitations over the ground states
$\lvac$ and $\vac$ are thus generated by the free action of the
 fermionic  creation modes which are $\chi^{\dagger}_{p}$ and $\eta_p$, for $p\in(0,\pi)$,
and
$\eta^{\dagger}_0$ and
$\chi_0$ (see precise definitions of these operators in~\eqref{eq:chi-eta-def}). The
annihilation modes acts as
\begin{equation}
\chi_{p}\vac = \eta^{\dagger}_p\vac = \chi^{\dagger}_0\vac =
\eta_0\vac =0, \qquad p\in(0,\pi).
\end{equation}

\medskip

For the antiperiodic spin-chains, we have also the action of  a Clifford algebra with
$4L$ generators  $\chi_{p}$, $\eta_p$ and
adjoint ones $\chi^{\dagger}_{p}$, $\eta^{\dagger}_p$, where now the
momenta run over a different set: $p=(m-\half)\pi/L$ and $1\leq m\leq L$. In this case, there are
no zero modes, and in particular the ground state of the Hamiltonian with
anti-periodic conditions is non-degenerate~\cite{GRS1}.

\subsubsection{A special basis in (twisted) JTL algebras}
We now go back to the anti-periodic model and describe explicitly the representation of $\ssp_{2L}$ from Thm.~\ref{thm:Lie-JTLtw}.
First, recall the well known fact  that the bilinears in  the generators of the Clifford algebra $\Clif{4L}$ introduced in~\eqref{chi-eta-rels-0}  that commute with the operator $S^z=2\h$ (or with the fermion number operator)  give a basis in the Lie algebra
$\gl_{2L}$. Indeed, introducing the elementary matrices $\Egl_{m,n}$ with matrix elements unity  at the corresponding $(m,n)$-th position and zero otherwise (a standard basis
in $\gl_{2L}$), we can write them in terms of  fermions as
\begin{equation}\label{gl-stand-basis}
\begin{split}
\Egl_{m,n} &= \chid_p\,\chi_q,\qquad\quad\Egl_{m,L+n} = \chid_p\,\bbeta_q,\\
\Egl_{L+m,n} &= \bbetad_p\,\chi_q,\qquad\Egl_{L+m,L+n} =
\bbetad_p\,\bbeta_q,
\end{split}, \qquad 1\leq m,n\leq L
\end{equation}
where we set  $p=(m-\half)\pi/L$ and
$q=(n-\half)\pi/L$,
 and we used also the notation $\bbeta_p=\eta_{\pi-p}$,
$\bbetad_p=\etad_{\pi-p}$. One can  check using the
relations~\eqref{chi-eta-rels-0} that the defining relations
for $\gl_{2L}$:
\begin{equation}\label{rel-gln-stand}
[\Egl_{m,n},\Egl_{k,l}] = \delta_{n,k}\Egl_{m,l} - \delta_{m,l}\Egl_{k,n}, \qquad  1\leq m,n,k,l\leq2L.
\end{equation}
 are satisfied. We then note that the linear combinations
\begin{equation}\label{sp-stand-basis}
\begin{split}
\Asp_{m,n} &= \Egl_{m,n} - \Egl_{L+n,L+m} ,\qquad\Bsp_{m,n} =
\Egl_{m,L+n} + \Egl_{n,L+m},\\
\Csp_{m,n} &= \Egl_{L+m,n} + \Egl_{L+n,m},
\end{split}\qquad\qquad\quad 1\leq m,n\leq L,
\end{equation}
span a Lie subalgebra in $\gl_{2L}$ isomorphic to $\ssp_{2L}$ (note that $\Bsp_{n,m} = \Bsp_{m,n}$ and $\Csp_{n,m} = \Csp_{m,n}$), and
these combinations commute with the action of $\ssp_{2}$ spanned by $\e$, $\f$, and $\h$.
Moreover, these combinations span the maximum Lie subalgebra in $\gl_{2L}$ commuting with the generators of $\ssp_{2}$: it is straightforward to check using~\eqref{QG-ferm-2} and transformations in App.~\Appfourier that the complement of the subspace $\ssp_{2L}$ in the $\gl_{2L}$ does not commute with the action of $\e$ and $\f$.
We then recall~\cite{H3} that the generators of the Howe-dual for the $\ssp_{2}$ are also bilinears in the fermionic operators.
We thus obtain that the Lie algebra $\ssp_{2L}$ from Thm~\ref{thm:Lie-JTLtw} acts as in~\eqref{sp-stand-basis} and its centralizer is the $U s\ell(2)$ acting by~\eqref{QG-ferm-2}. Because of the double-centralizing property the basis elements  of $\ssp_{2L}$ therefore
generate
the action of the centralizer of $U s\ell(2)$ which is $\repgl\bigl(\JTL{N}^{tw}\bigr)$,
as was stated in Thm.~\ref{thm:Lie-JTLtw}. We do not give here explicit expressions of $e_j$'s in terms of these basis elements of $\ssp_{2L}$ but will give some formulas below. What is important to note now is  that taking products of the $\ssp_{2L}$ basis elements we obtain a special basis in the image of $\JTL{N}^{tw}$ which is used later for taking the scaling limit of the JTL algebras.

Note finally that the  $\Asp_{n,n}$ generators are basis elements in the Cartan subalgebra of $\ssp_{2L}$. Then, the simple modules  $\AIrrTL{j}{(-1)^{j}}$ over
$\JTL{N}^{tw}$ are  highest-weight
representations of  $\ssp_{2L}$ with the weights\footnote{Actually, for our choice of Cartan elements one should replace $1$ by $-1$ in order to obtain correct weights with respect to $\Asp_{n,n}$.} $(1,1,\dots,1,0,\dots,0)$, which are  sequences of length $L$ of consecutive $1$'s and then $0$'s,  with $j$ of $0$'s and $0\leq j\leq L$. We thus obtain a decomposition~\eqref{antip-bimod} of the anti-periodic spin-chain with respect to the two commuting Lie algebras, $\ssp_2$ and $\ssp_{2L}$, where the $\ssp_2$ representation with $j$ boxes corresponds to the $j$th fundamental  (one-column) representation for $\ssp_{2L}$.

We then extend this classical Howe duality to the non-semisimple case which corresponds to the periodic model. The final result is given in Thm.~\ref{Thm:US-JTL-iso} and Cor.~\ref{cor:Howe-nss} but first we discuss the semi-simple part of the action of the $\JTL{N}$ algebra which parallels the previous discussion.

\subsection{Semisimple part of $\JTL{2L}$ and Lie algebra $\ssp_{2L-2}$ }\label{Howe-ss}

For the periodic model,
we introduce
the standard basis for $\gl_{2L}$ as in~\eqref{gl-stand-basis},
where now $0\leq m,n\leq L-1$ and $p=m\pi/L$, and
$q=n\pi/L$. Note once again that we have zero modes $\chi_0^{(\dagger)}$ and $\eta_0^{(\dagger)}$ in this case, with the identification $\bar{\eta}_0^{(\dagger)}=\eta_0^{(\dagger)}$. Then, the linear combinations
from~\eqref{sp-stand-basis}, with $1\leq m,n\leq L-1$, span a Lie
subalgebra $\ssp_{2L-2}$. Note that we do not include zero modes because half of them do not commute with the symmetry algebra $\LQGodd$.
 Recall that the $\LQGodd$ generators $\EE m$ and $\FF n$ are represented
in the spin-chain by $\e^m\E$ and $\f^n\F$, see~\eqref{LQGodd-LQG-hom} and note that
$\K^2=\one$ for even $N$. We  can write~\cite{GRS1} fermionic expression
for these generators in the $\chi$-$\eta$ notation as
\begin{equation}\label{eq:EE-chi-eta}
\EE m = (-1)^{S^z}\sqrt{N}\left(\sum_{p=\step}^{\pi-\step}
\eta^{\dagger}_p\chi^{\dagger}_{\pi-p}\right)^m\chid_0,\qquad
\FF n = -i \sqrt{N}\left(\sum_{p=\step}^{\pi-\step}
\eta_p\chi_{\pi-p}\right)^n\eta_0.
\end{equation}
It is then easy to check that the combinations of bilinears that give a basis in the $\ssp_{2L-2}$ commute with this action of $\LQGodd$.
Moreover, we  show
below that the bilinears in fermions (\ref{sp-stand-basis}) are the only bilinears that
do not contain zero modes and belong to the image of $\JTL{N}$ under
the representation $\repgl$. Put a bit differently, we show that the operators
$\Asp_{m,n}$, $\Bsp_{m,n}$, and $\Csp_{m,n}$ generate the semisimple
part of the associative algebra $\repgl\bigl(\JTL{N}\bigr)$, while we will see later that similar combinations
containing zero modes $\chid_0$ and $\eta_0$ generate its Jacobson
radical.  Indeed, the dimensions~\eqref{dim-irr-per} of the simple $\JTL{2L}$-modules coincide  with dimensions of fundamental representations  of $\ssp_{2L-2}$, as in the anti-periodic case but now for a smaller symplectic Lie algebra. This observation suggests the following lemma.
\newcommand{\rad}{\mathrm{Rad}}
\begin{lemma}\label{lem:ssJTL-sp}
The image of the representation~\eqref{sp-stand-basis} of the enveloping algebra $U \ssp_{2L-2}$ is isomorphic to the quotient $\repgl\bigl(\JTL{N}\bigr)/\rad$, where $\rad$ is the Jacobson radical of the image of the Jones--Temperley--Lieb algebra in the periodic $\gl(1|1)$ spin-chain.
\end{lemma}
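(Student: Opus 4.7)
The plan is to combine classical $(\ssp_2, \ssp_{2L-2})$ Howe duality inside the Clifford algebra of non-zero modes with a dimension count based on the double centralizer theorem, paralleling the proof of Thm.~\ref{thm:Lie-JTLtw} but at the level of the semisimple quotient. First, I would verify that the operators $\Asp_{m,n}$, $\Bsp_{m,n}$, $\Csp_{m,n}$ for $1\leq m,n\leq L-1$ do close under commutation into $\ssp_{2L-2}$ (a direct computation from \eqref{rel-gln-stand} and the definitions \eqref{sp-stand-basis}), and that they commute with the action of $\LQGodd$. For the latter, every formula in \eqref{eq:EE-chi-eta} for $\EE m$ and $\FF n$ carries an explicit zero-mode factor $\chid_0$ or $\eta_0$, while $\Asp, \Bsp, \Csp$ involve only non-zero Fourier modes paired in the $\ssp_{2L-2}$-invariant way, so the Clifford relations \eqref{chi-eta-rels-0} force the commutators to vanish.

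Next, I would establish that the image of $U\ssp_{2L-2}$ lands inside $\repgl(\JTL{N})$ modulo the radical. For this, I would expand each $e_j^{\gl}$ as a bilinear in the Clifford generators $\chi, \eta, \chid, \etad$ via the Fourier transform of App.~\Appfourier, and decompose the resulting sum into a piece built purely from non-zero modes (which lies in the image of $\ssp_{2L-2}$) and a piece containing at least one zero mode. The latter is nilpotent on $\Hilb_{2L}$, as already visible in the Jordan cell of \eqref{diag:space-gr-st} produced by the term $4\chid_0\eta_0$ in \eqref{eq:def-hamilt-gl}, and therefore belongs to $\rad$. Since the $e_j$'s generate $\JTL{N}$, iterated commutators and products show that the whole of $U\ssp_{2L-2}$ descends into $\repgl(\JTL{N})/\rad$.

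To conclude, I would apply classical Howe duality~\cite{H3} for the pair $(\ssp_2, \ssp_{2L-2})$ acting on the Clifford subalgebra $\Clif{4L-4}$ generated by the non-zero modes: it yields a multiplicity-free decomposition whose $\ssp_{2L-2}$-factors are the fundamental representations, with dimensions matching the numbers $d_j^0$ given in \eqref{dim-irr-per}. On the other side, the semisimple quotient $\repgl(\JTL{N})/\rad$ is isomorphic by Wedderburn to $\bigoplus_{j=1}^L \Endo_{\oC}\bigl((\TLX_j)\bigr)$, with the simple modules $(\TLX_j)$ again of dimensions $d_j^0$. The inclusion from the previous paragraph together with this dimension count then forces equality of algebras.

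The main obstacle I anticipate is the careful identification between the $\JTL{N}$-subquotients $(\TLX_j)$ appearing in the staircase of $\APrTL{j}$ in Fig.~\ref{FF-JTL-mod} and the fundamental $\ssp_{2L-2}$-modules: one must verify that the highest weights $(1,\ldots,1,0,\ldots,0)$ stated at the end of Sec.~\ref{Howe-ss-0} for the Cartan generators $\Asp_{n,n}$ correspond correctly to the Temperley--Lieb cell labels (rather than to dual weights or some permutation thereof), and that the non-semisimple dressing produced by the four zero modes does not drag any fundamental $\ssp_{2L-2}$-summand outside of the semisimple quotient. Once this weight matching is fixed, all remaining non-semisimple subtleties are confined to $\rad$ and the above outline yields the algebra isomorphism cleanly.
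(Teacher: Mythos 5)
Your key idea --- classical $(\ssp_2,\ssp_{2L-2})$ Howe duality on the Clifford algebra $\Clif{4L-4}$ of non-zero modes, combined with matching the dimensions $d_j^0$ of the simples $(\TLX_j)$ against those of the one-column fundamental representations --- is the same as the paper's. The packaging differs: the paper restricts the whole action to the socle of $\Hilb_N$ (the intersection of the kernels of $\E$ and $\F$), notes that every simple $\JTL{N}$-composition factor of $\Hilb_N$ already occurs there, identifies the centralizer of $\JTL{N}$ on the socle with $U\ssp_2$, and uses the multiplicity-free double-centralizer property of the Howe decomposition to conclude that the $\JTL{N}$-action on the socle is generated by $\ssp_{2L-2}$. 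You instead stay on the full space, split each $e_j^{\gl}$ into a symplectic part plus a zero-mode part, and finish with Wedderburn and a dimension count. For the bare isomorphism statement your route suffices (two semisimple algebras with matching lists of block sizes $d_1^0,\dots,d_L^0$ are isomorphic), but the socle restriction is what makes the identification of each $(\TLX_j)$ with a specific highest weight canonical, which is what the subsequent corollary uses.

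Two steps in your version need repair. First, ``the zero-mode piece is nilpotent and therefore belongs to $\rad$'' is not a valid inference: a nilpotent operator need not lie in the Jacobson radical (strictly upper-triangular matrices are nilpotent while $\rad\bigl(M_n(\oC)\bigr)=0$). The correct reason is that the zero-mode bilinears occurring in the $e_j^{\gl}$ involve only the ``lowering'' modes $\chid_0$ and $\eta_0$ (these are the generators $\Egl_{0,n}$, $\Egl_{0,L}$, $\Egl_{0,L+n}$, $\Egl_{m,L}$, $\Egl_{L+m,L}$; see~\eqref{Hln-pos} and~\eqref{hamilt-sp}), so they strictly lower the filtration of $\Hilb_N$ by zero-mode content as in~\eqref{diag:space-gr-st} and hence annihilate every composition factor. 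Second, you assert without argument that the non-zero-mode part of $e_j^{\gl}$ lies in the span of the $\Asp,\Bsp,\Csp$ of~\eqref{sp-stand-basis} rather than involving the complementary opposite-sign bilinears; this is true but requires either the explicit expansion~\eqref{Hln-pos} or the observation that $e_j^{\gl}$ commutes with $\LQGodd$ while none of the complementary bilinears does. Finally, your stated goal (``the image of $U\ssp_{2L-2}$ lands inside $\repgl(\JTL{N})$ modulo the radical'') is the reverse of what your decomposition actually shows, namely $\repgl\bigl(\JTL{N}\bigr)\subseteq U\ssp_{2L-2}+\rad$; since you close with an abstract Wedderburn count this does not sink the argument, but the containment should be stated in the right direction.
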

\begin{proof}
We note that all simple modules $\AIrrTL{j}{(-1)^{j+1}}$ over $\JTL{N}$ that appear in  the spin-chain module $\Hilb_N$ appear also in the socle (maximum semisimple submodule) of $\Hilb_N$, see Fig.~\ref{FF-JTL-mod}. Then, in order to study the semisimple part of the action of $\JTL{N}$, it is enough to consider the action restricted on the socle of $\Hilb_N$. Recall that  simple modules over $\JTL{N}$ are also simple modules over its subalgebra $\TL{N}$, see Sec.~\ref{ind-chain-bimod-subsec}, and the centralizer of the $\TL{N}$ action is given the Lusztig's (restricted specialization of) $\LQG$ at $\q=i$~\cite{M1} -- the one with the divided powers $\e$ and $\f$. Therefore,  the centralizer of the action restricted on the socle  -- the intersection of the kernels of $\F$ and $\E$
in $\Hilb_N$ -- is  given by a representation of $U \ssp_2$ realized by the divided powers of $\LQGi$. Since the socle is freely generated from the vacuum state $\vac$ by the action of a Clifford algebra $\Clif{4L-4}$  with
$4L-4$ generators   $\chi^{(\dagger)}_{p}$, $\eta^{(\dagger)}_p$, where
$p=m\pi/L$ and $1\leq m\leq L-1$, we have a classical symplectic Howe duality~\cite{H3} between $\ssp_{2}$ and $\ssp_{2L-2}$. Indeed, the socle is a multiplicity-free semisimple bimodule over $\JTL{N}$ and $U \ssp_{2}$, where both algebras are represented  by appropriate bilinears in the generators of  $\Clif{4L-4}$. By  Howe duality, we obtain that the action of $\JTL{N}$ on the socle of $\Hilb_N$, and thus its semisimple part on $\Hilb_N$, is generated by the $\ssp_{2L-2}$ action.
\end{proof}

In the open case, the simple modules are the same as for $\JTL{N}$ in the periodic model. Therefore, we conclude that the semisimple part of $\TL{N}$, or the quotient $\TL{N}/\rad$ by its Jacobson radical, is  also generated by an  $\ssp_{N-2}$ action.

Using Lem.~\ref{lem:ssJTL-sp} and a correspondence between weights in the symplectic Howe duality~\cite{H3,BrockerDieck}, we obtain the following  corollary.
\begin{cor}\label{cor:JTL-sp-simples}
The simple modules $\AIrrTL{j}{(-1)^{j+1}}$, with $1\leq j\leq L$, over $\JTL{2L}$ (or $\TL{2L}$) are simple
modules over the enveloping algebra for $\ssp_{2L-2}$. The dimensions~\eqref{dim-irr-per}
of all these simple modules correspond to all highest-weight
representations of $\ssp_{2L-2}$ of the weights $(1,1,\dots,1,0,\dots,0)$, which are  sequences of length $L-1$ of consecutive $1$'s and then $0$'s, with $j-1$ of $0$'s.
\end{cor}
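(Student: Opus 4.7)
The corollary follows almost directly from Lemma~\ref{lem:ssJTL-sp} together with a dimension-matching argument. The plan is to first reduce the identification of simples from $\repgl\bigl(\JTL{2L}\bigr)$ to its semisimple quotient, and then to match the resulting simple $\ssp_{2L-2}$-modules to specific fundamental representations by reading off the dimensions from~\eqref{dim-irr-per}.

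\textbf{Step 1: Passage to the semisimple quotient.} For any finite-dimensional associative algebra $A$, every simple $A$-module is annihilated by the Jacobson radical $\rad$, so the pullback along $A \to A/\rad$ yields a bijection between simple $A/\rad$-modules and simple $A$-modules. Applied to $A=\repgl\bigl(\JTL{2L}\bigr)$, this gives that the modules $\AIrrTL{j}{(-1)^{j+1}}$, with $1 \leq j \leq L$, are precisely the simple modules over $A/\rad$. By Lemma~\ref{lem:ssJTL-sp}, $A/\rad$ is isomorphic to the image of $U\ssp_{2L-2}$ in the spin-chain, which by complete reducibility on the socle is a semisimple algebra. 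Hence each $\AIrrTL{j}{(-1)^{j+1}}$ is a simple $U\ssp_{2L-2}$-module through this identification. For the subalgebra $\TL{2L}$ the same argument applies, since by the comment at the end of Sec.~\ref{ind-chain-bimod-subsec} the simples over $\JTL{2L}$ are already simple on restriction to $\TL{2L}$.

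\textbf{Step 2: Identification of the fundamental representations.} The remaining task is to determine which simple $\ssp_{2L-2}$-module each $\AIrrTL{j}{(-1)^{j+1}}$ realizes. One input is the Howe duality already invoked in the proof of Lemma~\ref{lem:ssJTL-sp}: on the socle of $\Hilb_{N}$, the commuting actions of $\ssp_2$ (generated by $\e,\f,\h$) and $\ssp_{2L-2}$ decompose the socle into a multiplicity-free direct sum $\bigoplus_j V^{(\ssp_2)}_j \boxtimes V^{(\ssp_{2L-2})}_j$, and by the classical symplectic see-saw the pairing is between the finite-dimensional irreducibles that appear in the Fock space of $\Clif{4L-4}$. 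The second input is the dimension formula~\eqref{dim-irr-per}: using the standard Weyl-type identity $\dim V(\omega_k) = \binom{2n}{k} - \binom{2n}{k-2}$ for the $k$-th fundamental representation of $\ssp_{2n}$, one reads off with $2n=2L-2$ and $k=L-j$ that
\[
d^0_j = \binom{2L-2}{L-j}-\binom{2L-2}{L-j-2} = \dim V(\omega_{L-j}).
\]
Since distinct fundamental weights of $\ssp_{2L-2}$ yield distinct irreducibles and our $d_j^0$ are strictly increasing in $j$ for $1\leq j\leq L$, this dimension match uniquely identifies $\AIrrTL{j}{(-1)^{j+1}} \cong V(\omega_{L-j})$, whose highest weight in the orthonormal $\epsilon$-basis is the $(L{-}1)$-tuple with $L-j$ ones followed by $j-1$ zeros, as asserted.

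\textbf{Expected main obstacle.} The conceptual part (Step~1) is immediate once Lemma~\ref{lem:ssJTL-sp} is available. The only point requiring care is Step~2: one must justify that the Howe-duality bijection indeed pairs the $\ssp_2$ isospin-$j$ representation with the $\ssp_{2L-2}$ fundamental of column length $L-j$ (rather than, say, $j$ or $L-j-1$). This can be checked either by invoking the explicit weight correspondence in symplectic--symplectic Howe duality~\cite{H3,BrockerDieck}, or -- more concretely -- by the dimension check above, which is forced because the $d_j^0$ are all distinct. No additional computation beyond this matching is required.
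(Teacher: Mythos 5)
Your overall route is the same as the paper's: Lemma~\ref{lem:ssJTL-sp} reduces the problem to the semisimple quotient, and the pairing of simples is then read off from symplectic Howe duality. Step~1 is the standard fact (simple modules of an algebra and of its quotient by the Jacobson radical coincide) that the paper leaves implicit, and the first alternative you offer in Step~2 --- invoking the explicit weight correspondence of the $(\ssp_2,\ssp_{2L-2})$ duality on the Fock space of $\Clif{4L-4}$ --- is precisely what the paper does; with that input your argument is complete.

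The self-contained fallback you actually lean on, however, does not work. You assert that the dimension match ``uniquely identifies'' $\AIrrTL{j}{(-1)^{j+1}}\cong V(\omega_{L-j})$ ``because the $d_j^0$ are all distinct'' and ``strictly increasing in $j$''. Neither claim is true. The $d_j^0$ of~\eqref{dim-irr-per} decrease with $j$ (e.g.\ $d_L^0=1$), and more importantly they need not be distinct: for $L=4$ one finds $d_1^0=\binom{6}{3}-\binom{6}{1}=14$ and $d_2^0=\binom{6}{2}-\binom{6}{0}=14$, reflecting the well-known fact that $\ssp_6$ has two inequivalent $14$-dimensional fundamental representations $V(\omega_2)$ and $V(\omega_3)$. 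So already in the smallest nontrivial case a dimension count cannot tell $\omega_{L-1}$ from $\omega_{L-2}$. A dimension match also cannot by itself establish that the simple module is a one-column representation rather than some other irreducible of the same dimension; that it is a fundamental, and which fundamental pairs with which $\ssp_2$-isospin, both come from the multiplicity-free Howe decomposition together with its weight correspondence (or, equivalently, from identifying the $\ssp_{2L-2}$ highest-weight vectors directly as the charged vacua, as sketched at the end of Sec.~\ref{Howe-ss}). The dimension formula should therefore be regarded only as a consistency check, not as the identification mechanism.
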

We note once again that the generators $\Asp_{n,n}$, now with $1\leq n\leq L-1$, span a basis in the Cartan subalgebra of the  $\ssp_{2L-2}$.
The diagonal part of the Hamiltonian $H$ has a very simple expression in terms of these generators:
\begin{equation}
H= 2\sum_{m=1}^{L-1}
\sin{(p)}\, \Asp_{m,m} + 4\,\Egl_{0,L}.
\end{equation}
This allows us to describe $\JTL{N}$ simple modules as highest-weight representations of $\ssp_{2L-2}$ where highest-weight vectors play the role of charged vacua for the Hamiltonian. The vacuum state $\vac$ in the  space  of ground states coincides with the  highest-weight vector of the unique $\ssp_{2L-2}$ module of  the weight $(1,1,\dots,1,1)$, the next one corresponds to the weight $(1,1,\dots,1,0)$, \textit{etc}.

\subsection{The image  of $\JTL{N}$ as a Lie algebra representation}
We  introduce now special
elements  in $\JTL{N}$ which span a subspace of all elements in $\JTL{N}$
that are  bilinear in the $\chi$-$\eta$ fermions. We begin with the definition of the Lie algebra $\interfin{N}$.

\begin{dfn}\label{S_N-def}
We define the  Lie algebra $\interfin{N}$ to be generated by $\Asp_{m,n}$,
$\Bsp_{m,n}$, $\Csp_{m,n}$ from~\eqref{sp-stand-basis} and the elementary matrices
$\Egl_{0,n}$, $\Egl_{0,L}$, $\Egl_{0,L+n}$, $\Egl_{m,L}$, and
$\Egl_{L+m,L}$,
%from~\eqref{gl-stand-basis},
where $1\leq m,n\leq
L-1$. This Lie algebra can be schematically depicted by  matrices
of the form (in the standard basis of $\gl_{2L}$)
\begin{equation}\label{interfin-def}
\interfin{N}:\quad
\begin{pmatrix}
0 &  \times & \dots & \times   & \times  & \times & \dots & \times\\
0 &  &  &            & \times  &   &       & \\
0 &  & \Asp_{m,n} &  & \times  &   & \Bsp_{m,n}    & \\
0 &  &  &            & \times  &   &         & \\
0 & 0 & \dots & 0    & 0  & 0 & \dots & 0 \\
0 &  &  &            & \times  &   &       & \\
0 &  & \Csp_{m,n} &  & \times  &   &     & \\
0 &  &  &            & \times  &   &         &
\end{pmatrix},
\end{equation}
where the crosses $\times$ stand for the corresponding elements
$\Egl_{0,n}$, $\Egl_{0,L}$, $\Egl_{0,L+n}$, $\Egl_{m,L}$, and
$\Egl_{L+m,L}$.
\end{dfn}
Note that $\interfin{N}$ is a non-semisimple Lie algebra and admits
$\ssp_{2L-2}$ spanned by $\Asp_{m,n}$,
$\Bsp_{m,n}$, and $\Csp_{m,n}$ as a Lie subalgebra.  The dimension of $\interfin{N}$ is
\begin{equation}
\dim \interfin{N} = \ffrac{(N-1)(N+2)}{2} -1,
\end{equation}
where we recall that $\Bsp$- and $\Csp$-blocks are symmetric matrices.
% $\Bsp_{m,n} =  \Bsp_{n,m}$ and  $\Csp_{m,n} =  \Csp_{n,m}$.
The Lie radical of $\interfin{N}$  is generated by $\Egl_{0,n}$, $\Egl_{0,L}$, $\Egl_{0,L+n}$, $\Egl_{m,L}$, and
$\Egl_{L+m,L}$: these are the  generators corresponding the the crosses  $\times$ in~\eqref{interfin-def}.

\medskip
We next show that the representation of this Lie algebra of fermionic bilinears in the periodic spin-chain generates the image of $\JTL{N}$, proving the following theorem.

\begin{Thm}\label{Thm:US-JTL-iso}
The image of the representation~\eqref{gl-stand-basis}-\eqref{sp-stand-basis} of the enveloping algebra $U \interfin{N}$ is equal to the image $\repgl\bigl(\JTL{N}\bigr)$ of the Jones--Temperley--Lieb algebra in the periodic $\gl(1|1)$ spin-chain.
\end{Thm}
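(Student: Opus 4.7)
The plan is to prove equality of the two subalgebras $\repgl\bigl(\JTL{N}\bigr)$ and $\repgl\bigl(U\interfin{N}\bigr)$ of $\Endo(\Hilb_N)$ by mutual inclusion. For the first direction $\repgl\bigl(\JTL{N}\bigr) \subseteq \repgl\bigl(U\interfin{N}\bigr)$, it is enough to verify that each generator $e_j^{\gl}$ lies in $\interfin{N}$; since $u^2$ is a polynomial in the $e_j^{\gl}$'s by~\cite{GRS1,GRS2}, this covers the full algebra. The check is a direct Fourier computation: transform $e_j^{\gl} = (f_j + f_{j+1})(f_j^{\dagger} + f_{j+1}^{\dagger})$ into the momentum basis $(\chi_p, \eta_p, \chid_p, \etad_p)$ via the map of Appendix A. The result is a fermion bilinear preserving $S^z$, hence a linear combination of the matrix units $\Egl_{a,b}$ of (\ref{gl-stand-basis}); inspection shows that the zero-mode contributions sit precisely on the cross positions of the block pattern (\ref{interfin-def}), so $e_j^{\gl} \in \interfin{N}$.

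For the converse $\repgl\bigl(U\interfin{N}\bigr) \subseteq \repgl\bigl(\JTL{N}\bigr)$, it suffices by PBW to exhibit every Lie generator of $\interfin{N}$ as an element of $\repgl(\JTL{N})$. The Levi part $\ssp_{2L-2}$ is partially handled by Lemma~\ref{lem:ssJTL-sp}, which produces, for each $\Asp_{m,n}, \Bsp_{m,n}, \Csp_{m,n}$, an element of $\repgl(\JTL{N})$ congruent to it modulo the Jacobson radical $\rad$. What remains is to lift these representatives to honest bilinear equalities and to produce the cross generators $\Egl_{0,n}, \Egl_{m,L}, \Egl_{0,L+n}, \Egl_{L+m,L}, \Egl_{0,L}$ explicitly. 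The natural tool is iterated commutators $[e_j^{\gl}, e_k^{\gl}], [[e_j^{\gl}, e_k^{\gl}], e_l^{\gl}], \ldots$, which are themselves bilinears and hence, by the first inclusion, automatically lie in $\interfin{N}$.

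The main obstacle is proving that this iterated-commutator construction \emph{exhausts} $\interfin{N}$, rather than generating a proper Lie subalgebra. The cleanest route I see is conceptual: verify that $\interfin{N}$ coincides with the maximal Lie subalgebra of $\gl_{2L}$ consisting of fermion bilinears that commute with all generators of $\LQGodd$ displayed in (\ref{eq:EE-chi-eta}). Once this characterization is established, every bilinear in $\repgl(\JTL{N})$ must sit in $\interfin{N}$, because $\repgl(\JTL{N})$ lies in the commutant of $\centJTL \supseteq \LQGodd$; combined with the first inclusion this forces the bilinear part of $\repgl(\JTL{N})$ to equal $\interfin{N}$, and taking associative closures completes the proof. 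A complementary dimension count -- reading $\dim \repgl(\JTL{N})$ from the bimodule structure of Fig.~\ref{JTL-Uqodd-bimod} (which expresses the image as an endomorphism ring of a sum of indecomposables $\APrTL{j}$) and $\dim \repgl(U\interfin{N})$ from the action of $\interfin{N}$ on the Clifford module $\Hilb_N$ subject to the nilpotency relations $\chid_0^2 = \eta_0^2 = 0$ -- provides a useful cross-check.
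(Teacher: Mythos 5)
Your first inclusion, $\repgl\bigl(\JTL{N}\bigr)\subseteq\repgl\bigl(U\interfin{N}\bigr)$, is fine and matches the paper: the $e_j^{\gl}$ are $S^z$-preserving bilinears commuting with $\LQGodd$, and $\interfin{N}$ is exactly the maximal Lie algebra of such bilinears (a fact the paper establishes by commuting the complementary bilinears against $\EE{1}$). The gap is in the converse. Your ``conceptual route'' -- bilinears of $\repgl(\JTL{N})$ lie in the commutant of $\LQGodd$, hence in $\interfin{N}$, hence (combined with the first inclusion) the two bilinear spaces coincide -- is a non-sequitur: it only re-proves that the bilinear part of $\repgl(\JTL{N})$ is \emph{contained} in $\interfin{N}$, not that it exhausts it. Nothing in that argument rules out that the Lie algebra generated by iterated commutators of the $e_j^{\gl}$ is a proper subalgebra of $\interfin{N}$, nor, more to the point, that $U\interfin{N}$ contains operators outside $\repgl(\JTL{N})$. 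Your dimension cross-check rests on the same false premise: it reads $\dim\repgl(\JTL{N})$ off the bimodule picture as if $\repgl(\JTL{N})=\EndcJTL(\Hilb_N)$, but the paper stresses precisely that the double-centralizer property \emph{fails} here -- $\EndcJTL(\Hilb_N)$ is strictly larger than $\repgl(\JTL{N})$, the discrepancy being the ``long arrows'' from top to socle subquotients of $\APrTL{j}$ with $|j-k|\geq 4$ that are not composites of short arrows.

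The missing idea is exactly how the paper closes this loophole. Since commutant arguments cannot distinguish $\repgl(\JTL{N})$ from $\EndcJTL(\Hilb_N)$, one must show directly that $U\interfin{N}$ produces no long arrows. The paper does this by inspecting the Lie radical of $\interfin{N}$: every non-central radical generator ($\Egl_{0,n}$, $\Egl_{m,L}$, $\Egl_{0,L+n}$, $\Egl_{L+m,L}$) contains exactly one fermionic zero mode and therefore moves states down by exactly one level in the three-layer structure of $\APrTL{j}$, while $\Egl_{0,L}$ contains both zero modes and maps the top of $\APrTL{j}$ to the socle of the \emph{same} $j$. Hence every monomial in $U\interfin{N}$ realizes only short arrows or their composites; combined with Lemma~\ref{lem:ssJTL-sp} for the semisimple parts, the radicals and simple quotients of the two algebras coincide, giving the isomorphism. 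Without this (or an explicit verification that iterated commutators of the $e_j^{\gl}$ span all of $\interfin{N}$, which you defer rather than carry out), the converse inclusion remains unproved.
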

\begin{proof}
We first prove that the centralizer of the $U \interfin{N}$ action is given by $\centJTL$ -- the centralizer of $\repgl\bigl(\JTL{N}\bigr)$. Recall that $\centJTL$ is described in Sec.~\ref{subsec:cent}. By a direct computation, we see that  the image of the representation~\eqref{gl-stand-basis}-\eqref{sp-stand-basis} of the enveloping algebra $U \interfin{N}$    commutes with the $\LQGodd$ action given in~\eqref{eq:EE-chi-eta}.
We then show that the generators of $\interfin{N}$ are the only bilinears in the $\chi$-$\eta$ fermions that commute with  $\LQGodd$.  Let us for this introduce bilinears as in~\eqref{sp-stand-basis} but with the opposite sign and denote them by $\Asp'_{m,n}$, $\Bsp'_{m,n}$, and $\Csp'_{m,n}$, respectively. Together with the bilinears  $\Egl_{n,0}$ and  $\Egl_{L,n}$, they belong  to the complement of the   vector space $\interfin{N}$. We then compute  commutators of all these bilinears which are not in $\interfin{N}$ with the generator $\EE 1$. It turns out that all these commutators are non-zero and linearly independent bilinears. So, there are no linear combinations among    $\Asp'_{m,n}$, $\Bsp'_{m,n}$,  $\Csp'_{m,n}$, $\Egl_{n,0}$, and  $\Egl_{L,n}$ that would commute with $\EE 1$. This proves the statement that $\interfin{N}$ is the maximum Lie subalgebra of bilinears (in the $\chi$-$\eta$  fermionic operators) which commute with $\LQGodd$.
We then note that the image of $\JTL{N}$ is generated by $e_j$'s which are also bilinears in the $\chi$-$\eta$ and they commute with $\LQGodd$. This image is thus contained in the image of $U \interfin{N}$ and both algebras have the common centralizer $\centJTL$ (it is easy to check that $\interfin{N}$ also commutes with the $\f^L$ and $\e^L$).

To finish
our proof, we need to show that the image of $U \interfin{N}$ is not bigger than  the image of $\JTL{N}$. To show this, we use a representation-theoretic approach. We recall the  subquotient structure for each $S^z=\pm j$ sector $\APrTL{j}$
considered as a module over the centralizer of $\centJTL$ which is
isomorphic by the definition to the algebra $\EndcJTL(\Hilb_N)$. The
centralizer obviously contains $\repgl\bigl(\JTL{N}\bigr)$ and, following the previous paragraph, the image of $U \interfin{N}$ as well. The opposite inclusion  is not true as was shown in our second paper~\cite[Sec.~5]{GRS2} and we thus can not rest on a double-centralizing argument.

\newcommand{\thmendoPPodd}{3.4.4~}
 The subquotient
structure can be obtained using intertwining operators respecting
$\centJTL$ action. These are described in Thm.~\thmendoPPodd in~\cite{GRS2}. The only
difference from the diagrams for $\JTL{N}$ in Fig.~\ref{FF-JTL-mod} is
that there are additional (`long') arrows mapping a top
subquotient~$\IrrTL{j}$ (having only outgoing arrows) to~$\IrrTL{k}$
in the socle (having only ingoing arrows) whenver $|j-k|\geq4$ is an
even number. We note these long arrows are not composites of any short
arrows mapping from the top to the middle level\footnote{the level  consisting of all nodes having  both ingoing and outgoing arrows.}, and from the middle
to the socle.
%; this distinguishing property appears only at $N\geq10$.
It turns out that $\interfin{N}$ generators correspond only to
these short arrows and not to the long ones, and therefore there is no
 element from $\interfin{N}$ represented by a long arrow. This can be
shown using a direct calculation with the fermionic expressions~\eqref{gl-stand-basis}-\eqref{sp-stand-basis}.
Due to  Lem.~\ref{lem:ssJTL-sp},  we need to analyze only the radical of $\interfin{N}$ generated by $\Egl_{0,n}$, $\Egl_{0,L}$, $\Egl_{0,L+n}$, $\Egl_{m,L}$, and
$\Egl_{L+m,L}$. These are represented by bilinears in the Clifford algebra $\Clif{4L}$ generators. As was noted before, $N-2$ creation modes
$\chi^{\dagger}_{p>0}$ and $\eta_{p>0}$ generate
the bottom level -- the intersection of the kernels of $\F$ and $\E$
in $\Hilb_N$ -- from the vacuum state $\vac$, and also the top level
from one cyclic vector $\lvac$ which is involved with $\vac$ into a
Jordan cell for the Hamiltonian $H$. Among the Clifford algebra generators, there are
two -- zero modes  $\eta_0$ and $\chi^{\dagger}_0$ -- proportional to
$\F$ and $\E\K^{-1}$, respectively. These are the only
generators mapping vectors from the top level to the middle level, and
from the middle to the bottom level. Among the generators  in the radical of $\interfin{N}$, there is only one, $\Egl_{0,L}$, which is the product of the two zero modes. The product maps the top
to the bottom but commutes
with the $\JTL{N}$ action
and   thus maps a top subquotient $\IrrTL{j}$ only to the bottom $\IrrTL{j}$.
 All other generators of the radical are bilinears in  fermions containing only
one of the zero modes and they thus map only by one level down. Therefore, they correspond to the short arrows in the diagram  in Fig.~\ref{FF-JTL-mod}.
We conclude that the action of $U\interfin{N}$ can
not correspond to  long arrows connecting the top and the bottom
and which are not composites of any short arrows because any element of the radical in the image of $U \interfin{N}$ is a linear combination of monomials in its generators.

 We can thus conclude that  the modules over
the enveloping algebra of $\interfin{N}$ in the spin-chain representation have the same subqutient structure as for the $\repgl\bigl(\JTL{N}\bigr)$ and therefore their Jacobson radicals are isomorphic. In addition to the analysis  on their semisimple parts given in Lem.~\ref{lem:ssJTL-sp} -- both algebras have the same equivalence classes of  simple modules -- we also conclude  that they are isomorphic as associative algebras. This finally proves the theorem.
\end{proof}

As a consequence of  Thm.~\ref{Thm:US-JTL-iso}, we get the following corollary which we consider as a non-semisimple version of the (symplectic) Howe duality.

\begin{Cor}\label{cor:Howe-nss}
The centralizer $\cent_{U \interfin{N}}$ of the universal enveloping algebra $U \interfin{N}$ in the periodic $\gl(1|1)$ spin-chain representation in~\eqref{gl-stand-basis}-\eqref{sp-stand-basis} is given by $\centJTL$, which is generated by the image of $\LQGodd$ in~\eqref{eq:EE-chi-eta} and $\f^L$ and $\e^L$.
\end{Cor}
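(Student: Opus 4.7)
The plan is to obtain the corollary as an essentially immediate consequence of Thm.~\ref{Thm:US-JTL-iso} combined with the description of $\centJTL$ recalled in Sec.~\ref{subsec:cent}. The main point is that the centralizer of an associative subalgebra $A \subset \Endo_{\oC}(\Hilb_N)$ depends only on $A$ itself, not on any particular generating set.

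First, I would note that by definition
\[
\cent_{U \interfin{N}} = \Endo_{U \interfin{N}}(\Hilb_N),
\]
and this algebra is uniquely determined by the image of $U \interfin{N}$ inside $\Endo_{\oC}(\Hilb_N)$ under the representation given by~\eqref{gl-stand-basis}--\eqref{sp-stand-basis}. By Thm.~\ref{Thm:US-JTL-iso}, this image coincides with $\repgl\bigl(\JTL{N}\bigr)$. Therefore
\[
\cent_{U \interfin{N}} \;=\; \Endo_{\repgl(\JTL{N})}(\Hilb_N) \;=\; \centJTL,
\]
where the last equality is the definition of $\centJTL$ used throughout the paper.

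Second, I would invoke the explicit description of $\centJTL$ recalled in Sec.~\ref{subsec:cent} (and established in~\cite{GRS1}): $\centJTL$ is the subalgebra of $\Endo_{\oC}(\Hilb_N)$ generated by the image of $\LQGodd$ under~\eqref{eq:EE-chi-eta} together with the divided powers $\f^L$ and $\e^L$. Combining the two identifications yields the statement of the corollary.

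There is really no obstacle here; the content of the corollary has already been proved. All the heavy lifting was done in the theorem (matching the image of $U\interfin{N}$ with $\repgl(\JTL{N})$ via the radical/socle subquotient analysis) and in the companion paper~\cite{GRS1} (identifying $\centJTL$ with the algebra generated by $\LQGodd$, $\f^L$ and $\e^L$). The only care needed is to emphasize that the centralizer is an invariant of the image, so that switching generating sets from the $\JTL{N}$-generators $e_j$ to the Lie-algebra generators of $\interfin{N}$ does not change the commutant.
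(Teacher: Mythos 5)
Your proposal is correct and is exactly the argument the paper intends: the corollary is stated as an immediate consequence of Thm.~\ref{Thm:US-JTL-iso}, since the centralizer depends only on the image of the algebra in $\Endo_{\oC}(\Hilb_N)$, which the theorem identifies with $\repgl\bigl(\JTL{N}\bigr)$, whose centralizer is the description of $\centJTL$ recalled from~\cite{GRS1} in Sec.~\ref{subsec:cent}.
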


Note that in this case the ``Howe-dual'' algebra is not described as the enveloping of a Lie algebra of bilinears,  in contrast to the anti-periodic case. It is rather a $\oZ_2$-graded algebra due to presence of the $\gl(1|1)$ subalgebra. It is easier to show for the open case, where the $\TL{N}$ is the enveloping algebra of a non-semisimple Lie algebra (having maximum semisimple Lie subalgebra $\ssp_{2L-2}$ and its radical is smaller than in the $\JTL{N}$ case) and the action of its centralizer $\LQGi$  is generated by the semidirect product $\ssp_2\ltimes \gl(1|1)$ of the Lie algebras $\ssp_2$ and $\gl(1|1)$.

\subsection{Normal ordered basis}\label{sec:nord}
We introduce here a few  formalities which will be useful to analyze  the $N\to\infty$ limits of our algebras. Recall that the excitations over the  space of  ground states
are generated by  the
creation modes $\chi^{\dagger}_{p}$ and $\eta_p$, with $p\in(0,\pi)$, while the annihilation modes are  $\chi_{p}$ and $\eta^{\dagger}_p$. We can thus introduce normal ordering prescriptions like $\nord{\chi_{p}\chi^{\dagger}_{p'}}\equiv-\chi^{\dagger}_{p'}\chi_{p}$ and  $\nord{\eta^{\dagger}_{p}}\eta_{p'}\equiv-\eta_{p'}\eta^{\dagger}_{p}$, {\it etc.} Then, a normal ordered basis in $\gl_{2L}\oplus\oC\one$ (a trivial central extension) is given by $\Egl'_{m,n}=\nord{\Egl_{m,n}}$, where the elementary matrices $\Egl_{m,n}$, for $0\leq m,n\leq 2L-1$, are introduced above in~\eqref{gl-stand-basis}. The normal ordering affects only  one half of the Cartan elements: $\Egl'_{n,n} =  \Egl_{n,n}$ while $\Egl'_{L+n,L+n} =  \Egl_{L+n,L+n} - \one$, where now $0\leq n \leq L-1$ and $\one$ is the identity in $\Endo\, \Hilb_{2L}$.
The defining relations in the normally ordered basis are slightly changed -- they have the central element term:
% for the new basis in $\gl_{2L}$ we now have
\begin{equation}\label{rel-gln-nord}
[\Egl'_{m,L+n},\Egl'_{L+k,l}] = \delta_{n,k}\Egl'_{m,l} - \delta_{m,l}\Egl'_{L+k,L+n} - \delta_{n,k}\delta_{m,l}\one,
\end{equation}
while all the other commutators have the same form as in the standard basis~\eqref{rel-gln-stand}.

We also consider central extensions of the  Lie subalgebras $\ssp_{2L-2}\subset\gl_{2L}$ and $\interfin{2L}\subset\gl_{2L}$ by the identity $\one$. So, we introduce the Lie algebras \begin{equation}\label{def-prime-alg}
\ssp'_{2L-2}=\ssp_{2L-2}\oplus\oC\one\qquad \text{and}\qquad \interfin{N}'=\interfin{N}\oplus\oC\one.
\end{equation}
Then, we choose the normal ordered basis in their Cartan subalgebras as
\begin{equation}\label{sp-nord-basis}
\Asp'_{n,n} \equiv \nord{\Asp_{n,n}} = \Egl'_{n,n} - \Egl'_{L+n,L+n} =  \Asp_{n,n} + \one,
\end{equation}
 see notations in~\eqref{sp-stand-basis}.
 Similarly to~\eqref{rel-gln-nord}, the commutators $[\Bsp_{m,n},\Csp_{k,l}]$ have the central element term in the normally ordered basis in $\ssp'_{2L-2}$ while other relations are not changed.

Note finally that the weights of highest-weight $\ssp'_{2L}$-modules are now different due to the different choice of Cartan elements (or simple roots). For example, the vacuum irreducible module $\AIrrTL{1}{1}$  has now  the weight $(0,0,\dots,0)$, the next one $\AIrrTL{2}{-1}$ corresponds to the weight $(1,0,\dots,0)$, {\it etc.}

%%%%%%%%%%%%%%%%%%%%%%%%%%%%%%%%%%%%%%

\section{Scaling limit}\label{sec:scal-lim}

We are now interested in studying how the algebraic properties of the finite spin chain relate with those of its continuum -- or scaling --  limit. Our ultimate purpose is to extract from the $\gl(1|1)$ case lessons that can be used in other, more complicated situations. We shall for this purpose, introduce and discuss in the next section the new concept of interchiral algebra and its relationship with fields in a logarithmic conformal field theoretic setting. For now however, we do not discuss field theory \textit{per se}, and define and discuss the scaling limit of the spin chains using fermionic modes.

In this section, we consider mostly the periodic model and discuss briefly the anti-periodic case only at the end in Sec.~\ref{sec:scal-lim-tw}.
The limit of $\JTL{N}$ algebras is constructed in Sec.~\ref{sec:scaling} and~\ref{commrels}. We discuss  in Sec.~\ref{sec:bimod-scal-lim} the structure of the bimodule over the two commuting algebras, $\LQGodd$ and the scaling limit of $\JTL{N}$ or $U\interfin{N}$ algebras denoted by $U\interinfin$.
In Sec.~\ref{sec:interinfin-sm}, we also describe  simple  modules over $U\interinfin$ that appear in the space $\Hilb$ of scaling states.
Finally, the full generating functions of energy levels are computed in Sec.~\ref{sec:gen-func}.

\newcommand{\GRSscallim}{4.3~}

\subsection{The full scaling limit of the closed $\gl(1|1)$ spin-chains}\label{sec:scaling}
In this section, we recall the construction of the scaling limit~\cite{GRS1} of
 the (anti)periodic spin-chain that gives a Conformal Field Theory model. We then use  the  Lie algebraic reformulation of the (images of) $\JTL{N}$ and $\JTL{N}^{tw}$  to construct the full scaling limit of these algebras. An essential ingredient  in the general definition
 of the scaling limit   is
 the low-lying eigenstates of the Hamiltonian~$H$. In order to study the
 action of JTL elements on these eigenstates in the limit $L\to\infty$
 (recall $N=2L$) we first truncate each $\Hilb_{2L}$, keeping only
 eigenspaces up to an energy level $M$, for each positive number $M$.
 Each such truncated space turns out to be finite-dimensional in the
 limit, {\it i.e.}, it depends on $M$ but not $L$. Then, keeping matrix
 elements of JTL elements that correspond to the action only within
 these truncated spaces of ``scaling'' states, we obtain well-defined
 operators in the limit $L\to\infty$. The corresponding operators
 acting on all scaling states of the CFT can be finally obtained (if
 they exist) in the second limit $M\to\infty$.
A bit more formally, \textit{the scaling limit} denoted
simply by `$\mapsto$' is defined as a limit over graded spaces of coinvariants with
respect to smaller and smaller subalgebras in the creation modes
algebra, see details in  Sec.~\GRSscallim from~\cite{GRS1}. Meanwhile, in the case of our $\gl(1|1)$ spin-chains we can actually give a  clearer definition of the scaling limit (of algebras and their modules) by means of inductive systems, see App.~\AppLim.

\subsubsection{The Clifford algebra scaling limit}\label{Cliff-def}
We first recall~\cite{GRS1} the scaling limit of the Clifford algebra $\Clif{4L}$ generators introduced in Sec.~\ref{sec:ham-spec} -- the $\chi$ and $\eta$
fermions --
and then define a representation of $\glinf$ which contains the full scaling limit of JTL.
The lattice fermions appropriately rescaled coincides in the scaling limit with the symplectic-fermions modes:
\begin{eqnarray}\label{eq:sferm-lat-def}
\begin{split}
\sqrt{m}\,\chi_p\mapsto\sfermm_m,\qquad\sqrt{m}
\,\bar{\eta}^\dagger_p\mapsto\sfermp_m,\qquad\sqrt{m}\,\bar{\chi}_p\mapsto\bsfermm_m,
\qquad-\sqrt{m}\,\eta^\dagger_p\mapsto\bsfermp_m,\\
-\sqrt{m}\,\bar{\eta}_p\mapsto\sfermm_{-m}, \qquad\sqrt{m}\,\chi^\dagger_p\mapsto\sfermp_{-m},
\qquad\sqrt{m}\,\eta_p\mapsto\bsfermm_{-m}, \qquad\sqrt{m}\,\bar{\chi}^\dagger_p\mapsto\bsfermp_{-m},
\end{split}
\end{eqnarray}
where we consider any finite integer $m>0$ and set $p=m\pi/L$ and  $\bbeta^{(\dagger)}_p=\eta^{(\dagger)}_{\pi-p}$, and
$\bbchi^{(\dagger)}_p=\chi^{(\dagger)}_{\pi-p}$, while  the scaling limit for zero modes is
\begin{eqnarray}\label{eq:sferm-zero-def}
\begin{split}
\sqrt{L/\pi}\,\eta_0&\mapsto \sfermm_0=\bsfermm_0,
&\qquad
 \sqrt{L/\pi}\,\chi_0^{\dagger}&\mapsto\sfermp_0=\bsfermp_0,\\
\sqrt{\pi/L}\,\eta^{\dagger}_{0}&\mapsto i\phip_0,& \qquad
\sqrt{\pi/L}\,\chi_{0}&\mapsto -i\phim_0.
\end{split}
\end{eqnarray}
Indeed, having this limit we obtain
 that the scaling limit of the Hamiltonian $H=\sum_{j=1}^N e_j$ gives the zero mode of the stress energy tensor $T(z)+\bar{T}(\bz)$ in the symplectic fermions theory:
\begin{equation}\label{L0-def}
\ffrac{L}{2\pi}\Bigl(H+\ffrac{4L}{\pi}\Bigr)\mapsto  \sum_{m\in\oZ}
\nord{\sfermp_{-m}\sfermm_m}+
\bigl(\psi\to\bar{\psi}\bigr)-\ffrac{c}{12} = L_0+\bar{L}_0-\ffrac{c}{12},
\end{equation}
where we use the  fermionic normal ordering prescription introduced in Sec.~\ref{sec:nord} which takes the form~\cite{Kausch}
\begin{equation}
\nord{\sferm^{\alpha}_{m}\sferm^{\beta}_n}
=\begin{cases}
\sferm^{\alpha}_{m}\sferm^{\beta}_n&\textit{for}\;\; m<0,\\
-\sferm^{\beta}_n\sferm^{\alpha}_{m}&\textit{for}\;\; m>0.
\end{cases}
\end{equation}

We note that the fermionic operators act on the  space $\Hilb$ of scaling states, which are limits of low-energy eigenvectors\footnote{It is better actually to say `root-vectors' because the Hamiltonian is not diagonalizable on $\Hilb_N$ and we have a basis where  relations such as $(H-\lambda\one)^2 v = 0$, for $v\in\Hilb_{N}$, are satisfied.} of the Hamiltonian $H$, in the sense of construction in App.~\AppLim. The ground subspace in $\Hilb$ has the same structure~\eqref{diag:space-gr-st} as in the finite chain $\Hilb_{N}$, where one should replace $\chi_0^{\dagger}$ and $\eta_0$
by the zero modes $\sfermp_0$ and $\sfermm_0$, respectively.
 The module structure (over the infinite-dimensional Clifford algebra) on $\Hilb$  containing all the excitations over the ground states
$\lvac$, $\sfermp_0\lvac$, $\sfermm_0\lvac$ and $\sfermm_0\sfermp_0\lvac=\vac$ is thus generated by the free action of the
creation modes $\sferm^{1,2}_n$ and $\bsferm^{1,2}_m$, for integers $n,m<0$.
The annihilation modes, or positive modes, act on the vacuum states in the usual way.

We have also shown in our first paper~\cite{GRS1} that the limit~\eqref{eq:sferm-lat-def} allows one to obtain all left and right Virasoro modes $L_n$ and $\bar{L}_n$ in the symplectic fermions representation
\begin{eqnarray}\label{Ln-def}
L_n=\sum_{m\in\oZ} \nord{\sfermp_{n-m}\sfermm_m},\qquad
\bar{L}_n=\sum_{m\in\oZ} \nord{\bsfermp_{n-m}\bsfermm_m},\qquad n\in\oZ,
\end{eqnarray}
 as the scaling limit of particular JTL elements, see also more details below in  Sec.~\ref{sec:interch-modes}. A comment is necessary about the infinite sums in the definition of $L_n$'s and $\bar{L}_n$'s. Note that the Hilbert space $\Hilb$ has bi-grading by the pair $(L_0,\bar{L}_0)$, where each homogeneous or root subspace $\Hilb^{n,m}$ for positive $n$ and $m$ is spanned by states of the form $\sferm^{1,2}_{n_1}\dots\sferm^{1,2}_{n_k}\,\bsferm^{1,2}_{m_1}\dots\bsferm^{1,2}_{m_l}$ acting on any ground state such that $\sum_{i=1}^k n_i = -n$ and $\sum_{j=1}^{l}m_l = -m$. The space $\Hilb$ then can be described as the direct sum $\oplus_{n,m}\Hilb^{n,m}$ which means that any state $v\in\Hilb$ can be written as $v=\sum_{n,m}v^{(n,m)}$, with $v^{(n,m)}\in\Hilb^{n,m}$, and there exist positive integers $N$ and $M$ such that $v^{(n,m)}=0$ for any $n>N$ or $m>M$. This is what we call \textit{the finite-energy and finite-spin} (or simply \textit{scaling}) \textit{states space} and it is
 a dense\footnote{See the discussion of completions and topology in Sec.~\ref{sec:interch-interinfin} after~\eqref{eq:cHilb}.} subspace in the the direct product of the root subspaces $\Hilb^{n,m}$.
 Therefore, the formally infinite sums in~\eqref{Ln-def}  actually  reduce to finite sums in the scaling states space $\Hilb$.
  This is standard, and discussed for instance in~\cite{Ottesen}.

\subsubsection{The symplectic fermion representation of $\glinf$}\label{sec:spinf}
We recall that the bilinears in the $\chi$-$\eta$ fermions give a standard basis $\Egl_{m,n}$, the usual elementary matrices, in the Lie algebra
$\gl_{2L}$, see Sec.~\ref{Howe-ss} for details. Due to the fact that low-lying eigenstates are generated by the Clifford generators with their momenta $p$ close to $0$ or $\pi$ (in the large $L$ limit), we keep in the scaling limit all bilinears in the fermions with any of these momenta.
Note that basis elements~\eqref{gl-stand-basis} of $\gl_{2L}$ can be divided into $4$ blocks with the $(m,n)$ indices running from $0$ to $L-1$. Because in our limit we keep only those indices close to $0$ or to $L$, then each block is further divided into $4$ infinite blocks in the scaling limit.
 This  limit of
$\gl_{2L}$ can be schematically described in terms of its standard generators as
\begin{equation}
%\begin{matrix}
\Bigl[\; \Egl_{m,n} =  \chid_p\,\chi_q\;\Bigr]_{0\leq p,q\leq L-1} \qquad \xrightarrow{\;L\to\infty\;} \qquad
 \begin{matrix}
 \chid_0\,\chi_0 & \chid_0\chi_{\step} & \ldots & \chid_{0}\,\bbchi_{2\step} & \chid_{0}\,\bbchi_{\step} \\
  \chid_{\step}\,\chi_{0} &\chid_{\step}\,\chi_{\step}   &  \ldots  &   \chid_{\step}\,\bbchi_{2\step} & \chid_{\step}\,\bbchi_{\step}   \\
 \vdots & \vdots & \ddots & \vdots & \vdots\\
   \bbchid_{2\step}\,\chi_0  & \bbchid_{2\step}\,\chi_{\step}  &  \ldots    &   \bbchid_{2\step}\,\bbchi_{2\step}   &  \bbchid_{2\step}\,\bbchi_{\step}   &\\
  \bbchid_{\step}\,\chi_0  & \bbchid_{\step}\,\chi_{\step}  &  \ldots    &   \bbchid_{\step}\,\bbchi_{2\step}   &  \bbchid_{\step}\,\bbchi_{\step}   &
\end{matrix}
\end{equation}
 for the left-top block in the matrix algebra, and similarly for other three blocks.
Here, we set for the momenta variables $p=m\step$, $q=n\step$, with $\step=\pi/L$, and
%$0\leq m,n\leq L-1$.
these momenta on the right-hand side are formal variables -- one should of course make substitutions~\eqref{eq:sferm-lat-def} and~\eqref{eq:sferm-zero-def}.
We emphasize that each item of the  block on the right is an infinite elementary matrix having identity at the corresponding position and zeros otherwise. These elementary matrices (for all four blocks) define a representation of an infinite-dimensional Lie algebra $\glinf$ on the space $\Hilb$ which we call the symplectic-fermion representation of~$\glinf$.

  A comment is necessary about the exact definition of the algebra $\glinf$. Any element of  $\glinf$ is an infinite matrix  with a  \textit{finite} number of non-zero elements,
 {\it i.e.}, it is represented on $\Hilb$ by a finite sum of the generators --  the elementary matrices. The  commutation relations in this algebra are  given by corresponding limits of~\eqref{rel-gln-stand} and they correspond to usual basis and relations~\cite{Kac-book} in $\glinf$ after appropriate rearranging  rows and columns,  see more precise statements in App.~\Appglinf.

  We note that formally the Virasoro generators $L_n$'s and $\bar{L}_n$'s from~\eqref{Ln-def} do not belong to such defined $\glinf$ (recall there exist many versions of $\gl_{\infty}$ algebras~\cite{Kac-book}) but belong, for $n\ne0$, to its completed version $\bglinf$, where there is now
 a possibly infinite number of non-zero elements, but the matrix still has a finite number of non zero diagonals, see details in App.~\Appglinf.
The only problem is with $L_0$ and $\bar{L}_0$, because of  the well-known central anomaly that appears in CFT. In order to get a convenient algebra containing the $L_0$ and $\bar{L}_0$ operators we should actually take the scaling limit of the family of algebras $\{\gl_{2L}\}$ in their normally ordered basis $\Egl'_{m,n}$ introduced in Sec.~\ref{sec:nord}.
 Using then~\eqref{eq:sferm-lat-def}-\eqref{eq:sferm-zero-def},
 % and the normally ordered basis,
 we  obtain finally
 % from the right matrix above
  the generators (after rescaling by $\pm\sqrt{mn}$) of another infinite-dimensional Lie algebra we call $\glinfc$
  given by the following list of bilinears
\begin{multline}\label{gl-inf-2}
\glinfc =\langle\nord{\sfermp_{m}\sfermm_{n}},\quad \nord{\bsfermp_{m}\sfermm_{n}},\quad \nord{\sfermp_{m}\bsfermm_{n}},\quad \nord{\bsfermp_{m}\bsfermm_{n}},
\quad \nord{\sfermp_{m}\phim_{0}},\quad \nord{\bsfermp_{m}\phim_{0}},\\
  \nord{\phip_0\sfermm_{n}},\quad  \nord{\phip_0\bsfermm_{n}},\quad
 \nord{\phip_{0}\phim_{0}},\quad m,n\in\oZ\rangle,
\end{multline}
which are now all normally ordered (note that some of these bilinears, $\sfermm_{-n}\sfermp_{n}$, are not elementary matrices.) This Lie algebra turns out to be a central extension of the $\glinf$, {\it i.e.}, it is $\glinfc=\glinf\oplus \oC\one$, as a vector space. The central element $\one$ is obtained by the scaling limit of the identity in the (trivial) central extension $\gl_{2L}\oplus\oC\one$ considered in Sec.~\ref{sec:nord}, see also a formal inductive limit construction in  App.~\AppLim. Commutation relations in this algebra are obtained as limits of the relation~\eqref{rel-gln-nord} which fixes the central charge of $\glinfc$.  The  contribution of the central term in commutators of the new generators can be also obtained using the two-cocycle~\cite{Kac-book} $c(\Egl'_{m,n},\Egl'_{m',n'})$ and it coincides with the vacuum expectation of the commutator $[\Egl'_{m,n},\Egl'_{m',n'}]$ as it should. Now, all the left-right Virasoro operators on $\Hilb$ are embedded  into the completed Lie algebra $\glinfco$ including  $L_0$ and $\bar{L}_0$. We will actually work below with this Lie algebra $\glinfco$ containing the Virasoro generators and not with~$\glinf$.

\subsubsection{The scaling limit of the JTL algebras, $\spinf$, and $\interinfin$}\label{sec:interinfin}

The scaling limit of the (spin-chain representations of) $\JTL{N}$ algebras can then be taken using the Lie algebraic description given in Thm.~\ref{Thm:US-JTL-iso}. It was shown there that the spin-chain representation  of $\JTL{N}$ is given by a representation of the enveloping algebra $U \interfin{N}$.  We thus need only to take the scaling limit of the central extension $\interfin{N}'=\interfin{N}\oplus\oC\one$ introduced in Sec.~\ref{sec:nord} which is straightforward since there is an embedding of $\interfin{N}$ into the $\gl_{2L}$, see Dfn.~\ref{interfin-def}.  In particular, the  Lie subalgebra $\ssp'_{2L-2}\subset\interfin{N}'$ from~\eqref{def-prime-alg} generated by normally ordered elements in~\eqref{sp-nord-basis} and~\eqref{sp-stand-basis} has the limit generated by the following quadratic expressions
\begin{equation}\label{spinf-gen}
J_{\alpha\beta}\nord{\sferm^{\alpha}_{m}\sferm^{\beta}_{n}},\quad
J_{\alpha\beta}\nord{\bsferm^{\alpha}_{m}\bsferm^{\beta}_{n}},\quad
 S_{\alpha\beta} \psi^{\alpha}_m \bar{\psi}^{\beta}_n,\qquad m,n\in\oZ/\{0\},
\end{equation}
where $J_{\alpha\beta}$ is such that $J_{\alpha\beta}J^{\beta\gamma}=\delta^{\gamma}_{\alpha}$ and we use  the symplectic form $J^{12}=-J^{21}=1$, and $S_{\alpha\beta}$ is the symmetric form $S_{12}=S_{21}=1$, $S_{11}=S_{22}=0$. We  show below that the vector space spanned by finite linear combinations of these operators is closed under taking commutators.
The infinite-dimensional Lie algebra they generate coincides with the central extension of the Lie algebra $\spinf\subset\glinf$.

Further, we introduce  $\enrg_{m,n}$, $\strt_{m,n}$, and $\bstrt_{m,n}$ as the
operators
\begin{equation}\label{prop:sympl-ferm-interLie}
\enrg_{m,n} =S_{\alpha\beta} \psi^{\alpha}_m \bar{\psi}^{\beta}_n,\quad
\strt_{m,n}=J_{\alpha\beta}{\sferm^{\alpha}_{m}\sferm^{\beta}_{n}},\quad \bstrt_{m,n}=J_{\alpha\beta}{\bsferm^{\alpha}_{m}\bsferm^{\beta}_{n}}, \qquad\text{with}\quad n,m\in\oZ,
\end{equation}
{\it i.e.}, as in~\eqref{spinf-gen} but without normal ordering and including combinations involving the zero fermionic modes $\sfermm_0$ and $\sfermp_0$ (note that there are no bilinears here involving conjugate modes $\phip_0$ and $\phim_0$). Actually, we have the symmetry $\strt_{m,n}=\strt_{n,m}$ and $\bstrt_{m,n}=\bstrt_{n,m}$. The operators in~\eqref{prop:sympl-ferm-interLie} together with the central term $\one$ give now generators of the scaling limit of $\interfin{N}'$ which we denote simply by $\interinfin\subset\glinfc$.  We conclude that  the corresponding action of $U\interinfin$ is  \textit{the scaling limit} of the (spin-chain images of the) JTL algebras. We give in App.~\AppLim~a more formal construction of this scaling-limit algebra as a direct/inductive limit $\varinjlim_L \repgl\bigl(\JTL{2L}\bigr)$ of the finite-dimensional spin-chain representations $\repgl$ of the JTL algebras.

\subsection{Commutation relations in $\interinfin$}\label{commrels}
 We next obtain commutation relations
between the generators  $\enrg_{n,m}$,
 $\strt_{m,n}$ and $\bstrt_{m,n}$ of $\interinfin$.
Using a straightforward calculation and identities like $J_{\alpha\beta}J_{\gamma\delta}J^{\beta\gamma}  = J_{\alpha\delta}$, we first obtain
\begin{multline}\label{interch-comm-rel-TT}
[\strt_{m,n},\strt_{k,l}]=[J_{\alpha\beta}\sferm^{\alpha}_{m}\sferm^{\beta}_{n}, J_{\gamma\delta}\sferm^{\gamma}_{k}\sferm^{\delta}_{l}] \\
= J_{\alpha\beta} n \bigl(\delta_{n+k,0}\, \psi^{\alpha}_m \psi^{\beta}_l +
\delta_{n+l,0} \,\psi^{\alpha}_m \psi^{\beta}_k  \bigr)
+ J_{\alpha\beta} m \bigl(\delta_{m+k,0}\, \psi^{\alpha}_l \psi^{\beta}_n +
\delta_{m+l,0} \,\psi^{\alpha}_k \psi^{\beta}_n  \bigr)
\end{multline}
and a similar expression with $\sferm^{\alpha}_n\to\bsferm^{\alpha}_n$.
We then have
\begin{align}
[\enrg_{m,n} , \strt_{k,l} ] &= S_{\alpha\beta}\, m \bigl(\delta_{k+m,0}\, \sferm^{\alpha}_l \bsferm^{\beta}_n +
\delta_{m+l,0} \,\sferm^{\alpha}_k \bsferm^{\beta}_n  \bigr),\label{interch-comm-rel-1}\\
%= m \bigl(\delta_{k+m,0}\, \enrg_{l,m} +
%\delta_{m+l,0} \, \enrg_{k,m}  \bigr),\\
[\enrg_{m,n} , \bstrt_{k,l} ] &= S_{\alpha\beta}\, n \bigl(\delta_{k+n,0}\, \sferm^{\alpha}_m \bsferm^{\beta}_l +
\delta_{n+l,0} \,\sferm^{\alpha}_m \bsferm^{\beta}_k  \bigr),\label{interch-comm-rel-2}
%= n \bigl(\delta_{k+n,0}\, \enrg_{m,l} +
%\delta_{n+l,0} \, \enrg_{m,k}  \bigr),
\end{align}
and the commutators
\begin{equation}\label{interch-comm-rel}
[\enrg_{n,m} , \enrg_{k,l} ] = J_{\alpha\beta}\bigl(l\,\delta_{l+m,0}\, \psi^{\alpha}_n \psi^{\beta}_k +
k\,\delta_{k+n,0} \,\bar{\psi}^{\alpha}_m \bar{\psi}^{\beta}_l  \bigr) - 2kl\,\delta_{k+n,0}\delta_{l+m,0},
\end{equation}
where we use repeatedly the identities $S_{\alpha\beta}J_{\gamma\delta}J^{\alpha\gamma}  = S_{\beta\delta}$ and
$S_{\alpha\beta}J_{\gamma\delta}J^{\alpha\delta}  = -S_{\beta\gamma}$.

Using the definition of $\enrg_{n,m}$, $\strt_{m,n}$ and $\bstrt_{m,n}$ given in~\eqref{prop:sympl-ferm-interLie}, we see that the relations~\eqref{interch-comm-rel-TT} and~\eqref{interch-comm-rel} together with~\eqref{interch-comm-rel-1} and~\eqref{interch-comm-rel-2} prove the following proposition which can be considered as an alternative definition of the $\interinfin$ Lie algebra.
\begin{prop}\label{prop:interch-rel}
The  Lie algebra $\interinfin$ has $\enrg_{n,m}$, $\strt_{k,l}$ and $\bstrt_{r,s}$, with $n,m,k,l,r,s\in\oZ$, as its basis elements,  with the defining relations
\begin{align}
[\enrg_{n,m} , \enrg_{k,l} ] &= l\delta_{l+m,0}\, \strt_{n,k} +
k\,\delta_{k+n,0} \bstrt_{m,l}  - 2kl\,\delta_{k+n,0}\delta_{l+m,0},\\
[\enrg_{m,n} , \strt_{k,l} ] &= m \bigl(\delta_{k+m,0}\, \enrg_{l,n} +
\delta_{m+l,0} \, \enrg_{k,n}  \bigr),\\
[\enrg_{m,n} , \bstrt_{k,l} ] &
= n \bigl(\delta_{k+n,0}\, \enrg_{m,l} +
\delta_{n+l,0} \, \enrg_{m,k}  \bigr),\\
[\strt_{m,n},\strt_{k,l}] &= n \bigl(\delta_{n+k,0}\, \strt_{m,l}+
\delta_{n+l,0} \,\strt_{m,k} \bigr) + m \bigl(\delta_{m+k,0}\, \strt_{l,n} +
\delta_{m+l,0} \,\strt_{k,n}  \bigr),\\
[\bstrt_{m,n},\bstrt_{k,l}] &= n \bigl(\delta_{n+k,0}\, \bstrt_{m,l}+
\delta_{n+l,0} \,\bstrt_{m,k} \bigr) + m \bigl(\delta_{m+k,0}\, \bstrt_{l,n} +
\delta_{m+l,0} \,\bstrt_{k,n}  \bigr),\\
[\strt_{m,n},\bstrt_{k,l}] &= 0.
\end{align}
\end{prop}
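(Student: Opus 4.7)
My plan is to derive each of the six commutation relations by direct computation from the fermion-bilinear expressions~\eqref{prop:sympl-ferm-interLie}, using only the anticommutators~\eqref{sferm-rel} together with the vanishing $\{\sferm^{\alpha}_m,\bsferm^{\beta}_{m'}\}=0$. Most of the work is already contained in~\eqref{interch-comm-rel-TT}--\eqref{interch-comm-rel}: once one observes that, on the right-hand sides of those identities, each surviving bilinear is by definition one of $\strt_{i,j}$, $\bstrt_{i,j}$, or $\enrg_{i,j}$, the displayed equations translate verbatim into the $[\strt,\strt]$, $[\enrg,\strt]$, $[\enrg,\bstrt]$, and $[\enrg,\enrg]$ relations of the proposition.

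Two relations remain. The commutator $[\bstrt_{m,n},\bstrt_{k,l}]$ is obtained by repeating the derivation of~\eqref{interch-comm-rel-TT} under the formal substitution $\sferm^{\alpha}\to\bsferm^{\alpha}$; the anticommutators~\eqref{sferm-rel} take the same form for the antichiral modes, and each of the four contraction terms reproduces with $\strt$ replaced by $\bstrt$. For $[\strt_{m,n},\bstrt_{k,l}]$, the fermionic Leibniz identity $[AB,CD]=A\{B,C\}D-AC\{B,D\}+\{A,C\}DB-C\{A,D\}B$ expresses the commutator as a sum of four terms, each containing a single $\{\sferm,\bsferm\}$ bracket; all four vanish and so does the full commutator. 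The central term $-2kl\,\delta_{k+n,0}\delta_{l+m,0}$ in the $[\enrg,\enrg]$ relation is the only non-bilinear output and arises from a doubly-contracted contribution whose scalar coefficient one evaluates by direct enumeration in the $(\alpha,\beta)\in\{1,2\}$ basis to be $S_{\alpha\beta}S_{\gamma\delta}J^{\gamma\alpha}J^{\delta\beta}=-2$.

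Finally I would verify that $\{\enrg_{n,m}\}\cup\{\strt_{k,l}\}\cup\{\bstrt_{r,s}\}$, together with the central element $\one$, forms a basis of $\interinfin$. This follows from the construction in Subsection~\ref{sec:interinfin}: $\interinfin$ is there defined as the scaling limit of the central extension $\interfin{N}'$ of Definition~\ref{S_N-def}, and the bilinears~\eqref{prop:sympl-ferm-interLie} are precisely the images of the $\Asp_{m,n}$, $\Bsp_{m,n}$, $\Csp_{m,n}$ generators of $\ssp'_{2L-2}$ together with the cross-strip $\Egl$ generators of~\eqref{interfin-def}, under the substitutions~\eqref{eq:sferm-lat-def}--\eqref{eq:sferm-zero-def}; these generators span $\interfin{N}'$ by construction, so their images span $\interinfin$. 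No conceptual obstacle is anticipated; the main effort is the disciplined bookkeeping of fermionic signs when reordering a product $\sferm^{\alpha}_n\sferm^{\gamma}_k\bsferm^{\beta}_m\bsferm^{\delta}_l$ inside $[AB,CD]$ and the consistent application of the index identities $J_{\alpha\beta}J^{\beta\gamma}=\delta^{\gamma}_\alpha$, $S_{\alpha\beta}J_{\gamma\delta}J^{\alpha\gamma}=S_{\beta\delta}$, and $S_{\alpha\beta}J_{\gamma\delta}J^{\alpha\delta}=-S_{\beta\gamma}$.
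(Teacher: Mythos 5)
Your proposal is correct and follows essentially the same route as the paper: the proposition is proved there by exactly the direct computation of the commutators of the fermion bilinears in~\eqref{interch-comm-rel-TT}--\eqref{interch-comm-rel}, using the anticommutators~\eqref{sferm-rel} and the index identities for $J$ and $S$, and then recognizing the surviving bilinears as $\strt$, $\bstrt$, $\enrg$. Your additional explicit checks (the vanishing of $[\strt_{m,n},\bstrt_{k,l}]$ via the fermionic Leibniz rule, the evaluation of the central coefficient $-2$, and the identification of the basis from the scaling-limit construction of $\interinfin$) are all consistent with what the paper leaves implicit.
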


We note that it is straightforward to check the Jacobi identitites.

%\subsubsection{The scaling limit of the JTL centralizers}
\subsection{Bimodule in the scaling limit and the centralizer of $\interinfin$}\label{sec:bimod-scal-lim}

Using a formal construction of  inverse/projective limits, we define  \textit{the scaling limit of the JTL centralizers} $\centJTL$ in App.~\AppLim. The limit is an infinite dimensional associative algebra which we identify with a quotient of $\LQGodd$ and we denote this quotient by $\rho\bigl(\LQGodd\bigr)$, see~\eqref{app:inv-lim-U}.
Fermionic expressions for the generators in the scaling limit of the centralizers were  computed in our first paper~\cite{GRS1}:
% Finally, note that
\begin{equation}\label{scal-lim-LQGodd}
  \EE n =
\left[\sum_{m>0}
\Bigl(\ffrac{\sfermp_{m}\sfermp_{-m}}{m} -
\ffrac{\bsfermp_{m}\bsfermp_{-m}}{m}\Bigr)\right]^n\psi_0^2,\qquad
\FF n =
\left[\sum_{m>0}
\Bigl(\ffrac{\sfermm_{m}\sfermm_{-m}}{m} -
\ffrac{\bsfermm_{m}\bsfermm_{-m}}{m}\Bigr)\right]^n\psi_0^1,
\end{equation}
with the representation of the Cartan element $\h$ as
\begin{equation}\label{scal-lim-h}
\h =- i/2 \bigl(\sfermp_0\phim_0 + \sfermm_0\phip_0\bigr) + \sum_{m>0}
\ffrac{1}{m}\bigl(\sfermp_{-m}\sfermm_{m} + \sfermm_{-m}\sfermp_{m} + \bsfermp_{-m}\bsfermm_{m} + \bsfermm_{-m}\bsfermp_{m}\bigr)
\end{equation}
while the generator $\K = (-1)^{2\h}$.
% Note finally  that the central term of $\interinfin$ belongs also to the center of $\LQGodd$.

It is obvious that both $\glinfc$ and its subalgebra $\interinfin$ act on the space $\Hilb$ of scaling states.
To study the symmetry algebra of
%the interchiral algebra $\interch$ (or
the Lie algebra $\interinfin$ on this space, we first note that
using   identities like
\begin{equation}
\Bigl[\sfermp_m,\sfermm_{|m|}\sfermm_{-|m|}\Bigr]=|m|\sfermm_{m},\qquad
\Bigl[\sfermm_m,\sfermp_{|m|}\sfermp_{-|m|}\Bigr]=-|m|\sfermp_{m},\qquad
m\in\oZ,
\end{equation}
it is straightforward to check that the generators~\eqref{prop:sympl-ferm-interLie} commute with the $\LQGodd$ action given in~\eqref{scal-lim-LQGodd} and~\eqref{scal-lim-h}.
Moreover, we prove in Thm.~\ref{thm:centr-interLie} that the centralizer of the direct-limit algebra $\interinfin$ equals the inverse limit of the centralizers for each term $\interfin{N}$. In view of the importance of this fact, we repeat it here:
\textit{the centralizer $\cent_{U\interinfin}$ of the enveloping algebra $U\interinfin$ action~\eqref{prop:sympl-ferm-interLie} on $\Hilb$
 is given by the representation of $\LQGodd$ in~\eqref{scal-lim-LQGodd} and~\eqref{scal-lim-h}.}

\medskip

We next describe  bimodule structure of $\Hilb$ over $U\interinfin$ and its centralizer  $\rho\bigl(\LQGodd\bigr)$.

%We then consider the  Virasoro content of simple modules over $\interinfin$ in Sec.~\ref{sec:interinfin-sm}.

As a consequence of the direct limit construction of $\interinfin$ and of its module $\Hilb$ in Sec.~\ref{app:JTL-lim}, we obtain that the (direct) limit of irreducible representations over $\interfin{N}$ is also irreducible with respect to the action of $\interinfin$. This is proved in Prop.~\ref{prop:lim-simples} and it in particular means that the only simple $\interinfin$-modules that appear in the direct limit space $\Hilb$ are the limits $\interM{j}$.
Further, see a note after Prop.~\ref{prop:lim-simples}, the indecomposable but reducible $\JTL{N}$-modules $\APrTL{j}$ that appear in $\Hilb_N$ as direct summands go in the scaling (or direct) limit to
%denoted by $\interP{j}$ as an
indecomposable but reducible $U\interinfin$-modules (so, $\APrTL{j}$ do not split on direct sums in the limit) and their subquotient structure has the same pattern as in Fig.~\ref{FF-JTL-mod}, or more precisely in  its  infinite analogue in Fig.~\ref{ind-chain-mod-fig} in Sec.~\ref{sec:indecomp-Smod}.
All this is not surprising as we have essentially the same centralizer $\LQGodd$ in CFT as for any $\JTL{N}$ algebra in the periodic spin-chain,
 %recall our discussion in the previous subsection or
 see Thm.~\ref{thm:centr-interLie} about the centralizing algebra for $U\interinfin$.
 The only difference from finite chains bimodules described in Sec.~\ref{ind-chain-bimod-subsec} is that the representations of $\LQGodd$ now admit the ``total spin'' (the $\h$ eigenvalues) of any integer value:
 we have a decomposition of $\Hilb$ as a module over $\LQGodd$ onto the indecomposable direct summands $\PPodd{j}$, for any $j>1$, and  with multiplicities given now by graded vector spaces which are the simple $\interM{j}$ modules over $U \interinfin$.
Then, in our space of scaling states where each state is a finite linear combination of basis ones, we can easily extend our analysis from~\cite{GRS2} and obtain the bimodule structure over the pair of commuting algebras $\bigl(U\interinfin,\LQGodd\bigr)$ as in Fig.~\ref{JTL-Uqodd-bimod}.
% Note that the centralizer of $\LQGodd$ here is an associative algebra of endomorphisms of graded vector spaces $\Endo(\Hilb)=\bigoplus_{n,m,n',m'\in\oN}\Hom(\Hilb^{n,m},\Hilb^{n',m'})$
%%, where $\Hilb_{(n)}$ are finite-dimensional homogeneous subspaces of the $n$th energy level, and
% such that they commute with $\LQGodd$.
%% Note also that, just as in the case of finite spin-chains, we do not have the double-centralizing property.
% Using our definition of the centralizer in the infinite-dimensional case,
%The proof of the bimodule structure from Fig.~\ref{JTL-Uqodd-bimod} is rather straightforward and repeats essentially steps from~\cite{GRS2}, so we do not discuss it more  here.
In the figure for the bimodule in the scaling limit, each node is now a simple $\interinfin$-module $\interM{j}$ and the towers have infinite length.
  The first question is of course about the left-right Virasoro content of these scaling limits, which we discuss below.

%As a consequence of the bimodule structure, we can study the Virasoro characters of  simple $\interinfin$-modules knowing the characters of the indecomposable direct summands.
%-- the limits of $\APrTL{j}$.
 %scaling limits of simple $\JTL{N}$-modules.

\subsection{The content of simple $\interinfin$-modules}\label{sec:interinfin-sm}
We start our analysis by discussing the left-right Virasoro $\VirN(2)=\Vir(2)\boxtimes\overline{\Vir}(2)$ content of the
simple $\JTL{N}$ modules in the scaling limit. By this, we mean more precisely  the Virasoro
representation content of the states that contribute to the direct
limit $\interM{j}$ of the simple $\JTL{N}$-modules $\AIrrTL{j}{(-1)^{j+1}}$ (recall that this limit is simple as well, as a $\interinfin$-module.)
% see the discussion about the scaling limit  above in Sec.~\ref{sec:scal-lim}.
 It is convenient for this purpose to
first evaluate the generating functions~\eqref{charform} of energy-momentum levels in  the
$\JTL{N}$ modules $\APrTL{j}$ (the direct summands of the spin-chain) we have encountered previously. This can be done by reorganizing the fermions such that $\psi^{1,2}$'s correspond to even modes of new fermions and  $\bar{\psi}^{1,2}$'s give odd modes of the fermions, see App.~\Appglinf. Then, we can produce a generating function $F_{j,(-1)^{j+1}}$ for each charged sector $\APrTL{j}$ with $S^z=j$ using the new fermions. Repeating an   exercise similar to what can be found in~\cite{Kac-book}, we obtain the character formula for $F_{j,(-1)^{j+1}}$ in the limit which reads
 %value of $K$: $e^{2iK}=e^{i\pi(j+1)}$ then $K=\pi(j+1)/2$? }
%\begin{equation}
%   F_{j,e^{2iK}}=\ffrac{q^{1/12}\bar{q}^{1/12}}{P(q)P(\bar{q})}
%     \sum_{n\in \oZ}q^{\left[j+2(n+K/\pi)\right]^{2}-1\over 8}
%    \bar{q}^{\left[j-2(n+K/\pi)\right]^{2}-1\over 8}
%     \end{equation}
%
  \begin{equation}
   F_{j,(-1)^{j+1}}=\ffrac{q^{1/12}\bar{q}^{1/12}}{P(q)P(\bar{q})}
     \sum_{n\in \oZ}q^{\frac{(j+n)^2+j+n}{2}}
    \bar{q}^{\frac{n^2+n}{2}}
     \end{equation}
with
\begin{equation}\label{Pq}
\displaystyle P(q) =  \prod_{n=1}^{\infty} (1 - q^n) = q^{-1/24} \eta (q).
\end{equation}
We present a  more general derivation of this  result in App.~\AppChar~which is based on well-know scaling properties of the spectrum for twisted XXZ models, and will be useful in our future studies~\cite{GRS4}.
From the structure of the spin chain modules (see
Sec.~\ref{ind-chain-bimod-subsec} and Fig.~\ref{FF-JTL-mod}), we
deduce then the traces $F_{j,(-1)^{j+1}}^{(0)}$ for the simple $\interinfin$-modules
%$\AIrrTL{j}{(-1)^{j+1}}$
$\interM{j}$
which read, in terms of Virasoro characters,  (see a calculation in App.~\AppChar)
  \begin{equation}\label{Fs-Vir-content-gen}
  F_{j,(-1)^{j+1}}^{(0)}=\sum_{j_1,j_2> 0}^* \chi_{j_1,1}\overline{\chi}_{j_2,1}
  \end{equation}
  where the sum is done with the following constraints:
  \begin{eqnarray}\label{cond-sum-star}
  |j_1-j_2|+1\leq j,\qquad %\nonumber\\
  j_1+j_2-1\geq j, \qquad %\nonumber\\
  j_1+j_2 - j = 1\; \text{mod} \; 2
  \end{eqnarray}
 (note this is equivalent to treating $j$ not as a spin but as a degeneracy, {\it i.e.}, setting $j=2s+1$, $j_i=2s_i+1$ and combining $s_1$ and $s_2$ to obtain spin $s$), and we recall  the characters of the simple modules with Kac labels $(j,1)$ over Virasoro at $c=-2$:
\begin{equation}
\chi_{j,1}={q^{(2j-1)^2/8}-q^{(2j+1)^2/8}\over\eta(q)}.
\end{equation}

For instance, we have
 \begin{eqnarray}\label{Fs-Vir-content}
 \begin{split}
 &F_{0,-1}^{(0)}&=&\quad0,\\
  &F_{1,1}^{(0)}&=&\quad\sum_{r=1}^\infty \chi_{r1}\bar{\chi}_{r1},\\
   &F_{2,-1}^{(0)}&=&\quad\sum_{r=1}^\infty \chi_{r1}\left(\bar{\chi}_{r-1,1}+
   \bar{\chi}_{r+1,1}\right),\\
  &F_{3,1}^{(0)}&=&\quad\chi_{11}\bar{\chi}_{31} + \chi_{21}\left(\bar{\chi}_{21}+\bar{\chi}_{41}\right)
  +\sum_{r=3}^\infty \chi_{r1}\left(\bar{\chi}_{r-2,1}+\bar{\chi}_{r1}+
   \bar{\chi}_{r+2,1}\right).
   \end{split}
   \end{eqnarray}
   %

%Recalling now~\eqref{charform},
The result~\eqref{Fs-Vir-content-gen} implies that, in contrast with the open
case, simple modules of the $\JTL{N}$ algebra do not become simple
modules over $\VirN(2)$ in the scaling limit. While in general one
might expect that this scaling limit gives rise to non-fully-reducible
modules over $\VirN(2)$, it turns out that one gets in fact a direct
sum of simple modules over the full Virasoro algebra
$\VirN(2)$. This can be checked either by explicitly working out this
\scal limit as  was done in~\cite{GRS1}, or by using a direct
argument. We first note that the left Virasoro generators are
particular endomorphisms of the module structure over the right
Virasoro $\bar{\Vir}(2)$ action. Then, taking into account possible extensions/`glueings' between
 simple $\Vir(2)$-modules -- a module $\VX_{j,1}$ can be extended only
by $\VX_{j\pm1,1}$ into an indecomposable module, -- the left Virasoro
algebra can map vectors from subquotients $\VX_{j,1}\boxtimes\bar{\VX}_{k,1}$
only to $\VX_{j\pm1,1}\boxtimes\bar{\VX}_{k,1}$,  and similarly for the right Virasoro algebra.  Therefore, the
character of a reducible but indecomposable $\VirN(2)$-module has to have at least
one pair of the indices $(j,k)$, say $(j_1,k_1)$ and $(j_2,k_2)$, such that $|j_1-j_2|=1$ and $k_1=k_2$ or $|k_1-k_2|=1$ and $j_1=j_2$. None of these conditions
  is true for the functions $F_{n,(-1)^{n-1}}^{(0)}$ given above. We
thus conclude that $\VirN(2)$-modules with characters
in~\eqref{Fs-Vir-content-gen}, for any fixed $j$, are semisimple. The
foregoing analysis gives thus the left-right Virasoro content for the direct limits of simple
$\JTL{N}$-modules\footnote{
Having the identification of the Hamiltonians $H=H_0(0)$
with the Cartan-subalgebra elements of $\ssp_{2L-2}$, see~\eqref{hamilt-sp} below,  it should be possible using the Weyl character formula for fundamental
  representations of $\ssp_{2L-2}$ to study the character asymptotics and
extract the $\VirN(2)$ content of the limits as in~\eqref{JTL-simple-Vir-sum}. We leave this exercise for  future work.
}:
\begin{equation}\label{JTL-simple-Vir-sum}
\AIrrTL{j}{(-1)^{j+1}} \mapsto \interM{j}|_{\VirN(2)}=\bigoplus_{j_1,j_2> 0}^* \VX_{j_1,1}\boxtimes\bar{\VX}_{j_2,1},
\end{equation}
with the conditions~\eqref{cond-sum-star} on the sum.
% Using the bimodule structure over $\bigl(U\interinfin,\LQGodd\bigr)$  described in Sec.~\ref{sec:bimod-scal-lim},

Recall now Prop.~\ref{prop:lim-simples} where we found that direct limits of simples are again simple modules,
and thus the only simple $\interinfin$-modules that appear in the direct limit space $\Hilb$ are the limits $\interM{j}$.
%and limits of indecomposable but reducible modules are indecomposable, .  and further, direct limits .
%, and the discussion at the end of Sec.~\ref{sec:bimod-scal-lim}
 We thus conclude this subsection with an important statement: \textit{the $U\interinfin$ simple modules that appear in $\Hilb$ are the direct limits $\interM{j}$ of the $\JTL{N}$ simples $\AIrrTL{j}{(-1)^{j+1}}$ and considered as modules over $\VirN(2)$ they are the direct sums~\eqref{JTL-simple-Vir-sum} over $\VirN(2)$ simples.}

 \subsection{Scaling limit of the anti-periodic chains}\label{sec:scal-lim-tw}
Of course, we could also consider the antiperiodic model (or the twisted sector) for the
symplectic fermions  where the   modes are half-integer~\cite{Kausch}. This corresponds to  the scaling limit of
the anti-periodic $\gl(1|1)$ spin-chain from
Sec.~\ref{subsec:antiper-mod}.  The
$\JTL{N}$ algebra is  then replaced by $\JTL{N}^{tw}$ while $\LQGodd$ is replaced by $U
s\ell(2)$. The corresponding bimodule is semisimple and is given
in~\eqref{antip-bimod}.

We take the scaling limit of $\JTL{N}^{tw}$ using Thm.~\ref{thm:Lie-JTLtw} where it was shown  that the spin-chain representation  of $\JTL{N}^{tw}$ is isomorphic to a representation of the enveloping algebra $U \ssp_{N}$.  We thus need only to take the scaling limit of the Lie algebra $\ssp_{N}$ which is quite straightforward. Repeating the analysis given in Sec.~\ref{sec:scaling} or following lines in App.~\ref{app:JTL-lim},
we obtain in this case that  $\ssp_{N}$ has the
%%% REFEREE
 direct limit generated by the same  bilinears~\eqref{spinf-gen} as in the periodic case but now with $n,m\in\oZ-\half$. This representation of $\spinf$ gives a symplectic-fermion representation
of the universal enveloping algebra $U\interinfin$
%on $\bimnch$ --
in  the twisted model as well and the generators $\enrg_{m,n}$, $\strt_{m,n}$, and $\bstrt_{m,n}$  have  similar expressions as in~\eqref{prop:sympl-ferm-interLie} but with fermionic modes shifted by one-half:
 \begin{equation}\label{eq:sympl-ferm-tw-interLie}
\enrg_{m,n} =S_{\alpha\beta} \psi^{\alpha}_{m\mp\half} \bar{\psi}^{\beta}_{n\mp\half},\quad
\strt_{m,n}=J_{\alpha\beta}{\sferm^{\alpha}_{m\mp\half}\sferm^{\beta}_{n\mp\half}},\quad \bstrt_{m,n}=J_{\alpha\beta}{\bsferm^{\alpha}_{m\mp\half}\bsferm^{\beta}_{n\mp\half}}, \quad n,m\in\oZ/\{0\},
\end{equation}
where $m\mp\half$ is by definition $m-\half$ for positive $m$ and $m+\half$ for negative values of $m$.
In particular, we see that the radical of $\interinfin$ which is a subalgebra generated by $\enrg_{0,n}$, $\strt_{m,0}$, {\it etc}.,  is trivially represented in this model.

There are no zero modes in the anti-periodic model and the continuum limit of the $U \SU(2)$ generators reads as~\cite{GRS1}
\begin{equation}\label{scal-lim-sl2}
\tilde{Q}^a=d^a_{\alpha\beta}\sum_{n=0}^\infty
\left(\ffrac{\psi^\alpha_{-n-1/2}\psi^\beta_{n+1/2}}{n+1/2} - \ffrac{\bar{\psi}^\alpha_{-n-1/2}\bar{\psi}^\beta_{n+1/2}}{n+1/2}\right),
\end{equation}
with the matrices
\begin{equation}\label{da-mat}
d^0_{\alpha\beta}=
\half
\begin{pmatrix}
-1 & 0\\
0 & -1
\end{pmatrix}, \quad
d^1_{\alpha\beta}=
\half
\begin{pmatrix}
1&0\\
0&-1
\end{pmatrix}, \quad
d^2_{\alpha\beta}=
\half
\begin{pmatrix}
0&-1\\
-1&0\end{pmatrix},
\end{equation}
with $[\tilde{Q}^a,\tilde{Q}^b]=f^{ab}_c\tilde{Q}^c$ and $f^{01}_2=-1$.
We check then  that the action of the Lie algebra $\interinfin$ of  fermionic bilinears commutes with this  action of $U \SU(2)$. Moreover, we repeat the construction in App.~\ref{app:inv-lim} and prove Lem.~\ref{lem:coh} in this context. This allows then to state  an analogue of  Thm.~\ref{thm:centr-interLie} about the centalizer in the half-integer-mode sector.

\begin{thm}\label{prop:interch-centr-tw}
The centralizer $\cent_{U\interinfin}$ of the enveloping algebra $U\interinfin$ action
in the half-integer-mode (twisted) sector, which factorizes to the action of $U\spinf$ in this model,
 is given by the representation of $U \SU(2)$ from~\eqref{scal-lim-sl2}. The bimodule over $\bigl(\cent_{U\interinfin},U\spinf\bigr)$  is semisimple and given by the direct sum~\eqref{antip-bimod} over all $j\geq0$.
\end{thm}

This is confirmed by the  decomposition of  the full partition function to which we now turn.

\subsection{The full generating functions}\label{sec:gen-func}

%\subsubsection{The integer-mode sector}
The generating
function of levels for the periodic $\gl(1|1)$ model -- that is, the left hand
side of (\ref{charform}) reads, from the analysis in
\cite{ReadSaleur01},
\begin{equation}\label{genfct}
Z=F_{0,-1}+2\sum_{ j\text{ even}} F_{j,-1}+2\sum_{j \text{ odd}} F_{j,1}
\end{equation}
(in the reinterpretation as the partition function of the XX model
with appropriate twists, the factors~$2$ arise from the $\oZ_2$
spin-flip symmetry, so the summation index  $j$ is positive).  Elementary algebra using
\begin{equation}
F_{2n,-1}^{(0)}+F_{2n+1,1}^{(0)}=F_{2n,-1}-F_{2n+1,1}
\end{equation}
leads to
\begin{equation}
Z=4\sum_{j=0}^\infty jF_{j,(-1)^{j+1}}^{(0)}.
\end{equation}
This formula is a direct translation of the bimodule stucture
discussed in Sec.~\ref{ind-chain-bimod-subsec}: the same
decomposition holds in fact for a finite system, with $F^{(0)}$ the trace
over the simple modules over $\JTL{N}$.
% and illustrated on figure .
Note that it is similar to the formula $Z_{\mathrm{op}}=4\sum_{j=0}^\infty j\chi_{j,1}$
 obtained in \cite{ReadSaleur07-2} for the open $\gl(1|1)$ spin chains.
 In the open case, the degeneracy $4j$ arose as dimensions of irreducible representations of the full quantum group, which coincides with the algebra ${\cal A}_{1|1}$. In the periodic case, we lose the full quantum group symmetry and retain only $\LQGodd$ on the lattice. This leads, nevertheless, to the same $4j$ for the
simple modules: $\AIrrTL{j}{(-1)^{j+1}} $ in the periodic case, $\VX_{j,1}$ in the open case.

Finally, we rewrite the full generating function as
\begin{equation}
Z=4\sum_{j=0}^\infty jF_{j,(-1)^{j+1}}^{(0)}=4\left|\sum_{j=0}^\infty j\chi_{j,1}\right|^2=4\left| q^{1/12}\prod_{n=1}^\infty (1+q^n)^2\right|^2.
\end{equation}
The right hand side is now seen to coincide with the generating
function for the level of symplectic fermions with periodic boundary
conditions indeed \cite{Kausch}. Of course, the corresponding
partition function of doubly periodic symplectic fermions on the torus
vanishes exactly due to the $\gl(1|1)$ symmetry; it is thus trivially
modular invariant. The generating function $Z$ in (\ref{genfct}) is not modular invariant, nor does it have to be.

In the anti-periodic case or twisted model,
 the generating function of levels is now
\begin{equation}
Z=F_{0,1}+2\sum_{\text{odd } j>0} F_{j,-1}+2\sum_{\text{even }j>0} F_{j,1}
\end{equation}
corresponding to
   \begin{equation}
   Z= \sum_{j=0}^\infty (j+1)  F_{j,(-1)^j}^{(0)}\,,
    \end{equation}
in accordance with the decomposition~\eqref{antip-bimod}.  The
$(j+1)'s$ are now dimensions of the $U s\ell(2)$-modules; the
$\gl(1|1)$ symmetry is broken. As before, we can rewrite
   \begin{equation}
   Z=\left|\sum_{j=1}j\chi_{j,2}\right|^2=
\left|q^{-1/24}\prod_{n=1}^\infty \left(1+q^{n+1/2}\right)^2\right|^2
    \end{equation}
which is a well known expression in this sector as well.

 \section{Interchiral algebra and local fields}\label{sec:interchiral}

We are now discussing a reinterpretation of our previous results in a way that, we expect, can be generalized to other LCFTs. This requires us to perform a few manipulations whose validity, however likely, we cannot prove at this stage. The present section, while more physical,  is thus of a less mathematical nature, and, ultimately, based on some conjectures we discuss below.

While the structure of the symplectic fermion LCFT  is fully
consistent with the one of the $\gl(1|1)$ spin chain, there is an
obvious difference due to the presence of the $\nSU(2)$ symmetry in
the continuum theory, which, in the periodic case,  does not have a natural analog on the
lattice.  Investigation of more complicated
models~\cite{GRS4} suggests that the most productive way to think of
this difference is to essentially forget the $\nSU(2)$ symmetry, which
seems to be an artifact from the point of view of the $\gl(1|1)$
theory.  Rather, we believe that the bimodule structure of the lattice
model does suggest to us the proper symmetries, and the proper way to
analyze the scaling limit. Put otherwise, the good algebraic object
that will organize the spectrum of most general logarithmic lattice
models should be {\bf the scaling limit of the Jones--Temperley--Lieb
algebra}.  Switching our point of view in this way leads to profound
and maybe not so surprising conclusions, in particular that the
organizing algebra of bulk LCFTs should contain non-chiral objects.
Indeed, while the  focus in the past~\cite{KooSaleur} had been mostly to extract the stress-energy
tensor modes from the
Temperley--Lieb or Jones--Temperley--Lieb algebras, it is well known \cite{ReadSaleur01} that the scaling limit of
some elements in $\JTL{N}$ can lead to other physical observables corresponding
to different bulk scaling fields.  A very important example of
such a field is  what we will call the energy operator, which is  associated
with the staggered sum
\begin{equation}\label{energypotts}
\sum_{i=1}^{N} (-1)^i e_i\; \mapsto\; \int dx\, \Phi_{2,1}\times \overline{\Phi}_{21}(x,\tau=0),
\end{equation}
where the integral is taken over the
  circumference of a cylinder at constant imaginary-time
  $\tau=0$. In the Potts model of statistical mechanics, where  the Temperley Lieb algebra with positive, integer values of $m$ appears, this field is canonically coupled to the temperature.   The field  in~\eqref{energypotts} is the non chiral degenerate field with conformal weights $h=\bar{h}=h_{2,1}$; later,  we will denote it by
  $\enrgcyl(x,\tau)$.  Of course, the introduction of such fields in the organizing
  algebra of a LCFT requires discussion of objects which mix chiral
  and anti-chiral sectors. We shall in this section introduce the new
  concept of an {\bf interchiral algebra}, and discuss its structure
  and role in the case of  the closed $\gl(1|1)$ chains
  and the bulk symplectic fermions.

  It is important to note that in the open case, the continuum limit of  the Temperley--Lieb algebra $\TL{N}$ only leads to the enveloping algebra of the Virasoro algebra, and does not involve
  other conformal fields. A related fact is, for instance, that the boundary energy operator at the ordinary transition in the Potts model coincides with the stress energy tensor \cite{BatchelorCardy}. There are probably cases for open spin chains where the algebra of hamiltonian densities will lead, in the continuum limit, to a bigger algebra than the Virasoro algebra -- for instance, the super Virasoro algebra, $W$-algebras~\cite{GST}, {\it etc.}

From another, more formal point of view, recall we  showed in~\cite[Sec. 5.2]{GRS1} that the \scal limit of the JTL centralizer
$\centJTL$ -- an infinite dimensional representation of the $\LQGodd$ -- gives
an algebra of intertwining operators respecting the left and right
Virasoro algebras. On the other hand, we also showed~\cite{GRS1} that the
centralizer of $\VirN(2)$ contains $\LQGodd$ but is a bigger algebra.
It is thus reasonable to expect that the scaling limit  of the Jones-Temperley-Lieb algebra
 is bigger than the non-chiral Virasoro algebra $\VirN(2)$, and that the
 latter  is only a proper subalgebra.

 To make progress, we turn back to the
 scaling limit analysis of the $\gl(1|1)$ spin-chains discussed
 in~\cite{GRS1} and extract an additional field that generates the
 full scaling limit of the JTL algebras.
  It turns out that the limit
gives modes of a field  of conformal dimension $(1,1)$.
We call this field
the \textit{interchiral field} $\enrg(z,\bar{z})$; it is
expressed in terms of  derivatives of the symplectic fermions as
\begin{equation}\label{interfielddef}
\enrg(z,\bz)=S_{\alpha\beta}\psi^{\alpha}(z)\bar{\psi}^{\beta}(\bz)\qquad \text{with}\quad
\psi^{\alpha}(z)=\der\Phi^{\alpha}(z,\bz),\quad
\bar{\psi}^{\alpha}(\bar{z})=\bder\Phi^{\alpha}(z,\bz),
\end{equation}
where we introduced the symmetric form
\begin{equation}
 S_{12}=S_{21}=1, \quad
S_{11}=S_{22}=0.
\end{equation}
We recall that the derivatives of the symplectic fermions have the mode decomposition~\cite{Kausch}
\begin{equation}
\psi^{\alpha}(z) = \sum_{n\in\oZ}\psi^{\alpha}_n z^{-n-1},\qquad \bar{\psi}^{\alpha}(\bz) = \sum_{n\in\oZ}\bar{\psi}^{\alpha}_n \bz^{-n-1},
\end{equation}
and are primary fields of conformal dimensions $(1,0)$ and $(0,1)$, respectively. The fermionic modes  satisfy
 the anti-commutation relations
\begin{equation}\label{sferm-rel}
\{\psi^{\alpha}_m,\psi^{\beta}_{m'}\} =
mJ^{\alpha\beta}\delta_{m+m',0}\,,\qquad \alpha,\beta\in\{1,2\},\quad m,m'\in\oZ,
%\qquad J^{12}=-J^{21}=1.
\end{equation}
with the symplectic form $J^{12}=-J^{21}=1$, and the same formulas for the antichiral modes $\bar{\psi}^{\alpha}_n$, and $\{\psi^{\alpha}_m,\bar{\psi}^{\beta}_{m'}\} = 0$. We also introduce `constant' modes $\phip_0$ and $\phim_0$ which are conjugate to the zero modes
\begin{equation}\label{sferm-0-rel}
\left\{\phim_0,\sfermp_0\right\}=i,\qquad
\left\{\phip_0,\sfermm_0\right\}=-i.
\end{equation}

\subsection{Modes of local fields and the algebra $\interch$}\label{sec:interch-modes}

While dealing with  bilinears in fermionic modes is quite convenient and provides  the most obvious approach to the scaling limit of JTL algebras, it is also important to understand what happens in more physical terms. From a  (L)CFT point of view indeed, the natural objects are, for instance,  not so much bilinears such as $\nord{\sfermp_{n-m}\sfermm_m}$, but the modes of the stress energy trensor, which are (infinite) sums such as $L_n=\sum_{m\in\oZ} \nord{\sfermp_{n-m}\sfermm_m}$. Also, while  Howe duality (in its non semi-simple version) gave us a powerful and new angle on $\JTL{N}$ and its scaling limit, it is not likely that this will  generalize to other models. It is thus crucial to be able to come up with an alternative, if less pleasant, approach to the scaling limit of JTL algebras.

\subsubsection{Higher Hamiltonians}\label{sec:hham}
We begin by  recalling the lattice analysis given in~\cite{GRS1}
where it was shown how to proceed from the $\JTL{N}$
generators to get  Virasoro modes in the non-chiral logarithmic
conformal field theory of symplectic fermions:
 the combinations
\begin{equation}\label{HPn-def}
H(n) = -\sum_{j=1}^N e^{-iqj} e^{\gl}_j,\qquad
P(n)=\ffrac{i}{2}\sum_{j=1}^N e^{-iqj} [e^{\gl}_j,e^{\gl}_{j+1}], \qquad q=\ffrac{n\pi}{L},
\end{equation}
of the
$\JTL{N}$ generators (recall that $\e^{\gl}_j$ denotes the representation of $e_j$ given in~\eqref{rep-TL-1} and~\eqref{rep-JTL-2}) converge in the scaling limit as $L\to\infty$ to the well-known symplectic fermions representation
of the left and right Virasoro generators
\begin{equation}\label{lim-HP-finite}
\ffrac{L}{2\pi}H(n) \mapsto
L_{n}+\bar{L}_{-n}, \qquad \ffrac{L}{2\pi}P(n) \mapsto
L_{n}-\bar{L}_{-n}
\end{equation}
We note that the limit~\eqref{lim-HP-finite} of $H(n)$ and $P(n)$ is
taken for finite $n$.

Here, we go  further and consider also limits
with $n$ close to $N/2$ (corresponding, physically, to alternating sum of JTL generators, related with the energy operator), or equivalently we consider $H(L-k)$ and
$P(L-k)$ with $L\to\infty$ and finite $k$. It turns out that this limit
gives modes of
a field of conformal dimension $(1,1)$.
To show this,
we introduce a family  of operators $H_l(n)$ which are
Fourier images of all the higher Hamiltonians:
\begin{equation}\label{hham-def}
H_l(n) = -\half e^{-il\frac{\pi}{2}}\sum_{j=1}^Ne^{-iqj}E_{j,l},\qquad q=\ffrac{n\pi}{L} \quad\text{and}\quad l\in\oN,\quad n\in\oZ,
\end{equation}
where we used the following notation for  multiple
commutators of the $\JTL{N}$-generators $e_j$
\begin{equation}
E_{j,l} = \Bigl[e_j,\bigl[e_{j+1},\dots
 e_{j+l-2},[e_{j+l-1},e_{j+l}]\dots\bigl]\Bigl], \qquad 1\leq j\leq N.
\end{equation}
We call these Fourier images  generalized higher Hamiltonians. Note that $P(n)=H_1(n)$ and we set $H_0(n)=H(n)$.

The fermionic expressions for the $H_l(n)$'s were obtained in our first paper~\cite[Sec.~4.5.1]{GRS1} and
it turns out that the $H_l(n)$'s can be written in the basis~\eqref{interfin-def} of the Lie algebra $\interfin{N}$, for $l\geq0$
and $0\leq n\leq L-1$, as
\begin{multline}\label{Hln-pos}
H_l(n) =
2e^{iq\frac{l+1}{2}}\Biggl( \sum_{m=1}^{L-n-1}
\cos{l\bigl(p+\ffrac{q}{2}\bigr)}\sqrt{\sin{(p)}\sin{(q+p)}}\, \Asp_{m,m+n}\\
+\half\sum_{m=1}^{n-1}
\cos{l\bigl(p-\ffrac{q}{2}\bigr)}\sqrt{\sin{(p)}\sin{(q-p)}}\bigl(\Csp_{m,n-m}
+(-1)^l \Bsp_{L-m,L+m-n}\bigr)\\
+\cos{\ffrac{lq}{2}}\sqrt{\sin{q}}\bigl(\Egl_{0,n} +(-1)^l \Egl_{0,2L-n} + \Egl_{L+n,L} +
 (-1)^l\Egl_{L-n,L} \bigr) +
2\delta_{n,0}\bigl(1+(-1)^l\bigr)\Egl_{0,L} \Biggr),
\end{multline}
where we set $p=m\pi/L$ and $q=n\pi/L$ as usually. The expressions for negative values of $n$ are given in~\eqref{Hln-neg}.
Note that all zero modes $H_l(0)$ and the lattice translation
generator $u^2$  have very simple expressions:
\begin{equation}\label{hamilt-sp}
H_l(0) = 2\sum_{m=1}^{L-1}
\cos{(lp)}\sin{(p)}\, \Asp_{m,m} + 4\bigl(1+(-1)^l\bigr)\Egl_{0,L}
\end{equation}
and
\begin{equation}
u^2 = \exp\left(-\ffrac{2i\pi}{L}\sum_{n=1}^{L-1}n\Asp_{n,n}\right)
\end{equation}
and they belong, for odd values of $l$, to the Cartan
subalgebra of $\ssp_{2L-2}$.

We check  then that the element $H_0(L)$ is the energy operator indeed $\sum_{j=1}^N(-1)^j e_j$ on the lattice (which differs from the Hamiltonian $H$), which reads
\begin{equation}
\sum_{j=1}^N(-1)^j e_j =2i \sum_{\substack{p=\step\\\text{step}=\step}}^{\pi-\step}\sin{p}\bigl(\chi^{\dagger}_p\eta_{p}
+ \eta^{\dagger}_p\chi_{p}\bigr).
\end{equation}
The  \energy operator rescaled by the factor $\frac{L}{2\pi i}$ has then the limit
\begin{equation*}
\ffrac{L}{2\pi i}\sum_{j=1}^N(-1)^j e_j \mapsto \sum_{m\in\oZ} \bigl(\sfermp_{m}\bsfermm_m + \sfermm_{m}\bsfermp_m \bigr)\equiv \enrg_0.
\end{equation*}
We denote this limit by $\enrg_0$ and show below that it corresponds to the zero mode of a primary field of dimension $(1,1)$. First however, we discuss what will correspond to higher modes $\enrg_n$ of this primary field. We refer to our calculations in App.~\AppHln~where we obtain in
the scaling limit for $H_0(L-k)$ (keeping only the leading order in $1/N$), with finite integer
$k$,
\begin{equation}\label{scal-lim-Hn}
\ffrac{L}{2\pi i}H_0(L-k) \mapsto  S_{\alpha \beta}\sum_{m\in\oZ}\sferm^{\alpha}_{m}\bsferm^{\beta}_{m+k}  \equiv
\enrg^{(0)}_k,\qquad k\in\oZ,
\end{equation}
and the scaling limit for $H_1(L-k)$ is
\begin{equation}
-\ffrac{L^2}{\pi^2}H_1(L-k) \mapsto S_{\alpha
 \beta}\sum_{m\in\oZ}(2m+k)\sferm^{\alpha}_{m}\bsferm^{\beta}_{m+k}\equiv
 \enrg^{(1)}_k,\qquad k\in\oZ,
\end{equation}
where we introduce, for $l\geq 0$ and $k\in\oZ$, the operators
\begin{equation}\label{enrgl-def}
\enrg^{(l)}_k=S_{\alpha \beta}\sum_{m\in\oZ}(2m+k)^l\sferm^{\alpha}_{m}\bsferm^{\beta}_{m+k}.
\end{equation}

The whole family of operators $\enrg^{(l)}_k$, with $l\geq0$ and $k\in\oZ$, can be extracted from the scaling limit of
the operators $H_{l'}(L-k)$ keeping not only the leading terms in their formal expansion in $1/L$ but also all sub-leading terms in the
expansion. Put differently, we can formally expand $H_{l'}(L-k)$ as Laurent series in $N$ where coefficients are fermionic bilinears of the form  $\enrg^{(l)}_k$. For simplicity, we refrain from discussing this any further here.

\subsubsection{Fields}\label{sec:interch-field}

The foregoing family of operators
$\enrg^{(l)}_k$, with $l\geq0$ and $k\in\oZ$, can be obtained as modes
of a primary field of conformal dimension $(1,1)$, and of its
descendants. We denote this primary
field defined on the complex plane as
\begin{equation}\label{enrg-field-pl}
\enrg_{\text{pl.}}(z,\bz)=S_{\alpha\beta}\psi^{\alpha}(z)\bar{\psi}^{\beta}(\bz),
\end{equation}
where  $S_{\alpha\beta}$ is the symmetric form, {\it i.e.}, $S_{12}=S_{21}=1$, $S_{11}=S_{22}=0$. We  call this field the \textit{inter}chiral field; note that it is of course  neither chiral nor anti-chiral.

Recall first the mode decomposition of the fermion fields
\begin{equation}
\psi^{\alpha}(z) = \sum_{n\in\oZ}\psi^{\alpha}_n z^{-n-1},\qquad \bar{\psi}^{\alpha}(\bz) = \sum_{n\in\oZ}\bar{\psi}^{\alpha}_n \bz^{-n-1}.
\end{equation}
Then,  $\enrgpl(z,\bz)$ has a formal expansion
\begin{equation}\label{enrgpl-mod-exp}
\enrgpl(z,\bz) =  S_{\alpha\beta} \sum_{n,m\in\oZ}\psi^{\alpha}_n \bar{\psi}^{\beta}_m  z^{-n-1}\bz^{-m-1}.
\end{equation}
Interpreting  $z$ and $\bz$  as independent complex coordinates allows one to extract $\psi^{\alpha}_n \bar{\psi}^{\beta}_m$ by a double integration over $z,\bz$. This is not too pleasant however, since we rather would like to deal with modes obtained by  integration in the physical theory, where $z,\bz$ are not independent. Note also that in the Virasoro case, there is no $\bz$ we can integrate over, so there is no integration that can give rise to the single product $\psi^{\alpha}_n \psi^{\beta}_m$ starting from the stress energy tensor and its descendents.

To make contact with the hamiltonian formulation we used so far, it is now convenient to perform a conformal transformation $z\to w(z)$, $\bz\to w^*(\bz)$
onto a cylinder of   circumference $\LL$, with  $w(z)=\frac{\LL}{2\pi}\log(z)$.
We set
$w=\tau -i x$ and now $w^*=\tau +i x$, we then obtain
\begin{equation}
\enrgcyl(x,\tau) = \Bigl(\ffrac{2\pi}{\LL}\Bigr)^2z\bz\enrgpl(z,\bz)
=\Bigl(\ffrac{2\pi}{\LL}\Bigr)^2  S_{\alpha\beta}
\sum_{n,m\in\oZ}\psi^{\alpha}_n \bar{\psi}^{\beta}_m  \exp\Bigl[-\ffrac{2\pi}{\LL}\bigl(\tau(m+n)+ix(m-n)\bigr)\Bigr].
\end{equation}
Introducing the mode expansion of the field $\enrgcyl(x,\tau)$ on the
cylinder in the Heisenberg presentation, where the modes depend on the
imaginary time $\tau$,
\begin{equation}\label{enrg-modes-exp}
\enrgcyl(x,\tau) = \Bigl(\ffrac{2\pi}{\LL}\Bigr)^2
\sum_{k\in\oZ}\enrg_k(\tau)e^{2\pi ik x/\LL},
\end{equation}
we then obtain
\begin{equation}
\enrg_k(\tau) = \ffrac{\LL}{(2\pi)^2}\int_{0}^{\LL}dx e^{-2\pi i
  k x/\LL} \enrgcyl(x,\tau)
\end{equation}
which is evaluated as
\begin{equation}
\enrg_k(\tau) = S_{\alpha\beta} \sum_{n\in\oZ}\psi^{\alpha}_{n+k}
\bar{\psi}^{\beta}_n e^{-2\pi \tau(2n+k)/\LL}.
\end{equation}
Finally, we get the expansion
\begin{equation}\label{enrg-modes}
\enrg_k(\tau) = \sum_{l\geq0} \ffrac{(-2\pi \tau)^l}{l!(\LL)^l} \enrg^{(l)}_{-k}
\end{equation}
and in particular
\begin{equation}
 \enrg_{k}(0) = \enrg^{(0)}_{-k} ,\qquad k\in\oZ,
\end{equation}
where $\enrg^{(l)}_{k}$ were introduced above in~\eqref{enrgl-def}.

It is important to note that  applying the zero mode $\enrg_0$ of the local field $\enrgcyl(x,\tau)$  on the vacuum state, we get then an infinite sum
$\sum_{m<0} \bigl(\sfermp_{m}\bsfermm_m + \sfermm_{m}\bsfermp_m \bigr)\vac$. All other modes $\enrg_k$ have similar formally divergent action. These formal divergences are not very surprising as a local operator in general is able to excite states of arbitrarily high energy (the spin is fixed in the action of our local operators). Note nevertheless that the  modes $\enrg_k(\tau)$ in the Heisenberg picture are well-defined as they have the convergence factor $e^{-2\pi \tau(2n+k)/\LL}$ which eliminates high excitations. We could thus regularize the action of our local operators by introducing convergence factors in the formally divergent sums but this is not very convenient for us. In order to describe representations of the algebra generated by these local operators, we prefer to choose another regularization in which a locally defined field at a point $w=\tau-ix$ becomes a bilocal field at points $z$ and $\bz$ (hence formally ``point splitting'' by going to $\oC\times \oC$) .
%into two different points which eventually become $z$ and $\bz$.
So, we interpret the field $S(z,\bz)$ as  a regularization of the local interchiral field on the cylinder
where the variables $z$ and $\bz$ might be considered independent. After such a regularization the modes of $S(z,\bz)$ given in~\eqref{enrgpl-mod-exp} now indeed generate states of finite energy only.

Recall now that the modes of the regularized field, $S(z,\bz)$, are elements of the Lie algebra $\interinfin$ which is a subalgebra in the algebra $\glinf$ of infinite matrices with finite number of non-zero elements. It turns out that the modes of the local, non-regularized, field  $\enrgcyl(x,\tau)$ belong to the algebraic completion $\bglinf$ of this $\glinf$. The Lie algebra $\bglinf$ was already mentioned above in Sec.~\ref{sec:spinf} and it consists of infinite matrices with finite number of diagonals having (possibly infinite number of) non-zero elements, see also App.~\Appglinf.
Working now in the completion $\bglinf$, we can ask about the Lie algebra generated by the local operators.
We show below that all the other operators $\enrg^{(l)}_{k}$ that appear in
the scaling limit of the $H_{l'}(n)$ operators
%JTL algebra
 are produced from
$\enrg^{(0)}_{k}$ by the conformal generators $L_n$ and $\bar{L}_n$. It is  of course interesting
to study, more generally,   commutation relations involving these operators, that is, to study
the Lie algebraic structure generated from the $\enrg^{(l)}_k$.

\subsubsection{Commutation relations and the Lie algebra $\interch$}
\label{sec:comm-rel-interch}
We first find commutation relations among the modes $\enrg^{(0)}_k$
with the result
\begin{equation}\label{comm-rel-enrg-zero}
[\enrg^{(0)}_r,\enrg^{(0)}_s] =
J_{\alpha\beta}\sum_{n\in\oZ}\bigl((n-s)\bsferm^{\alpha}_{n}\bsferm^{\beta}_{r+s-n}
 - (n+r)\sferm^{\alpha}_{n}\sferm^{\beta}_{-(r+s)-n}\bigr).
\end{equation}
Recall that the modes of the chiral stress-energy
tensor can be written as $L_k=\half
J_{\alpha\beta}\sum_{n\in\oZ}\nord{\sferm^{\alpha}_{n}\sferm^{\beta}_{k-n}}$,
and similarly for the anti-chiral modes. We introduce also
\begin{equation}\label{virm-def}
\virm{l}_k =
\half J_{\alpha\beta}\sum_{n\in\oZ} n^l
\nord{\sferm^{\alpha}_{n}\sferm^{\beta}_{k-n}}, \qquad
\bvirm{l}_k =
\half J_{\alpha\beta}\sum_{n\in\oZ} n^l \nord{\bsferm^{\alpha}_{n}\bsferm^{\beta}_{k-n}},\qquad l\geq0,\; k\in\oZ,
\end{equation}
with $\virm{0}_k=L_k$ and $\bvirm{0}_k=\bar{L}_k$.
Then,~\eqref{comm-rel-enrg-zero} has the form, for $r\ne-s$,
\begin{equation}\label{enrg-enrg}
[\enrg^{(0)}_r,\enrg^{(0)}_s] = 2\bvirm{1}_{r+s} - 2 \virm{1}_{-(r+s)} - 2s
\bvirm{0}_{r+s} - 2r \virm{0}_{-(r+s)}.
\end{equation}
Note that the new modes $\virm{1}_k$ and $\bvirm{1}_k$
together with $\virm{0}_n$ and $\bvirm{0}_n$ generate a subalgebra:
\begin{align}
2[\virm{1}_n, \virm{1}_m] &= m(n-2m) \virm{1}_{n+m} + m^3 \virm{0}_{n+m},\\
2[\bvirm{1}_n,\bvirm{1}_m] &= m(n-2m) \bvirm{1}_{n+m} + m^3 \bvirm{0}_{n+m},\\
[\virm{1}_n,\bvirm{1}_m]&=0,\quad [\virm{1}_n,\bvirm{0}_m]=0,\quad [\bvirm{1}_n,\virm{0}_m]=0,\\
[\virm{1}_n,\virm{0}_m] &= (n-2m) \virm{1}_{n+m} + m^2 \virm{0}_{n+m},\\
[\bvirm{1}_n,\bvirm{0}_m] &= (n-2m) \bvirm{1}_{n+m} + m^2 \bvirm{0}_{n+m},
\end{align}
where we  give relations for $n\ne-m$ (at the case $n=m$ there is a central charge term in all these commutators, which is not important for our purposes).
Note also that $2[\virm{1}_n,\virm{1}_m] = m[\virm{1}_n,\virm{0}_m]$.

In general, we have
\begin{equation}\label{enrg-enrg-gen}
[\enrg^{(l)}_r,\enrg^{(l')}_s] = \sum_{k=0}^{l+l'+1}\bigl(a_k\virm{k}_{-r-s} +\bar{a}_k\bvirm{k}_{r+s}\bigr),
\end{equation}
where $a_k$ and  $\bar{a}_k$ are coefficients in the expansions
\begin{gather}
\sum_{k=0}^{l+l'+1}a_k x^k = (-1)^{l'+1}(2x+r)^l(x+r)(2x+2r+s)^{l'}\\
\sum_{k=0}^{l+l'+1}\bar{a}_k x^k = (-1)^{l+1}(2x-s)^{l'}(-x+s)(2x-2s-r)^{l},
\end{gather}
and we again suppose that $r\ne-s$.

We compute next  for all higher modes the action of the left  and right Virasoro, {\it i.e.}, commutation relations between $\enrg^{(l)}_r$ and
$\virm{0}_s$, $\bvirm{0}_s$:
\begin{equation}\label{comm-rel-enrg-Ln}
\begin{split}
2[\enrg^{(l)}_r,\virm{0}_s] &= \enrg^{(l+1)}_{r-s} - \sum_{k=0}^l (-s)^{l-k}\left[s\binom{l+1}{k}+r\binom{l}{k}\right] \enrg^{(k)}_{r-s},\\
2[\enrg^{(l)}_r,\bvirm{0}_s] &= \enrg^{(l+1)}_{r+s} - \sum_{k=0}^l (-s)^{l-k}\left[s\binom{l+1}{k}-r\binom{l}{k}\right] \enrg^{(k)}_{r+s},
\end{split}
\qquad r,s\in\oZ.
\end{equation}

In general, we have
\begin{multline}
[\enrg^{(l)}_r,\virm{l'}_s] = \ffrac{-1}{2^{l'+2}} \sum_{k=0}^{l+1}\sum_{k'=0}^{l'+1} (-s)^{l-k}
\Biggl[s(s-r)^{l'-k'}\binom{l+1}{k}\binom{l'}{k'} + r(s-r)^{l'-k'}\binom{l}{k}\binom{l'}{k'}\\
+(-1)^{l'}(r+s)(-s-r)^{l'-k'}\binom{l}{k}\binom{l'+1}{k'}\Biggr] \enrg^{(k+k')}_{r-s}
\end{multline}
and a similar formula for $[\enrg^{(l)}_r,\bvirm{l'}_s]$, with  $r,s\in\oZ$.

Although it is obvious that the commutator $[\virm{l}_n,\virm{l'}_m]$ is expressed as a linear combination of $\virm{l''}_{n+m}$ with $l''\leq l+l'$, its precise expression  looks more complicated and we do not give it for simplicity. We note also that the commutators from~\eqref{enrg-enrg}-\eqref{enrg-enrg-gen} have central charge terms which are the vacuum expectations of these commutators (we do not give explicit computations for brevity).  Finally, we  conclude that the vector space with the basis $\enrg^{(l)}_n$, $\virm{l}_n$ and $\bvirm{l}_n$, with $l\geq0$ and $n\in\oZ$, has a Lie algebra structure. We will denote  \textit{this Lie algebra as $\interch$}.
It is noteworthy to mention that all operators that appear in
 repeated commutators of  $\enrg^{(0)}_n$, including all
 $\enrg^{(l)}_n$, do commute
 with the action of $\LQGodd$. We thus have that the action of the enveloping algebra $U\interch$ commutes with $\LQGodd$.

\subsection{The relation between the $\interch$ and $\interinfin$ algebras and completions}\label{sec:interch-interinfin}
We have so far obtained two different Lie algebras. One, $\interinfin$, is generated by bilinears in fermion modes, while the other, $\interch$, is generated by
the modes of bilinears in fermionic fields, which are local operators. The corresponding objects such as $\enrg^{(l)}_n$ and so on expand on an infinite sum of bilinears in fermion modes, and we now have to face the question of the equivalence of these two descriptions. We note first that the  generators of the second algebra, $\interch$,  are formally infinite sums of the generators of $\interinfin$. On the other hand, any element in the first algebra  $\interinfin$, is expressed (by definition) as a {\sl finite} linear combination of  fermion bilinears.  Therefore, in order to  compare we should
 first take an algebraic completion $\interinfinco$ of the Lie algebra $\interinfin$, that now admits  infinite sums like those for  $\enrg^{(l)}_n$. The completion $\interinfinco$ is defined as a Lie subalgebra in $\bglinf'$ (the completion of $\glinf$ introduced in Sec.~\ref{sec:spinf} and App.~\Appglinf)
that contains any (possibly infinite) linear combinations of the bilinears
\begin{equation}\label{interinfinco-def}
\interinfinco:\qquad\qquad X_{\alpha,\beta}\nord{\nfer^{\alpha}_m\nfer^{\beta}_{-n}}\qquad \text{for a fixed value of}\;\; (m-n),
\end{equation}
where the fermions $\nfer^{\alpha}$ are defined in~\eqref{nfer-sferm} and $X_{\alpha,\beta}=S_{\alpha,\beta}$ for $m$ and $n$ of different parity and $X_{\alpha,\beta}=J_{\alpha,\beta}$ for $m$ and $n$ of same parity.
 Then, recalling  the expressions~\eqref{enrgl-def} and~\eqref{virm-def} we obviously have inclusions
 \begin{equation*}
 \interch\subset\interinfinco\subset\bglinf'.
 \end{equation*}
 So, considering the completion of the Lie algebra appearing in the scaling limit of our spin-chains is crucial if we want to describe local operators.

Strictly speaking, the algebra $\interinfinco$, as well as its subalgebra $\interch$,  does not act on the space $\Hilb$ of the scaling states. It acts
on the completion
\begin{equation}\label{eq:cHilb}
\cHilb=\prod_{n\geq0} \Hilb^{(n)},\qquad
 \text{where}\; \Hilb^{(n)} \;\text{are root/eigen-spaces for}\; E=L_0+\bar{L}_0,
\end{equation}
 constructed initially as the projective limit of our spin-chains in App.~\ref{app:inv-lim} and described in details in the paragraph preceding Cor.~\ref{cor:coh}.
This is the completion in the so-called formal topology~\cite{Kac-book}: all subspaces $\bigoplus_{\oN/M}\Hilb^{(n)}\subset\Hilb$, for a finite subset $M\subset\oN$, are declared to be the fundamental system of neighbourhoods of zero. Then, the completion of $\Hilb$ in this  topology is the direct product $\prod_{\oN}\Hilb^{(n)}$ of its homogeneous, or fixed energy, subspaces $\Hilb^{(n)}$, and $\Hilb\subset\cHilb$ is a dense subspace.
We will denote the representation of the universal enveloping algebra  of the Lie algebra $\interinfinco$ on the completed space by $\rep :U\interinfinco \to \Endo\,\cHilb$ and for simplicity  denote the image $\rep(\interinfinco)$ of the abstract Lie algebra just by $\interinfinco$ in what follows.

 It is interesting to go back for a while to the finite lattice and to mention that there is
an isomorphism between the Lie algebra generated by the generalized higher Hamiltonians $H_l(n)$, from which we extracted operators like $\enrg^{(l)}_n$, and the Lie algebra $\interfin{N}$ introduced in  Sec.~\ref{Howe} in order to give  special  generators of $\repgl\bigl(\JTL{N}\bigr)$ as bilinear monoms in the fermionic modes.
%in the spin-chain representation of ,
%and then used in Sec.~\ref{sec:scaling} to obtain a quotient of $U \interinfin$ -- the full scaling limit of the JTL algebras.
The existence of such an isomorphism is discussed  in App.~\AppHln.1.
The $\interfin{N}$ algebras in the scaling limit give the $\interinfin$, while
 the Lie algebra of $H_l(n)$'s gives in the  limit the Lie algebra $\interch$.
It is natural to  expect that the isomorphism between the two Lie algebras for finite spin chains carries over to their scaling limit, after taking a proper completion.
%So, we conjecture that there exists a
%Lie algebra
The completion $\interchco$ (of the image of the representation) of $\interch$ acting in $\cHilb$ is defined as the Lie algebra with the basis of $\interch$ but containing also  infinite linear combinations of $\enrg^{(l)}_n$ or $L_n^{(l)}$, or $\bar{L}_n^{(l)}$'s for any fixed $n$ such that
\begin{equation}\label{interchco-def}
\interchco:\qquad\quad \sum_{l\geq 0} C_l \enrg^{(l)}_n,\qquad \sum_{l\geq 0} C_l L^{(l)}_n,\qquad \sum_{l\geq 0} C_l \bar{L}^{(l)}_n,\qquad \text{with}
\quad \lim_{l\to\infty}C_l=0.
\end{equation}
Actually, we will require a stronger condition: for any (infinite) sequence of the constants $C_l$, the series $\sum_{l=0}^{\infty}C_l$ has to be convergent. Note that this way defined completion $\interchco$ is a subalgebra in $\bglinf'$.
We can show that the infinite combinations in~\eqref{interchco-def}
with the coefficients series $\sum_{l=0}^{\infty}C_l$ equal $0$ or $1$ converge on $\cHilb$ to the generators of $\interinfin$.
We first note that the $\interch$ generators are obtained from those of $\interinfin$ by infinite Vandermonde-type matrices:
\begin{equation}\label{interch-gen}
2L_k^{(l)} = \sum_{n\in\oZ}n^l \nord{\strt_{n,k-n}},\qquad
2\bar{L}_k^{(l)} = \sum_{n\in\oZ}n^l \nord{\bstrt_{n,k-n}},\qquad
\enrg^{(l)}_k = \sum_{n\in\oZ}(2n+k)^l \enrg_{n,k+n},\qquad l\geq0
\end{equation}
where we set $n^0=1$, for any $n$.   Let us consider for simplicity only the case $k=0$ in details. We start with $L_0^{(l)}$, which are non-zero only for even values of $l$ (recall that $\strt_{n,-n}=\strt_{-n,n}$). Then, we can write
\begin{equation}
L_0^{(2l)} = \half\delta_{l,0}\strt_{0,0} + \sum_{n>0} n^{2l}\strt_{n,-n},
\end{equation}
or introducing infinite vectors $\Lv=(L_0^{(0)},L_0^{(2)},L_0^{(4)},\dots)^T$ and $\Tv=(\half\strt_{0,0},\strt_{1,-1},\strt_{2,-2},\dots)^T$, we have
\begin{equation}
\Lv=\vand \cdot \Tv,
\end{equation}
where $\vand$ is the transposed of the
% infinite version ($n\to\infty$) of the
 classical infinite Vandermonde matrix
\begin{equation}
\vandT =
\begin{pmatrix}
1 & x_0 & x_0^2 & \cdots & x_0^n & \cdots\\
1 & x_1 & x_1^2 & \cdots &  x_1^n & \cdots\\
\vdots & \vdots & \ddots & \vdots & \vdots &\vdots\\
1 & x_m & x_m^2 & \cdots &  x_m^n & \cdots\\
\vdots & \vdots & \vdots & \vdots &\ddots &\vdots
\end{pmatrix},
\qquad \text{with}\quad x_j=j^2.
\end{equation}
Now, we face a problem of finding the
 inverse  $\vand^{-1}=(C_{k,n})_{k,n\geq0}$. In general, it is a non-trivial problem but our Vandermonde matrix is special: it has one row and one column of units $1$'s ($x_1=1$) and this property allows at least to prove that the series $\sum_{n\geq0}C_{k,n}$ associated with each row of $\vand^{-1}$ converge. Indeed, we have by definition of the inversed matrix
 \begin{equation}\label{yy}
 \sum_{n\geq0} C_{k,n} x_m^n  = \delta_{m,k}.
 \end{equation}
Introduce then formal power series in $x$ as
\begin{equation}\label{pow-ser}
P_k(x) = \sum_{n\geq0} C_{k,n} x^n
\end{equation}
and using~\eqref{yy} they take the following values at  points in the sequence $\{x_0,x_1,x_2,\ldots\}$:
\begin{equation}
P_k(x_m)=\delta_{m,k}.
\end{equation}
It is now obvious that
\begin{equation}
\sum_{n\geq0}C_{k,n} = P_k(1) = P_k(x_1) = \delta_{k,1}
\end{equation}
and in particular $\lim_{n\to\infty}C_{k,n}=0$.
We have thus shown that
\begin{equation}
\strt_{0,0} = 2\sum_{l\geq0}C_{0,l} L_0^{(2l)},\qquad\qquad
\strt_{k,-k} = \sum_{l\geq0}C_{k,l} L_0^{(2l)}
\end{equation}
are well defined and belong to our completion $\interchco$.
Now, we turn to the zero modes $\enrg^{(l)}_0$ and
introduce an infinite vector $\Lv = \bigl(\frac{1}{2^l}\enrg^{(l)}_0\bigr)_{l\geq0}$. It can be expressed again as a product of a new Vandermonde matrix
  $\vand=(x_m^n)_{n\geq0,m\in\oZ}$, where $x_m=m$, with the vector $\Tv=(\enrg_{k,k})_{k\in\oZ}$. Note that $\vand$ has again one column and one row of units.
Proceeding as above, we have again that the inverse  $\vand^{-1}=(C_{k,n})_{k\in\oZ,n\geq0}$ has rows with the properties that the series $\sum_{n\geq0}C_{k,n}$ converge to $0$ or $1$. Therefore, the elements $\enrg_{k,k}=\sum_{l\geq0}C_{k,l}\enrg_0^{(l)}$ are in the completion $\interchco$.
For the non-zero modes $k$, we divide the generators of $\interch$ by appropriate polynomials in $k$ of order $l$ and proceed similarly to establish the desired properties for the coefficients $C_{k,l}$.

Summarizing the previous arguments, we have thus established an existence of an isomorphism
\begin{equation}\label{interch-iso}
\interchco\cong\interinfinco
\end{equation}
of Lie algebras. Though the explicit transformation requires computation of the coefficients $C_{k,n}$ in the power series~\eqref{pow-ser}, it is not necessary for our purposes.

We can justify the isomorphism in~\eqref{interch-iso} by another, heuristic  but more physical, argument.
Recall  that the modes $\enrg_{k}=\enrg^{(0)}_{k}$ were obtained by expanding the
field $\enrgpl(z,\bz)$ defined in~\eqref{enrg-field-pl} on the unit
circle in the complex plane (or at  time $\tau=0$ in the cylinder geometry). The higher modes
$\enrg^{(l)}_{k}$ can be obtained by expansion at
different moments $\tau$. Indeed,
we see from~\eqref{comm-rel-enrg-Ln} that the dilatation operator
$L_0+\bar{L}_0$ generates $\enrg^{(1)}_{k}$ from $\enrg^{(0)}_{k}$, and so on.
This action can be interpreted as an expansion of $\enrgpl(z,\bz)$ on
a different (non-unit) circle with the same centre.
Transforming the unit-circle expansion by the conformal generators $L_n$
and $\bar{L}_n$ we generate all higher modes $\enrg^{(l)}_n$ and cover an expansion of the interchiral field
$\enrgpl(z,\bz)$ on the whole complex plane.
%We thus believe that the vector space
%of all higher modes $\enrg^{(l)}_n$, with $n\in\oZ$ and $l\geq0$, describes an expansion on the complex plane.
On the other hand,
%this vector space has another basis spanned
this expansion is given by the quadratic monoms $\enrg_{n,m}$
in the  fermions modes. This suggest strongly  that  both   Lie algebras -- the one generated by $\enrg^{(l)}_n$ and the other, $\interinfin$,  generated by  $\enrg_{n,m}$ -- are  isomorphic, after taking proper completions.

\subsection{The interchiral algebra $\interchalg$}\label{sec:interdef}
We define {\it the interchiral algebra} for the $\gl(1|1)$ models, denoted by  $\interchalg$, as follows.
Consider the completion $\interinfinco$ of the Lie algebra $\interinfin$ generated by the modes of the regularized interchiral field  $\enrg(z,\bz)$ in its double mode expansion. The interchiral algebra  $\interchalg$ is then defined as the associative algebra generated by $\interinfinco$, {\it i.e.}, it is  the homomorphic image
%of the symplectic-fermion representation
of $U \interinfinco$:
\begin{equation}\label{Uinterch-hom-2}
 U\interinfinco\longrightarrow\interchalg.
\end{equation}
The defining relations in $\interinfin$ and thus in  the universal enveloping algebra $U\interinfinco$ are computed in Sec.~\ref{commrels}. The point is, there might be additional relations in the fermionic representation of the abstract algebra $U\interinfinco$, and the particular modules realized in $\cHilb$, as  also happens say for the Virasoro Lie algebra and its enveloping algebra $\Vir(2)$. This is why the map from~\eqref{Uinterch-hom-2} is not an isomorphism but a covering homomorphism of associative algebras.

 We also recall the discussion in the previous subsection about relations between both  Lie algebras $\interch$ and $\interinfin$. In view of existence of an isomorphism~\eqref{interch-iso} of their completions, we can give another, more physical, definition of the interchiral algebra.
The interchiral algebra can be considered as the associative algebra  generated by the
modes of the following {\it local} fields: the stress-energy tensor fields $T(x,\tau)$ and $\bar{T}(x,\tau)$ and the interchiral field $\enrgcyl(x,\tau)$, see the mode expansion in~\eqref{enrg-modes-exp} and in~\eqref{enrg-modes}, where $x$ and $\tau$ are the coordinates on the cylinder.  The Lie algebra generated by these  modes requires of course the completion, $\interchco$ in the sense of~\eqref{interchco-def}, in order to get indeed a definition of $\interchalg$ equivalent to the first one.
%Recall that the Lie algebra $\interch$ was  introduced at the end of Sec.~\ref{sec:comm-rel-interch} and the completion $\interchco$ of $\interch$ was introduced  in Sec.~\ref{sec:interch-interinfin}.
In the symplectic-fermion theory, the interchiral algebra  is  then the homomorphic image of the (representation of the) enveloping algebra $U \interchco$.  The Lie algebra $\interch$, and its completion as well, has
 $\enrg^{(l)}_n$, $\virm{l}_n$ and $\bvirm{l}_n$, with $l\geq0$ and $n\in\oZ$, as its basis. The commutation relations in this basis were computed also in Sec.~\ref{sec:comm-rel-interch}, which give  relations in $\interchalg$, but probably not all defining relations, because there should be more  in the representation of $U\interch$. We  write therefore
\begin{equation}\label{Uinterch-hom}
 U\interinfinco \cong  U\interchco\longrightarrow\interchalg
\end{equation}
where the arrow is again a covering homomorphism of associative algebras and $\interchalg$ is the image of $U\interchco$ under this homomorphism.

Note that while the second definition of the interchiral algebra is probably better for further generalizations of the concept of interchiral algebras for other models, we will use the first, more technical (and less physically pleasant), definition of the interchiral algebra in studying its simple modules below.
We note
 further
 that for more complicated logarithmic theories like those describing scaling limit of $\gl(n|n)$ ($n>1$) periodic spin-chains, we will have a different and now faithful representation of $\JTL{N}$, and a description in terms of Lie algebras will probably not be available. Nevertheless, a description in terms of a (properly defined)  interchiral algebra should still exist. This will be discussed in further work.

\subsection{OPEs}

An important advantage in using the interchiral algebra is that it gives a convenient ``vertex-operator algebra'' framework\footnote{Our interchiral algebra strictly speaking is not a vertex-operator algebra because of the neither-chiral-nor-antichiral fields. A proper generalization of the vertex-operator algebras for our context is still required.} where one introduces operator-valued generating functions of formal variables $z$ and $\bz$ and their OPE in order to define an algebraic structure.
It is worth spending some time discussing this approach in the context of the bulk theory.
We recall first the OPE of derivatives $\psi^{\alpha}(z)$ and $\bar{\psi}^{\alpha}(\bz)$ of the symplectic fermions
\begin{equation}\label{sympl-fer-OPE}
\psi^{\alpha}(z)\psi^{\beta}(w) = J^{\alpha \beta} \Bigl(\ffrac{\one}{(z-w)^2} - T(w) + \text{reg.}\Bigr),\qquad
\bar{\psi}^{\alpha}(\bz) \bar{\psi}^{\beta}(\bw) = J^{\alpha \beta}  \Bigl(\ffrac{\one}{(\bz-\bw)^2} - \bar{T}(\bw) + \text{reg.}\Bigr),
\end{equation}
where we use the symplectic form $J^{\alpha \beta}$, with $J^{12}=-J^{21}=1$, and computed the coefficient in front of the stress tensor $T(w)$ using usual conformal-invariance arguments.
The stress tensor $T(z)$ is given in fermions as $T(z)=\half J_{\alpha\beta}:\psi^{\alpha}(z)\psi^{\beta}(z):$ with $J_{\alpha\beta}$ such that $J_{\alpha\beta}J^{\beta\gamma}=\delta^{\gamma}_{\alpha}$.

Then, OPE of two $\enrg(z,\bz)$ fields can be written using~\eqref{sympl-fer-OPE}  as
\begin{multline}
\enrg(z,\bz)\enrg(w,\bw) = S_{\alpha\beta}\psi^{\alpha}(z)\bar{\psi}^{\beta}(\bz)S_{\gamma\delta}\psi^{\gamma}(w)\bar{\psi}^{\delta}(\bw)\\
= -  S_{\alpha\beta}S_{\gamma\delta} \Biggl(J^{\alpha\gamma}\Bigl(\ffrac{\one}{(z-w)^2} - T(w) + \twr \Bigr)
J^{\beta\delta}\Bigl(\ffrac{\one}{(\bz-\bw)^2} - \bar{T}(\bw) + \btwr\Bigr) \\
+ J^{\alpha\gamma}\ffrac{\one}{(z-w)^2}\bigl(:\bar{\psi}^{\beta}(\bw)\bar{\psi}^{\delta}(\bw):
+(\bz-\bw):\bder\bar{\psi}^{\beta}(\bw)\bar{\psi}^{\delta}(\bw): + \dots \bigr)\\
+ J^{\beta\delta}\ffrac{\one}{(\bz-\bw)^2}\bigl(:\psi^{\alpha}(w)\psi^{\gamma}(w):
+(z-w):\der\psi^{\alpha}(w)\psi^{\gamma}(w): + \dots \bigr) + \text{reg.}
\Biggr)
\end{multline}
where we take into account contributions of all possible single and double contractions, and denote by $\twr$ and $\btwr$ contributions of descendants from levels $(n,0)$ and $(0,n)$ with $n>2$, respectively. We note that $ S_{\alpha\beta}S_{\gamma\delta}J^{\alpha\gamma}J^{\beta\delta} =-2$ and $ S_{\alpha\beta}S_{\gamma\delta}J^{\alpha\gamma}=J_{\beta\delta}$. Therefore, the OPE is
\begin{multline}\label{interch-OPE}
\enrg(z,\bz)\enrg(w,\bw) = 2 \Bigl(\ffrac{\one}{(z-w)^2(\bz-\bw)^2} - \ffrac{2T(w)}{(\bz-\bw)^2} - \ffrac{2\bar{T}(\bw)}{(z-w)^2}
+ \ffrac{\twr}{(\bz-\bw)^2} + \ffrac{\btwr}{(z-w)^2} \Bigr)\\
- \ffrac{(z-w)\der T(w)+\dots}{(\bz-\bw)^2}
- \ffrac{(\bz-\bw)\bder \bar{T}(\bw)+\dots}{(z-w)^2} \; + \; \text{reg.}
\end{multline}
We note that the right-hand side in the relation~\eqref{enrg-enrg} is consistent with the
OPE~\eqref{interch-OPE} in a sense that commuting two $\enrg(z,\bz)$
fields we can get only the identity and a linear combination of modes of
$T(z)$ and $\bar{T}(\bz)$ and of their descendants, due to presence of
the symplectic form $ J_{\alpha\beta}$ in~\eqref{interch-comm-rel}.

The OPE of $\enrg(z,\bz)$ with $T(w)$ and $\bar{T}(\bw)$ is quite
obvious because the field $\enrg(z,\bz)$ is primary for both chiral
and anti-chiral stress tensors.
We describe below in Sec.~\ref{sec:interch-simple-2} the vacuum
module over the interchiral algebra $\interchalg$ and some other its simple modules explicitly using this ``vertex-operator'' formulation.

\subsection{The interchiral algebra in the twisted model}\label{sec:interchalg-tw}

It turns out that we have also a representation of the interchiral algebra  in the twisted sector of the
symplectic fermions,  where the fermionic  modes are half-integer~\cite{Kausch}. As we have seen in Sec.~\ref{sec:scal-lim-tw}, this sector corresponds to  the scaling limit of
the anti-periodic $\gl(1|1)$ spin-chain from
Sec.~\ref{subsec:antiper-mod}.
The symplectic-fermion representation
of the universal enveloping algebra $U\interinfin$
%on $\bimnch$ --
in  the twisted model is given by the expressions~\eqref{eq:sympl-ferm-tw-interLie} for the generators $\enrg_{m,n}$, $\strt_{m,n}$, and $\bstrt_{m,n}$.
%, where the fermionic modes are shifted by one-half, {\it i.e.}, with the indexes $n,m$ for the generators being still integers.
Using the local operators or modes of the local fields $\enrgcyl(x,\tau)$ and the energy-momentum tensors in the half-integer sector, we obtain the corresponding representation of the interchiral algebra in the anti-periodic model. And once again, taking then the completion  $\overline{\ssp}_{\infty}$ of $\spinf$ as in~\eqref{interinfinco-def}, in this half-integer sector, we obtain the representation of $\interchalg$ as the representation of $U\overline{\ssp}_{\infty}$, which is also a quotient of $U\interinfinco$.

\subsection{Modules over the interchiral algebra $\interchalg$} \label{sec:simpl-JTL-Vir}

In this new  section, we describe the structure of simple and indecomposable modules over the interchiral algebra~$\interchalg$ that is defined as the representation of the completion $\interinfinco$.
Recall that we already described all simple modules $\interM{j}$ over $U\interinfin$ that appear in the whole space $\Hilb$ of the scaling states in Sec.~\ref{sec:bimod-scal-lim} and~\ref{sec:interinfin-sm},
% as direct limits of simple $\JTL{N}$- or $\interfin{N}$-modules,
 together with the bimodule structure of $\Hilb$ over $U\interinfin$ and the centralizer $\LQGodd$.
 Here, we give  in Sec.~\ref{sec:interch-simple} an explicit construction of the module $\interM{1}$ and then description of the vacuum module over the interchiral algebra $\interchalg$, and the twisted vacuum module in Sec.~\ref{sec:vac-mod-tw}.  Finally,  indecomposable modules over the interchiral algebra are analyzed in Sec.~\ref{sec:indecompmod}.

% The Virasoro algebra content of simple $U\interinfin$-modules was described in Sec.~\ref{sec:interinfin-sm}.

\subsubsection{The module $\interM{1}$}\label{sec:interch-simple}
Recall that $\interM{j}$ denotes the module over
% the interchiral algebra $\interchalg$
$U\interinfin$ which is obtained in the scaling (direct) limit of the JTL simple modules $\AIrrTL{j}{(-1)^{j+1}}$. Suppose we wish to  check directly that $\interM{j}$ are simple  modules over the algebra $U\interinfin$ and that their completions $\interMco{j}$ in the sense of the formal topology, see~\eqref{eq:cHilb}, are simple modules over~$\interchalg$. Our strategy consists in a few steps: (i) we
%recall the discussion in Sec.~\ref{sec:interch-interinfin}
state that the  $\strt_{m,n}$  and $\bstrt_{m,n}$ basis elements of $\interinfin$ generate the algebras of endomorphims of simple modules (as graded vector spaces) over the left and right Virasoro algebras, respectively; (ii) we then recall that the generators  $\enrg_{m,n}$ are modes of the interchiral field $\enrg(z,\bz)$ of dimension $(1,1)$  and (iii) finally we use the operator-state  correspondence in CFT and identify  composite fields in $\enrg(z,\bz)$ and its derivatives  that produce the direct summands in~\eqref{JTL-simple-Vir-sum} for a few explicit examples.

In our analysis, we use the $s\ell(2)$ action or ``global $SU(2)$ symmetry'' of the symplectic fermions commonly known as Kausch's $s\ell(2)$ action~\cite{Kausch} which we denote by  $\nSU(2)$. The generators of the  $\nSU(2)$ in the non-chiral symplectic-fermion theory are
\begin{equation}\label{eq:Kausch-SU}
Q^a=d^a_{\alpha\beta}\left\{i\phi^\alpha_0\psi^\beta_0+\sum_{n=1}^\infty
\left(\ffrac{\psi^\alpha_{-n}\psi^\beta_n}{n} + \ffrac{\bar{\psi}^\alpha_{-n}\bar{\psi}^\beta_n}{n}\right)\right\}
\end{equation}
where $d^a_{\alpha\beta}$ are defined in~\eqref{da-mat}, and with $[Q^a,Q^b]=f^{ab}_cQ^c$ and $f^{01}_2=-1$.
We note  that the generators  $\strt_{m,n}$ and  $\bstrt_{m,n}$ commute with this $\nSU(2)$ action. Meanwhile, it is  important to note   that the generators $\enrg_{m,n}$ \textit{do not} commute with the $\nSU(2)$ and belong to the zero isospin-projection in the triplet $\nSU(2)$-module.

First, we can consider chiral and anti-chiral sectors separately to show
%that the $\strt_{m,n}$  and $\bstrt_{m,n}$ generate the algebras of endomorphims of a simple module (as a graded vector space) over the Virasoro, or that
that simples $\VX_{j,1}$ over the Virasoro Lie algebra with $c=-2$ are also simples over the Lie algebra of $\strt_{m,n}$'s, and similarly for the anti-chiral part. To analyze the chiral sector, it is technically easier to work in a smaller space -- in the chiral symplectic-fermions theory. Recall
that the scaling limit of the open $\gl(1|1)$ spin-chains described in~\cite{ReadSaleur07-2} and in Sec.~\ref{subsec:bimodopen} gives this chiral  LCFT. The symmetry algebra of this theory, which is now the centralizer of the Virasoro algebra $\Vir(2)$, is a representation of the full quantum group
 $\LQG$ at $\q=i$. We recall that this centralizer can be equivalently described as the semi-direct product of the $\gl(1|1)$ and
 the enveloping algebra $U s\ell(2)$. The generators of the $U s\ell(2)$ are the divided powers of the quantum-group generators and their action coincides with the  action of Kausch's global $s\ell(2)$,  while the generators $\psi^{1,2}_0$ of $\gl(1|1)$ correspond to horizontal arrows in the bimodule diagram in Fig.~\ref{openbimodule-cont}.
  On the other hand, the centralizer of $\LQGi$ is an associative algebra of endomorphisms of graded vector spaces $\Endo(\Hilb)=\bigoplus_{n,m\in\oN}\Hom(\Hilb^{(n)},\Hilb^{(m)})$, where $\Hilb^{(n)}$ are finite-dimensional homogeneous subspaces of the $n$th energy level, and such that they commute with $\LQGi$. This centralizer obviously contains  Virasoro $\Vir(2)$ but it is a bigger algebra -- a completion\footnote{The completion of the Virasoro contains in particular
  projectors onto a fixed, $k$th, energy subspace. Such operators are obviously in the centralizer of $\LQGi$ but they are not in the universal enveloping of the Virasoro Lie algebra.  Of course, there is no double-centralizing property in infinite-dimensional spaces and this is why an appropriate completion is necessary for describing centralizers of  quantum groups in the CFTs.}
  % but how to define the appropriate completion precisely is a subject of a further work.}
of $\Vir(2)$.

The chiral space  $\Hilb$  is the direct sum $\WP^{+} \oplus\WP^{-}$ of bosonic and fermionic degrees of freedom, and it is  decomposed
 onto modules over the  product $U \SU(2)\boxtimes\Vir(2)$  as in~\eqref{W-proj-ch-decomp}.  The full bimodule structure is given in Fig.~\ref{openbimodule-cont}.
 We note  that the generators $\strt_{m,n}$ are well-defined operators from $\Endo(\Hilb)$ just introduced  and they commute with the $\LQGi$ action. They are therefore in the centralizer of $\LQGi$. The socle of $\Hilb$ is the intersection of the kernels of the fermionic generators of the $\gl(1|1)$ part of $\LQGi$. We can thus restrict the action of all $\strt_{m,n}$'s onto the socle of the modules in~\eqref{W-proj-ch-decomp}. Using then the $U s\ell(2)$ symmetry, we obtain that simple modules over the Lie algebra generated by $\strt_{m,n}$'s are the same, as graded vector spaces, as the simple modules over the Virasoro. Indeed, the vacuum modules $\VX_{1,1}$ appears with multiplicity one and it is the $U s\ell(2)$ invariant but all $\strt_{m,n}$'s are also $Us\ell(2)$ invariants and they thus generate the same $\VX_{1,1}$ from the vacuum state. Further, we take the highest-weight state of each higher multiplet for the $Us\ell(2)$ and see that all $T_{m,n}$'s generate from it a  module identified with (isomorphic as a graded vector space to) the simple Virasoro module $\VX_{j,1}$. This proves our claim that all $T_{m,n}$, for $m,n\in\oZ$, are graded-vector spaces endomorphisms of the simple Virasoro modules $\VX_{j,1}$, for $j\geq1$.
  Moreover, we also check  that the fermion bilinears $\strt_{m,n}$ are the only bilinears that commute with $\LQGi$.  They are therefore generators of the centralizer of $\LQGi$.
 This means that simple modules over Virasoro are also simple modules over the Lie algebra with the basis given by $\strt_{m,n}$.
 We use the same arguments to show that   the Lie algebra with the basis given by $\bstrt_{m,n}$
 has simple modules $\bar{\VX}_{j,1}$.

In the full non-chiral theory, the left (or right) Virasoro algebra has the same isomorphism classes of simple and indecomposable modules as in the chiral theory, see a discussion in Sec.~\ref{sec:non-chiral-sympl-fer} below.
We thus have essentially the same representation of $\strt_{m,n}$ -- the same expression for the generators as in the chiral theory but acting in a much bigger space, so a difference is only  in `multiplicities' which are infinite now and correspond to modules over the right Virasoro $\overline{\Vir}(2)$. Summarizing,
we give the following statement. The simple modules $\VX_{j,1}\boxtimes\bar{\VX}_{k,1}$ over the left-right Virasoro algebra $\VirN(2)$   are also simple modules over the Lie algebra with the basis given by  $\strt_{m,n}$ and  $\bstrt_{m,n}$, with $m,n\in\oZ$.

\medskip

We then continue by analyzing the vacuum module  over $\interchalg$ which  contains the identity field. By definition, the vacuum module is a vector space generated from one state -- the vacuum $\vac$ -- by those  $\enrg_{n,m}$, $\strt_{n,m}$ and $\bstrt_{n,m}$  that have  negative indexes $n$ and $m$ (note that we used here our first definition of the interchiral algebra from Sec.~\ref{sec:interdef}).
It turns out that this vector space as a $\interinfin$-module is decomposed over the left-right Virasoro exactly like $\interM{1}$. We stated just above that the  $\strt_{n,m}$ and $\bstrt_{n,m}$ generators  of $\interinfin$ generate the endomorphism algebras of a simple Virasoro module $\VX_{j,1}$ or $\bar{\VX}_{j,1}$, respectively.
Therefore,
to extract the Virasoro content of the vacuum module and compare it with $\interM{1}$ we only need to identify highest-weight vectors for $\VirN(2)$ which are generated from the vacuum state by  the $\enrg_{n,m}$ basis elements. Equivalently, using the operator-state  correspondence, we are going to identify left-right Virasoro primary fields  generated from the identity   by a field having these  $\enrg_{n,m}$ as its modes. Such a field was identified in Sec.~\ref{sec:interch-field} with
the interchiral field $\enrg(z,\bz)=S_{\alpha\beta}\psi^{\alpha}(z)\bar{\psi}^{\beta}(\bz)$.
The  $\enrg(z,\bz)$  is a primary field and corresponds to the state
\begin{equation*}
|S\rangle = \lim_{z,\bz\to0}\enrg(z,\bz)\vac
\end{equation*}
which belongs to the subspace $\VX_{2,1}\boxtimes\bar{\VX}_{2,1}$.  Note that this subspace is in the triplet sector with respect to $\nSU(2)$. Then,
$\strt_{n,m}$ and  $\bstrt_{n,m}$  generate this $\VirN(2)$-module from the highest-weight state $|S\rangle$. We thus obtained first two terms in the   decomposition onto  left-right Virasoro modules
\begin{equation}\label{interch-vac-mod}
\interM{1}|_{\VirN(2)}  =  \bigoplus_{j\geq1} \VX_{j,1}\boxtimes\bar{\VX}_{j,1},
\end{equation}
which is the direct sum in~\eqref{JTL-simple-Vir-sum} for $j=1$.
Highest-weight vectors in other direct summands in~\eqref{interch-vac-mod} can be constructed by taking appropriate composite fields in $\enrg(z,\bz)$ and in its derivatives $\der^j\bar{\der}^j\enrg(z,\bz)$ applied to the vacuum $\vac$ in the limit $z,\bz\to0$. So, the next primary field from the decomposition~\eqref{interch-vac-mod} should have conformal dimensions $(3,3)$ and it might be identified with the composite field  $\nord{\der\bar{\der}\enrg(z,\bz)\enrg(z,\bz)}$, up to descendants of $\enrg(z,\bz)$ and of the identity field on the level $(3,3)$. We note that this highest-weight  state belongs to a $5$-dimensional $\nSU(2)$-module.  This result can be obtained by a direct calculation using the double mode expansion~\eqref{enrgpl-mod-exp} together with normal-ordering prescriptions or using OPE formulas given in~\eqref{interch-OPE}.

 The analysis can be continued to construct in a similar way states belonging to higher $\nSU(2)$-multiplets and contributing thus to new primary fields. Using~\eqref{eq:Kausch-SU}, we obtain that the composite field
 \begin{equation}\label{comp-f}
 \nord{\prod_{j=0}^n\der^j\bar{\der}^j\enrg(z,\bz)}
 \end{equation}
  has the conformal weight $(\Delta_{n+2,1},\Delta_{n+2,1})$ and  it is in a $(n+1)$-dimensional $\nSU(2)$-module, up to contributions from lower $\nSU(2)$-multiplets\footnote{Strictly speaking, the composite field constructed belongs actually to a direct sum $\oplus_{j=0}^{n/2}\SUrep{j}$ where $\SUrep{j}$ denotes the $(2j+1)$-dimensional $\nSU(2)$-module.}. This state therefore belongs  to the direct sum $\oplus_{j=1}^{n+2} \VX_{j,1}\boxtimes\bar{\VX}_{j,1}$. By induction, this finally gives the decomposition~\eqref{interch-vac-mod}.

We next observe that the state
\begin{equation*}
|S^2\rangle = \lim_{z,\bz\to0}\nord{\enrg^2(z,\bz)}\vac
\end{equation*}
 is on the level $(2,2)$ of the direct sum $\VX_{2,1}\boxtimes\bar{\VX}_{2,1}\oplus \VX_{1,1}\boxtimes\bar{\VX}_{1,1}$. This means that the square of the interchiral field maps back to the identity module over $\VirN(2)$. A similar analysis for states~\eqref{comp-f} living in higher $\nSU(2)$ multiplets shows that our vector space $\interM{1}$ generated by composite fields in $\enrg(z,\bz)$ and its derivatives is indeed a simple module over $\interinfin$.

\subsubsection{Vacuum module over $\interchalg$}\label{sec:interch-simple-2}
Recall then our discussion about the isomorphism of two different Lie algebras, $\interinfinco$ and $\interchco$, given in Sec.~\ref{sec:interch-interinfin}. The completed algebras, as well as the interchiral algebra $\interchalg$, act on the completed space $\cHilb$ defined in~\eqref{eq:cHilb}.
It is natural then to take the completion of  $\interM{1}=\oplus_n \interM{1}^{(n)}$ as the direct product $\interMco{1}=\prod_n \interM{1}^{(n)}$ of the eigenspaces of $E=L_0+\bar{L}_0$ and declare it as \textit{the vacuum module of the interchiral algebra}. Note that this module has the  basis of $\interM{1}$ and the latter is its dense subspace in the formal topology, see Sec.~\ref{sec:interch-interinfin}.
%Similarly, we define completions of the modules $\interM{j}$
The first question is now whether the vacuum module $\interMco{1}$ is simple or not: making the completion we add many new vectors which could generate invariant subspaces.
However, it turns out that having $v\in\interMco{1}$ written as an infinite combination of the basis elements, i.e., if $v$ is not in the subspace $\interM{1}$ we can find a word in the generators of $\interchalg$ such that the image of $v$ under the action of the word belongs again to $\interM{1}$. For example, for
\begin{equation*}
v=\enrg^{(0)}_0\vac = S_{\alpha,\beta}\sum_{m<0}\sferm^{\alpha}_m\bsferm^{\beta}_m\vac\in\cHilb
\end{equation*}
we can write its image under $(L_0-2)(\bar{L}_0-2)-L_{-2}\bar{L}_{-2}$ as
\begin{equation*}
\bigl((L_0-2)(\bar{L}_0-2)-L_{-2}\bar{L}_{-2}\bigr)v=S_{\alpha,\beta}\sferm^{\alpha}_{-1}\bsferm^{\beta}_{-1}\vac\in\Hilb
\end{equation*}
and the image is just the Virasoro highest-weight state corresponding to the primary field $\enrg(z,\bz)$.

%of each direct summand in the decomposition~\eqref{interch-vac-mod}   over left-right Virasoro is the vacuum module over the interchiral algebra $\interchalg$ and it is a simple module.  Note that we do not need to consider a completion of the full direct sum because different terms in this direct sum belong to different multiplets for $\nSU(2)$, {\it i.e.}, they have different isospins, and all the generators of $\bglinf$ and therefore the local interchiral field as well generate only states of a fixed finite isospin, in contrast to the eigenvalue of $L_0+\bar{L}_0$ which is not fixed under the action of the local operators $\enrg^{(l)}_n$.

 In general, we formulate the following conjecture based on many explicit checks.
 \begin{conj}\label{conj:simples}
 Taking the completions $\interMco{j}$ (those in the formal topology) of the simple $U\interinfin$-modules $\interM{j}$
  %-- such completions then  contain images of $\bglinf$ generators and therefore images of the local operators $\enrg^{(l)}_n$) of the direct summands in~\eqref{JTL-simple-Vir-sum}   for all finite~$j$
  gives simple modules over the interchiral algebra $\interchalg$.
% see~\eqref{Uinterch-hom-2}.
\end{conj}

\subsubsection{The twisted vacuum module}\label{sec:vac-mod-tw}
We finally discuss an example of a simple module over the interchiral algebra $\interchalg$ in the sector with  half-integer fermionic modes.
Recall that we have found in Sec.~\ref{sec:interchalg-tw} an action of the interchiral algebra~$\interchalg$ in the twisted model of symplectic fermions. The full symmetry algebra in this case is the $U s\ell(2)$ as it was stated in Prop.~\ref{prop:interch-centr-tw}. We define then the twisted vacuum module over the interchiral algebra as the space of $s\ell(2)$ invariants.
 The structure of the twisted vacuum
module  can be described in a  way parallel to Sec.~\ref{sec:interinfin-sm} and~\ref{sec:interch-simple}. The generating function of levels obtained as in~\cite{ReadSaleur01}
now involves the characters
\begin{equation}
\chi_{j,2}={q^{(j-1)^2/2}-q^{(j+1)^2/2}\over\eta(q)}
   \end{equation}
   and we have the identity for the Kac character, see App.~\AppChar,
   \begin{equation}
   K_{r,2j+2}=\sum_{s=0}^j \chi_{r-j+2s,2}.
   \end{equation}
   This time, the simple modules of $\JTL{N}^{tw}$ are obtained with  a
   single subtraction according to~\eqref{dimIrrATL-anti}, so their
   left-right Virasoro $\VirN(2)$ content is
   \begin{equation}
   F_{j,(-1)^j}^{(0)}=F_{j,(-1)^j}- F_{j+2,(-1)^j}=\sum_{r=1}^\infty \chi_{r,2}\left(\overline{\chi}_{r-j,2}+\overline{\chi}_{r-j+2,2}+\ldots+\overline{\chi}_{r+j,2}\right).
   \end{equation}
 The  $\JTL{N}^{tw}$ modules are all semi-simple in this model, and so are the
 $\VirN(2)$  modules, according to the decomposition~\eqref{antip-bimod}.
%%  as can be checked using fermions in the \scal limit.
We  thus obtain the left-right Virasoro content in the scaling limit
\begin{equation}\label{JTL-simple-Vir-sum-tw}
\AIrrTL{j}{(-1)^{j}} \mapsto \bigoplus_{r\geq1} \VX_{r,2}\boxtimes\bigl(\bar{\VX}_{r-j,2}\oplus\bar{\VX}_{r-j+2,2}\oplus\ldots\oplus\bar{\VX}_{r+j,2}\bigr),
\end{equation}
%with the conditions~\eqref{cond-sum-star} on the sum.
which is again a direct sum of infinite number of simple modules over $\VirN(2)$.

Let $\interM{j}^{tw}$ denotes the module over
% the interchiral algebra $\interchalg$
$U\interinfin$ (or $U\spinf$) which is obtained in the scaling limit of the $\JTL{}^{tw}$ simple modules $\AIrrTL{j}{(-1)^{j}}$. With an analysis similar to one in Sec.~\ref{sec:interch-simple}, we generate the vacuum module applying by composite fields in $\enrg(z,\bz)$ and its derivatives on the vacuum state, the highest-weight state in $\VX_{1,2}\boxtimes\bar{\VX}_{1,2}$.  The vacuum module $\interM{0}^{tw}$
 decomposition onto $\VirN(2)$-modules is
\begin{equation}\label{interch-vac-mod-tw}
\interM{0}^{tw}|_{\VirN(2)}  =  \bigoplus_{j\geq1} \VX_{j,2}\boxtimes\bar{\VX}_{j,2}
\end{equation}
and it is an irreducible module over $U\spinf$.
We do not give details of calculations as their essentially repeat the previous. Correspondingly, its completion, see a discussion in Sec.~\ref{sec:interch-simple-2}, describes the vacuum module over $\interchalg$ in the twisted sector.

\subsection{Indecomposable modules over $\interchalg$}\label{sec:indecompmod}

To proceed, we must now discuss the scaling limit of the
indecomposable $\JTL{N}$-modules $\APrTL{j}$, which are indecomposable $U\interinfin$-modules denoted by $\interP{j}$, and compare it with
potentially similar structure in the symplectic fermion theory. This
will be facilitated by a preliminary discussion of the latter.

\subsubsection{The structure of  the Virasoro representations in symplectic fermions}\label{sec:non-chiral-sympl-fer}

We already reminded the reader of the  known results about the chiral conformal
field theory of symplectic fermions in
Sec.~\ref{subsec:bimodopen}. We now turn to a similar analysis of the non-chiral theory and its  decomposition  over the left-right Virasoro $\VirN(2)$.
 In this section, we will denote our space of scaling states in the bulk theory by $\bimnch$ in order to not be confused with the decomposition~\eqref{W-proj-ch-decomp} of the chiral theory.

In the integer-mode sector, the space of states $\bimnch$ for the non chiral theory  is  decomposed
into a bosonic sector $\bimnbos$ and the fermionic one $\bimnfer$ with
the $\nSU(2)$ and left-right Virasoro $\VirN(2)$ content~\cite{tobepublished}
\begin{equation}\label{nsf-sl-Vir}
\bimnbos|_{\nSU(2)\boxtimes\VirN(2)} =
\bigoplus_{k\in\oN_0}\SUrep{k}\boxtimes \bimnbos_k, \qquad \bimnfer|_{\nSU(2)\boxtimes\VirN(2)} =
\bigoplus_{k\in\oN-\half}\SUrep{k}\boxtimes \bimnfer_k,
\end{equation}
 where $\SUrep{n}$ denotes now a $(2n+1)$-dimensional $\nSU(2)$-module of the
isospin $n$, and  $\bimn_k$ are $\VirN(2)$-modules which we describe
below.

With the left and right $\Vir(2)$-modules $\bimnl_k$ and
$\bimnr_k$ denoting restriction of $\bimn_k$ on $\Vir(2)$ and
$\overline{\Vir}(2)$, respectively, the bosonic components
$\bimnbos_k$ have the decompositions, with non-negative integer $k$,
\begin{eqnarray}
\bimnbosl_k =
\bigoplus_{a=-k}^k\;\bigoplus_{n> k-a}\VP_{n,1}\tensore\bar{\VX}_{n+2a,1},\qquad
\bimnbosr_k =
\bigoplus_{a=-k}^k\;\bigoplus_{n>
  k-a}\VX_{n,1}\tensore\bar{\VP}_{n+2a,1},\qquad k\in\oN_0,\label{nsf-bos-sl-lrVir-dec2}
\end{eqnarray}
and the fermionic components $\bimnfer_k$ decompose as, with a
positive half-integer $k$,
\begin{equation}
\bimnferl_k =
\bigoplus_{a=-k}^k\;\bigoplus_{n>
  k-a}\VP_{n,1}\tensore\bar{\VX}_{n+2a,1},\qquad
\bimnferr_k =
\bigoplus_{a=-k}^k\;\bigoplus_{n>
  k-a}\VX_{n,1}\tensore\bar{\VP}_{n+2a,1},\qquad k\in\oN-\half,\label{nsf-fer-sl-lrVir-dec2}
\end{equation}
where we use the tensor product $\tensore$ for two Virasoro
modules in order to show that `multiplicities'  of left-
or right-Virasoro staggered modules in the decompositions are simple subquotients over right
or left Virasoro, respectively. We stress that direct summands
in~\eqref{nsf-bos-sl-lrVir-dec2} and~\eqref{nsf-fer-sl-lrVir-dec2} are
not $\VirN(2)$-modules and the rest of this section is devoted to
describing a subquotient structure for the product of the two Virasoro algebras.

The sector with the trivial $\nSU(2)$ action ($k=0$) has a particular
interest because it contains the vacuum state $\vac$ and its
logarithmic partner $\lvac$ ($L_0\lvac = \bar{L}_0\lvac = \vac$),
\begin{equation}\label{nsf-vac-lrVir}
\bimnbosl_0 =  \bigoplus_{n\geq
  1}\VP_{n,1}\tensore\bar{\VX}_{n,1},
  \qquad
  \bimnbosr_0 =  \bigoplus_{n\geq
  1}\VX_{n,1}\tensore\bar{\VP}_{n,1}, \qquad\vac,\lvac\in\VP_{1,1}\tensore\bar{\VX}_{1,1}.
\end{equation}
These two states are the only bosonic states with conformal dimension
$(0,0)$. There are two other states of the same conformal dimension in
the fermionic sector $\bimnfer_{1/2}$ in agreement with~\cite{GK}.

We use the decompositions~\eqref{nsf-bos-sl-lrVir-dec2}
and~\eqref{nsf-fer-sl-lrVir-dec2} over the left and right Virasoro in
studying the subquotient structure over their product $\VirN(2)$.  We
first note an obvious fact that right-Virasoro algebra elements are
represented on each $\bimn_k$ as  intertwining operators for the left-Virasoro action,
{\it i.e.}, there is a homomorphism from the universal enveloping algebra $U
\bigl(\overline{\Vir}(2)\bigr)$ to $\Endo_{\Vir(2)}(\bimn_k)$. Therefore, we should begin with a  description of the space
$\HomVir(\VP_{n,1},\VP_{n',1})$. From  the subquotient structure~\eqref{sf-chiral-stagg-pic} of
the staggered modules, we deduce that the $\Hom$-space is one-dimensional
only for $n=n'$ or $n=n'\pm1$ and zero-dimensional otherwise. In the case
$n=n'$, the image of a basis element in the $\Hom$-space is the simple
submodule $\VX_{n,1}$ while the image in the case $n=n'+1$ is the indecomposable Kac
module $\VX_{n,1}\to\VX_{n+1,1}$ and the image in the case $n=n'-1$ is the contragredient Kac
module $\VX_{n,1}\to\VX_{n-1,1}$. Ananlysing then all possible
endomorphisms on $\bimnl_k$ respecting the left Virasoro and combining with the
decomposition of $\bimnr_k$ over the right Virasoro, we end up in
diagrams depicting a subquotient structure over the
product $\VirN(2)$ of two Virasoro algebras for each~$\bimn_k$.

\begin{figure}\centering
{\scriptsize
  \def\svgwidth{450pt}
    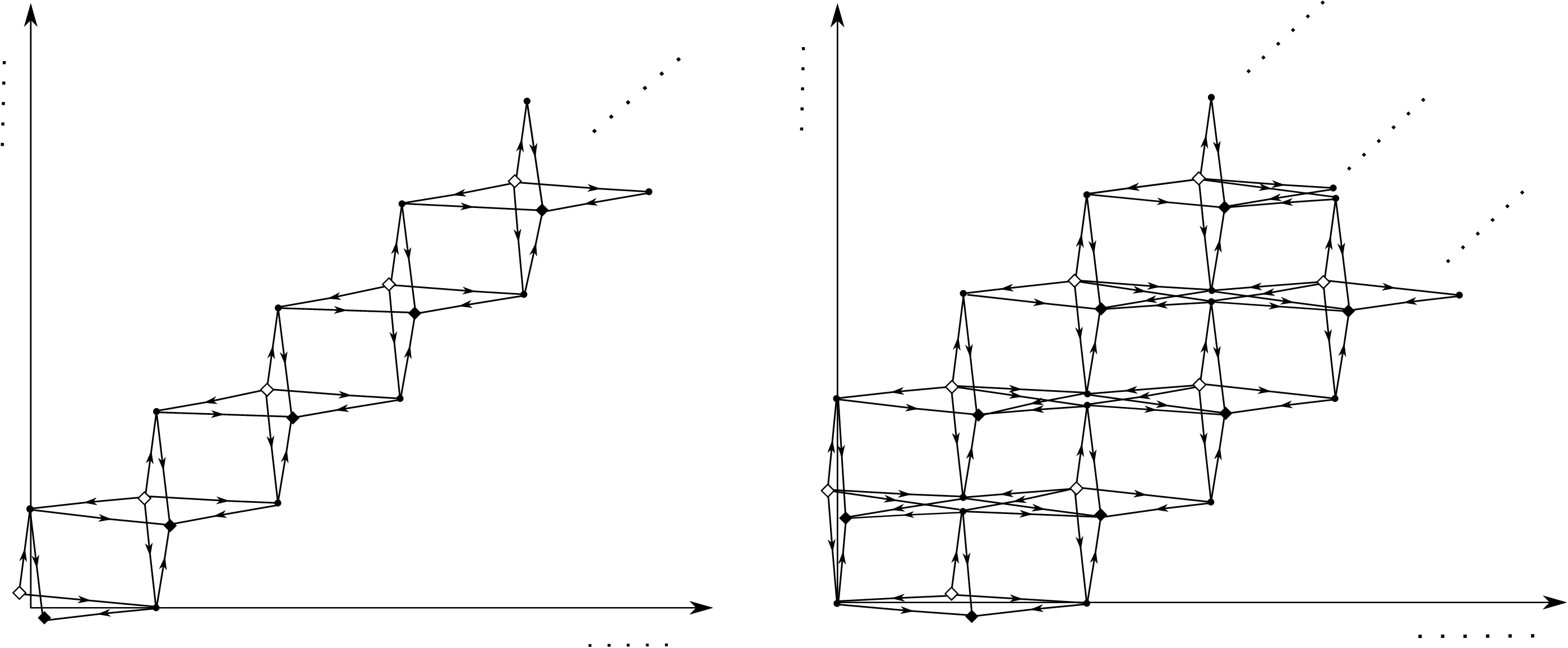
}
    \caption{Module structure over $\VirN(2)=\Vir(2)\boxtimes\overline{\Vir}(2)$
    for the vacuum sector $\bimnbos_0$ (with zero $\nSU(2)$-isospin) on the left
    diagram while the right one is for the doublet-sector $\bimnfer_{1/2}$.  Each node with a coordinate
     $(\bar{n},n')$ is a simple subquotient over
     $\VirN(2)$ with the conformal
     weight $(\Delta_{n',1},\bar{\Delta}_{n,1})$. Vertical arrows represent
    the action of the left Virasoro $\Vir(2)$ and horizontal arrows of the
    right Virasoro $\overline{\Vir}(2)$.}
    \label{sfnonchiralvirvac}
    \end{figure}
\begin{figure}\centering
{\small
\mbox{}\bigskip
  \def\svgwidth{350pt}
    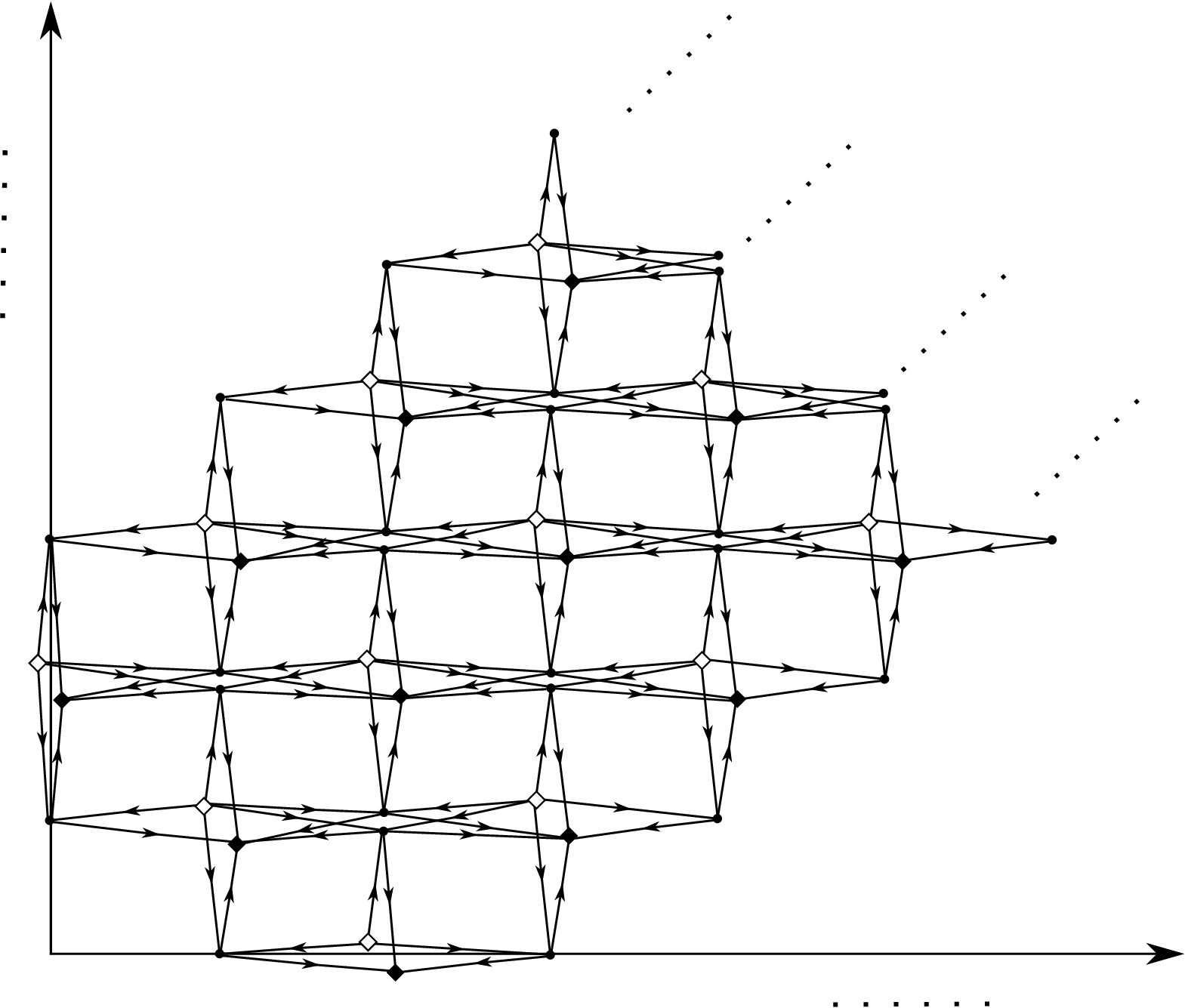
}
   \caption{$\VirN(2)$-module subquotient structure for the
     triplet-sector $\bimnbos_1$.  Each node with a coordinate
     $(\bar{n},n')$ is a simple subquotient over
     $\VirN(2)=\Vir(2)\boxtimes\overline{\Vir}(2)$ with the conformal
     weight $(\Delta_{n',1},\bar{\Delta}_{n,1})$. Vertical arrows
     represent the action of the left Virasoro $\Vir(2)$ and
     horizontal arrows of the right Virasoro $\overline{\Vir}(2)$.
   }
   \label{sfnonchiralVirtripl}
    \end{figure}

The simplest is for $k=0$, where we obtain the structure represented
 on the left in Fig.~\ref{sfnonchiralvirvac} using the decompositions~\eqref{nsf-vac-lrVir}. On the right in
Fig.~\ref{sfnonchiralvirvac}, we show the subquotient structure for
$k=1/2$ as well, which is in the fermionic sector $\bimnfer$.
For $k=1$ meanwhile we have the structure represented on
Fig.~\ref{sfnonchiralVirtripl}. More detailed analysis of the
$\VirN(2)$-module structure on each $\bimnch^{\pm}_k$  can be also found in a companion
paper~\cite{tobepublished}.

Each node with a coordinate $(\bar{n},n')$ in the diagrams on
Fig.~\ref{sfnonchiralvirvac} and Fig.~\ref{sfnonchiralVirtripl}
corresponds to a simple subquotient over
$\VirN(2)=\Vir(2)\boxtimes\overline{\Vir}(2)$ with the lowest
conformal weight $(\Delta_{n',1},\bar{\Delta}_{n,1})$ and arrows show
the action of both Virasoro algebras -- the Virasoro $\Vir(2)$ acts in
the vertical direction (preserving the coordinate $\bar{n}$), while
the right Virasoro $\overline{\Vir}(2)$ acts in the horizontal
way. Some values $(\bar{n},n')$ occur twice and those
nodes/subquotients are separated slightly for clarity; we denote top
subquotients by~$\diamond$, bottom ones by~{\scriptsize$\vardiamond$},
and subquotients in the middle level that have incoming as well as
outgoing arrows are denoted by~$\bullet$, in order to make reading the
diagrams easier. We also note that some horizontal arrows (of the same
direction) connecting a~$\diamond$ with a pair of~$\bullet$'s at the
same coordinate or a pair of~$\bullet$'s with
a~{\scriptsize$\vardiamond$}, say connecting $(\bar{1},3)$ with two
nodes at $(\bar{2},3)$ in Fig.~\ref{sfnonchiralVirtripl}, are actually
doubled which means that right-Virasoro elements corresponding to the
pair of arrows with the same source/sink map to/from a fixed linear
combination of the pair of subquotients depicted by~$\bullet$. We use
such a linear combination for each pair of~$\bullet$'s that
corresponds to a basis adapted for the left-Virasoro action\footnote{Of course, it is a matter of a
  convention, and we could choose a basis where horizontal arrows are
  not doubled but the vertical ones would be doubled. It is important to note that there exists no a basis without the doubled arrows in the corresponding diagram.} that has
no (vertical) doubled arrows and is explicitly decomposed onto
diagrams from~\eqref{sf-chiral-stagg-pic}. We finally assume in our
diagrams that right-Virasoro elements map states from two neighbour
$\diamond$'s, say $(\bar{1},3)$ and $(\bar{3},3)$ in
Fig.~\ref{sfnonchiralVirtripl}, to linearly independent subquotients
of the same coordinate, say $(\bar{2},3)$, in the middle level.

The indecomposable staggered $\Vir(2)$-modules
$\VP_{n',1}$ introduced in~\eqref{sf-chiral-stagg-pic} and appearing
in the decompositions~\eqref{nsf-bos-sl-lrVir-dec2}
and~\eqref{nsf-fer-sl-lrVir-dec2}  can be recovered by ignoring all the
horizontal arrows, while staggered $\overline{\Vir}(2)$-modules
$\bar{\VP}_{n,1}$
 are obtained by ignoring all the vertical arrows
in the  diagrams and taking an appropriate linear combination of two
simple subquotents for each
pair of middle-level nodes denoted by~$\bullet$ and sharing the same coordinate  $(\bar{n},n')$.

We note that the interchiral field $\enrg(z,\bz)$ from~\eqref{enrg-field-pl} generating our interchiral algebra $\interchalg$ belongs to the submodule identified in the diagram for the triplet sector in Fig.~\ref {sfnonchiralVirtripl} with the node {\scriptsize$\vardiamond$} at the position~$(\bar{2},2)$.

We will comment on the similarity and consistency of these figures with our results
of the (limits of) $\JTL{N}$-modules in the next subsection. Notice that the way the left and right
indecomposables are glued together in (the vacuum sector of) the non-chiral theory is similar
that what is observed for super WZW models on $\gl(1|1)$ and $\mathfrak{su}(2|1)$~\cite{SaleurSchomerus}.

Finally, we stress that the symplectic fermion theory also admits
action of $\gl(1|1)$ which connects in particular the bosonic
$\bimnbos$ and fermionic sectors $\bimnfer$ of the theory. Its action
is quite straightforward, and we refrain from discussing this for
simplicity.

\subsubsection{Indecomposable  $U\interinfin$ and $\VirN(2)$-modules}\label{sec:indecomp-Smod}
Recall that we denote the scaling limit of indecomposable
spin-chain modules $\APrTL{j}$ by $\interP{j}$. These are indecomposable but reducible modules over $U\interinfin$, recall the discussion in Sec.~\ref{sec:bimod-scal-lim}, with subquotient
structure given by diagrams in Fig.~\ref{ind-chain-mod-fig} which is an infinite analogue
of the finite ladders in Fig.~\ref{FF-JTL-mod}.
As we discussed in previous sections, the interchiral algebra $\interchalg$ acts on the completed space $\cHilb$. From its definition in~\eqref{eq:cHilb}, we obviously have a decomposition over~$\interchalg$:
\begin{equation}\label{eq:cHilb-decomp}
\cHilb=\bigoplus_{j\in\oZ} \interPco{j},
\end{equation}
where each direct summand is the completion (as defined in Sec.~\ref{sec:interch-interinfin}) of the indecomposable $U\interinfin$-modules  $\interP{j}$.
The main objective of this section is to show that the structure of $\VirN(2)$-modules
  described above is consistent with the diagrams in Fig.~\ref{ind-chain-mod-fig}.
  %scaling limit of the bimodule over $\JTL{N}$ and $\centJTL$.
%We recall that in Thm.~\ref{thm:centr-interLie}  and in Sec.~\ref{sec:bimod-scal-lim}  we stated that  the centralizer of $U\interinfin$ is  essentially the same as for $\JTL{N}$ algebras -- it is given  by the ``integer-spin'' image of $\LQGodd$. The structure of indecomposables for $U\interinfin$ is thus also essentially the same with a difference that diagrams have now infinitely many components. We present corresponding diagrams in Fig.~\ref{ind-chain-mod-fig} which is an infinite analogue
%  of the finite ladders in Fig.~\ref{FF-JTL-mod}.
Comparing the structure of modules over $U\interinfin$ and $\VirN(2)$, which is a subalgebra in $U\interinfinco$,  we then make conclusions on the structure of indecomposable modules $\interPco{j}$
over the interchiral algebra $\interchalg$.

  Having obtained in Sec.~\ref{sec:interinfin-sm} the content under $\VirN(2)$ of the simple
subquotients, we study a filtration\footnote{Recall that a
  filtration of $A$-module $M$ by its submodules $M_i$, with $0\leq
  i\leq n$, is called a sequence
  of embeddings $0=M_0\subset M_{1}\subset\dots \subset M_i
  \subset\dots \subset M_{n-1}\subset M_n=M$.} of $\interP{j}$ by $\VirN(2)$-modules described in
Sec.~\ref{sec:non-chiral-sympl-fer}.
Each $\interP{j}$ as a $\VirN(2)$-module
has the same as in Fig.~\ref{ind-chain-mod-fig} subquotient
structure, where a crucial point
 is that a node stands now for an
infinite direct sum of simples over $\VirN(2)$, \textit{i.e}, it is a decomposable $\VirN(2)$-module.
 \begin{figure}\centering
   \includegraphics[scale=0.9]{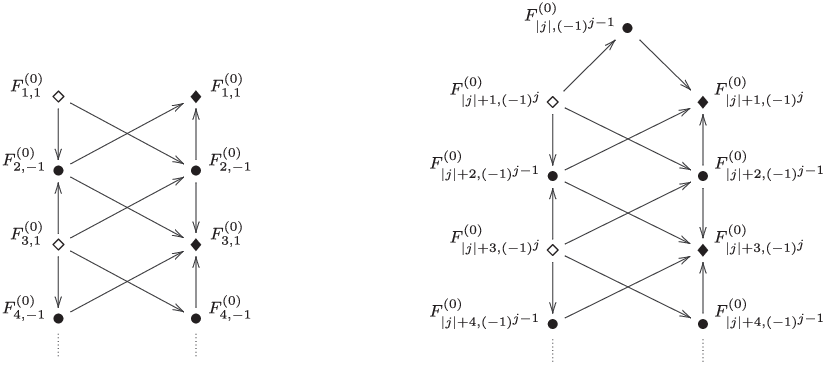}
      \caption{\sscal limit of indecomposable spin-chain modules
      $\APrTL{j}$  for $j=0$ on the left and $j\ne0$
      on the right side. These are indecomposable $U\interinfin$-modules $\interP{j}$ with simple subquotients and the Virasoro character $F^{(0)}_{|j|,(-1)^{j+1}}$ of each subquotient is given by~\eqref{Fs-Vir-content-gen}.}
    \label{ind-chain-mod-fig}
    \end{figure}
  This subquotient structure agrees
with the final results of Sec.~\ref{sec:non-chiral-sympl-fer}.
 Indeed, the $\VirN(2)$-modules~\eqref{nsf-sl-Vir} $\bimnch|_{\h=j} \equiv
\bimnch_{(j)}=\bigoplus_{k\geq |j|}\bimnbos_k$ in each sector $\h=j$
have a filtration
 \begin{equation}\label{VirN-filtr}
 \Soc\bigl(\bimnch_{(j)}\bigr)\subset\Mid\bigl(\bimnch_{(j)}\bigr)\subset \bimnch_{(j)}\to\Top\bigl(\bimnch_{(j)}\bigr)\to0
 \end{equation}
by $\VirN(2)$-submodules (which are described below)  consistent with
  the structure of $\interP{j}$
  presented on Fig.~\ref{ind-chain-mod-fig}, where each
  node is described by formulas in~\eqref{Fs-Vir-content-gen}
  and~\eqref{Fs-Vir-content}. In
  particular, for $j=0$, we obtain  from decompositions~\eqref{nsf-bos-sl-lrVir-dec2} and~\eqref{nsf-fer-sl-lrVir-dec2} the left Virasoro structure on $\bimnch$, forgetting about the $\nSU(2)$ content for a while,
\begin{equation}\label{nsf-bos-Sz-zero}
\bimnch_{(0)}^{(l)} =
\bigoplus_{n,m\geq0}\bigl(2\min(n,m)+1\bigr)\VP_{2n+1,1}\boxtimes\bar{\VX}_{2m+1,1} \; \oplus \;
\bigoplus_{n,m\geq1}2\min(n,m)\VP_{2n,1}\boxtimes\bar{\VX}_{2m,1},
\end{equation}
and the right structure $\bimnch^{(r)}$ at the grade $\h=0$ is given
by the substitutions $\VP\to\VX$ and $\bar{\VX}\to\bar{\VP}$. This
decomposition allows us  immediately to compare particular
submodules/quotients -- the terms in the filtration~\eqref{VirN-filtr} -- with the ones in $\interP{0}$ on
Fig.~\ref{ind-chain-mod-fig}. First, the socle (the maximal semisimple
submodule) $\Soc(\interP{0})$ has the Virasoro character
\begin{equation*}
\Char\left[\Soc(\interP{0})\right] = \sum_{j-\mathrm{odd}} F^{(0)}_{j,1} =
\sum_{j_1,j_2\geq0}\bigl(2\min(j_1,j_2)+1\bigr)\chi_{2j_1+1,1}\bar{\chi}_{2j_2+1,1} +
\sum_{j_1,j_2\geq1}2\min(j_1,j_2)\chi_{2j_1,1}\bar{\chi}_{2j_2,1}
\end{equation*}
which coincides with the character of $\Soc(\bimnch_{(0)})$ easily
extracted from~\eqref{nsf-bos-Sz-zero}. The same is true for the top
parts (the maximal semisimple quotient) $\Top$ of $\bimnch_{(0)}$ and
$\interP{0}$, which are isomorphic to the socle in our case. Second, we
compare the middle-level subquotient $\Mid/\Soc$ in the
filtration~\eqref{VirN-filtr} consisting of all subquotients/nodes
having in-arrows from the top and out-arrows directed to the
bottom/socle. The middle level of $\interP{0}$, constituting of two
copies for each $F^{(0)}_{2j,-1}$ (see Fig.~\ref{ind-chain-mod-fig}),
has the same $\Vir(2)\boxtimes\overline{\Vir}(2)$ content
as the $\VirN(2)$-module $\bimnch$ has at the $\h=0$ grade, where we
use~\eqref{nsf-bos-Sz-zero} and~\eqref{sf-chiral-stagg-pic}. Similarly
one can proceed for any $\h=j$ grade.

 The  Kausch's  $\nSU(2)$ action discussed in
Sec.~5 of the \first paper~\cite{GRS1}, see also Sec.~\ref{sec:interch-simple}, `splits' the
sum~\eqref{JTL-simple-Vir-sum} into sectors (direct summands) of
different isospins\footnote{In principle, it is possible to find the $\nSU(2)$ content
of the $\interP{j}$ modules in Fig.~\ref{ind-chain-mod-fig}
using the lattice realization of the Kausch's $\nSU(2)$ given in
Sec.~5.3 in~\cite{GRS1}
but we do not do it in this paper.} -- e.g., with $k\geq {j\over 2}$ and $j\geq1$, in
the bosonic case.
 For each sector, the formulas~\eqref{nsf-bos-sl-lrVir-dec2}
and~\eqref{nsf-fer-sl-lrVir-dec2} give the indecomposable structure
under $\VirN(2)$. The simplest
is for $k=0$, where we obtain the structure represented on
the left in Fig.~\ref{sfnonchiralvirvac}. By a slight change of `geometry' the
diagram for this $\VirN(2)$-module can be represented as well as on
Fig.~\ref{sfnonchiralvirvacFF},
\begin{figure}\centering
    \includegraphics[scale=0.85]{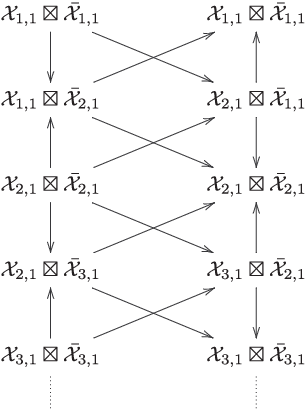}
    \caption{Two-strands
%(of Feigin--Fuchs type)
structure of the vacuum sector for non-chiral symplectic fermions.}
    \label{sfnonchiralvirvacFF}
    \end{figure}
with a pattern of arrows coincident with the
one in Fig.~\ref{ind-chain-mod-fig}. The case $k=1$ in
Fig.~\ref{sfnonchiralVirtripl} appears more complicated, but it is
only because of additional `gaps' in the arrows when compared to
$k=0$.  The decomposition is still fully compatible with
Fig.~\ref{ind-chain-mod-fig}.
Here, additional checks of consistency between structures of $\interP{j}$ and
$\bimnch_{(j)}$ involve finer filtrations.
Consider for example the following filtration of $U \interinfin$-submodules in
$\interP{j}$
\begin{equation}\label{VirN-finer-filtr}
0=\filt_{j-1}\subset\filt_j\subset\filt_{j+1}\subset\filt_{j+2}\subset\dots\subset\interP{j},
\end{equation}
where $\filt_j$ is the submodule generated from the top subquotients
$F_{k,(-1)^{k+1}}^{(0)}$, with $k\leq j$, in
Fig.~\ref{ind-chain-mod-fig}. It turns out that each term of this filtration is a direct
sum of indecomposable $\VirN(2)$ submodules, belonging in general to different
isospin-sectors. This follows from the fact that the algebra $\LQGodd$ of all
intertwining operators between
$U\interinfin$-modules gives also intertwining operators for
$\VirN(2)$-modules.
To show an explicit example of a correspondence between the terms of
the filtration~\eqref{VirN-finer-filtr} and $\VirN(2)$ submodules, we consider for simplicity again
the case $j=0$, where the first non-trivial term
in~\eqref{VirN-finer-filtr} is $\filt_1$ with the subquotient
structure
 \begin{equation}\label{filt-one-pattern}
{\footnotesize
   \xymatrix@=26pt
{
%%    {{}&(0)\ar@{..>}[dr]&&\\
     {\Firr{1}{1}}\ar@[red][d]\ar[drr]
     &&{\Firr{1}{1}}\\
     {\Firr{2}{-1}}\ar@[red][urr]\ar[drr]
     &&{\Firr{2}{-1}}\ar[u]\ar@[red][d]\\
%     {\Firr{3}{1}}\ar[u]\ar@[red][d]\ar@[red][urr]\ar[drr]
     &&{\Firr{3}{1}}
}
}
 \end{equation}
It is decomposed over $\VirN(2)$ into the direct sum (over all integer
isospins)
\begin{equation*}
\includegraphics[scale=0.9]{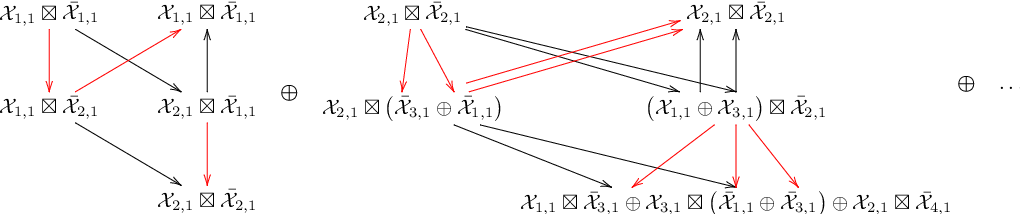}
\end{equation*}
where the first and second summands are particular
$\VirN(2)$-submodules of $\bimnbos_0$ in
Fig.~\ref{sfnonchiralvirvacFF} and $\bimnbos_1$ in
Fig.~\ref{sfnonchiralVirtripl}, respectively (we just rearranged nodes
introducing red arrows
%for the horizontal arrows in the diagrams in
%Fig.~\ref{sfnonchiralVirtripl} and in
% Fig.~\ref{sfnonchiralvirvacFF} denoting
for the right or anti-chiral Virasoro action). The other
direct summands of higher isospins appear in the same
pattern~\eqref{filt-one-pattern} with arrows and nodes in the
$U\interinfin$-picture split according to~\eqref{Fs-Vir-content-gen} in
order to
get the left-right Virasoro diagrams. All the other terms $\filt_j$ in the
filtration~\eqref{VirN-finer-filtr} can in principle be analyzed in a
similar way but pictures are more complicated and we do not give them.

\medskip

To conclude, we checked  that all arrows  present in the  diagrams for the subquotient structure of the scaling limit of
the spin-chain  modules $\APrTL{j}$ or indecomposables $\interP{j}$ over $U\interinfin$ are also in the corresponding infinite diagrams for the subquotient structure of
modules over $U\interinfinco$, which contains the Virasoro algebra $\VirN(2)$,  and vice versa, confirming our earlier results.  Our analysis was based on a decomposition onto direct summands over the left-right Virasoro algebra.
 Then, due to an isomorphism found around~\eqref{Uinterch-hom} and Conj.~\ref{conj:simples} about simple modules over $\interchalg$, we formulate our final conjecture.
  \begin{conj}
 Taking the completions $\interPco{j}$ (in the formal topology as defined in Sec.~\ref{sec:interch-interinfin}) of the indecomposable $U\interinfin$-modules  $\interP{j}$ gives
indecomposable modules over the interchiral algebra $\interchalg$  and their subquotient structure is given by the same infinite towers as for $\interP{j}$ in Fig.~\ref{ind-chain-mod-fig}.
\end{conj}

\section{Conclusion}\label{sec:concl}
This paper terminates our series on the scaling limit of the $\gl(1|1)$
spin chain. The conclusions of the analysis of \cite{ReadSaleur07-2}
in the open case do carry over to the periodic case, but at the price
of several complications. The bimodule structure of the spin chain
over $\JTL{N}$ and its centralizer $\LQGoddi$ does turn out to be
compatible with the known symplectic fermion continuum limit. However,
simple modules over $\JTL{N}$ correspond to direct sums of modules over
$\VirN(2)$, the left-right Virasoro algebra, while the known $\nSU(2)$
symmetry of Kausch is not present on the discrete spin chain in the
bulk case (in contrast with the boundary case). The lesson drawn is
that the good organizing object of the LCFT is truly the scaling limit
of the $\JTL{N}$ algebra, and that it contains more than $\VirN(2)$,
which must be extended by non chiral fields (not commuting with
$\nSU(2)$), giving rise to the {\bf interchiral
algebra} $\interchalg$. The centralizer of this interchiral algebra
remains $\LQGoddi$ in the scaling limit, and the bimodule structure
of the $\gl(1|1)$ discrete spin chain carries over identically now to
the scaling limit, exactly as in the open case.
Interestingly, we  note that the bimodule  over the pair of algebras $\interchalg$
and $\LQGoddi$ produces  acyclic complexes with the differentials $\EE
n$ and $\FF n$ ($\EE{n}^2=\FF{n}^2=0$). These can be interpreted  as non-chiral analogues of
the well-known Felder resolutions for chiral Virasoro
representations.

\medskip

The jaded reader might argue of course that the complex algebraic analysis presented here and in~\cite{GRS1,GRS2} is not really necessary nor useful to
understand most physical properties of symplectic fermions. The point however is that we are in the process of developing a strategy to tackle interacting
models for which  very little is known, and a direct solution based on a (free) action simply unavailable. In this case, we believe that the algebraic analysis is an essential tool to make progress, and answer such simple questions as which conformal fields are degenerate and which are not, which fields have logarithmic partners, what are the logarithmic couplings, {\it etc}. It is also probably possible to obtain information about fusion rules using this approach.

Sadly, things are bound to be more complicated in most cases than they were for $\gl(1|1)$. The reason is, that for $\gl(1|1)$, the representation of
$\JTL{N}$ is in fact non-faithful -- the
structure of the bulk and boundary theories are, as a result, quite
similar. General $\gl(n|m)$ or $\mathfrak{osp}(n|2m)$  spin chains, such as those necessary to study the cases $c=0$, $c=1$, will provide faithful representations, for which the
structure of the indecomposable $\JTL{N}$ modules will be considerably more involved. Their analysis will, in fact, require use of more sophisticated algebraic techniques, and will be started in~\cite{GRS4}. Note that even for $c=-2$, the $\gl(1|1)$ spin chain is non generic, and the  $\gl(2|2)$ spin chain for instance will lead, in the scaling limit, to a $c=-2$ LCFT that bears little resemblance with symplectic fermions.

\section*{Acknowledgements} We are grateful to C. Candu, I. Frenkel,
J.L. Jacobsen, V. Schomerus, I.Yu. Tipunin, and R. Vasseur for valuable
discussions. The work of A.M.G. was supported in part by Marie-Curie IIF fellowship, the
RFBR grant 10-01-00408, and the RFBR--CNRS grant
09-01-93105. The work of N.R. was supported by the NSF grants DMR-0706195 and DMR-1005895.  The work of H.S. was supported by the ANR Projet 2010
Blanc SIMI 4 : DIME.  We are grateful to the  Institut Henri Poincar\'e where this work was completed, and to the organizers of the program ACFTA for their kind hospitality. A.M.G is also grateful to Nick Read for his
kind hospitality in Yale University during May, 2012.

\section*{Appendix A: Fourier transforms}\label{sec:fourier}
\renewcommand\thesection{A}
\renewcommand{\theequation}{A\arabic{equation}}
\setcounter{equation}{0}

It is convenient to introduce Fourier transforms of the $f_j$ and
$f^{\dagger}_j$ fermions used in the definition of our $\JTL{N}$
representation in~\eqref{rep-TL-1} and~\eqref{rep-JTL-2}. We set, for $1\leq m\leq N$ (recall that we set $N=2L$),
\begin{equation}\label{def-ferm}
\ferm_{p_m} = \ffrac{1}{\sqrt{N}}\sum_{k=1}^N e^{-ikp_m}i^{-k}f_{k}, \qquad
\fermd_{p_m} = \ffrac{1}{\sqrt{N}}\sum_{k=1}^N e^{ikp_m}i^{-k}f^{\dagger}_{k}
\end{equation}
with the set of allowed momenta
\begin{equation}\label{momenta-set}
p_m=
\begin{cases}
\frac{2\pi m}{N},\qquad &L-\text{even},\\
\frac{(2m-1)\pi}{N}, &L-\text{odd},
\end{cases}
\qquad\quad 1\leq m\leq N,
\end{equation}
and with the usual anti-commutation relations
\begin{equation*}
\{\ferm_{p_1},\fermd_{p_2}\} = \delta_{p_1,p_2}, \qquad \{\ferm_{p_1},\ferm_{p_2}\} = \{\fermd_{p_1},\fermd_{p_2}\} = 0.
\end{equation*}

We then introduce the following linear combinations of fermions
\begin{align}
\chi^{\dagger}_p &=
\ffrac{1}{\sqrt{2}}\Bigl(\sqrt{\tan{\ffrac{p}{2}}}\,\fermd_{p-\frac{\pi}{2}} +
\sqrt{\cot{\ffrac{p}{2}}}\,\fermd_{p+\frac{\pi}{2}}\Bigr),&
\chi_p &=
\ffrac{1}{\sqrt{2}}\Bigl(\sqrt{\cot{\ffrac{p}{2}}}\,\ferm_{p-\frac{\pi}{2}} +
\sqrt{\tan{\ffrac{p}{2}}}\,\ferm_{p+\frac{\pi}{2}}\Bigr),\notag\\
\eta^{\dagger}_p &=
\ffrac{1}{\sqrt{2}}\Bigl(\sqrt{\tan{\ffrac{p}{2}}}\,\fermd_{p-\frac{\pi}{2}} -
\sqrt{\cot{\ffrac{p}{2}}}\,\fermd_{p+\frac{\pi}{2}}\Bigr),&
\eta_p &=
\ffrac{1}{\sqrt{2}}\Bigl(\sqrt{\cot{\ffrac{p}{2}}}\,\ferm_{p-\frac{\pi}{2}} -
\sqrt{\tan{\ffrac{p}{2}}}\,\ferm_{p+\frac{\pi}{2}}\Bigr),\label{eq:chi-eta-def}\\
%% {openbimodule} - changed notations
\chi^{\dagger}_0 &= \fermd_{\frac{\pi}{2}},\quad \chi_0 = \ferm_{\frac{\pi}{2}},&
\eta^{\dagger}_0 &= \fermd_{\frac{3\pi}{2}},\quad \eta_0=\ferm_{\frac{3\pi}{2}},\notag
\end{align}
with momenta $p$ shifted by $\pi/2$ and taking thus values
$p=p_n=\step n$, where $\step=\frac{\pi}{L}$ and $1\leq n\leq L-1$, for even and odd $L$. The
$\chi$ and $\eta$ fermions  satisfy the anti-commutation relations
\begin{equation}\label{chi-eta-rels}
\left\{\chi^{\dagger}_p,\chi_{p'}\right\} = \left\{\eta^{\dagger}_p,\eta_{p'}\right\} =
\delta_{p,p'}, \qquad
\left\{\chi_p,\eta_{p'}\right\} =
\left\{\chi^{\dagger}_p,\eta^{(\dagger)}_{p'}\right\} =
\left\{\eta^{\dagger}_p,\chi^{(\dagger)}_{p'}\right\} = 0.
\end{equation}
These $\chi$ and $\eta$ fermions are creation and
annihilation operators for the Hamiltonian $H$ from~\eqref{hamil-def} and they were found in~\cite{GRS1}.

For the anti-periodic model, we  introduce $\ferm_p$ and $\fermd_p$ with the same formal expression~\eqref{def-ferm} but now the momenta $p_m$ take values $\frac{2\pi m}{N}$ for $L$ odd and $\frac{(2m-1)\pi}{N}$ for $L$ even, with $1\leq m\leq N$. As a
result, the values $p={\pi\over 2}, {3\pi\over 2}$ are not allowed, and
there are no zero modes. Finally,  we introduce
$\chi^{(\dagger)}_p$ and $\eta^{(\dagger)}_p$ fermions generating
Hamiltonian eigenstates from the vacuum by the same
formal definition~\eqref{eq:chi-eta-def}  but
now momenta take values $\step/2\leq p\leq \pi - \step/2$ with the
step $\step=\pi/L$.

\section*{Appendix B: Lie algebras  $\glinf$,  $\bglinf$, and $\interch$}\label{app:glinf}
\renewcommand\thesection{B}
\renewcommand{\theequation}{B\arabic{equation}}
\setcounter{equation}{0}
Recall~\cite{Kac-book} that $\glinf$ is a Lie algebra of  infinite matrices with a  \textit{finite} number of non-zero elements. This algebra admits a well-known representation (so-called  basic representation) in the free fermion models. Having fermion modes $\nferp_n$ with their conjugates $\nferm_{-n}$, the basis elements $\Egl_{n,m}$ of $\glinf$ -- the elementary matrices having identity at the position $(n,m)$ and zeros otherwise -- can be written as
\begin{equation}\label{basic-rep}
\Egl_{n,m} = \nferp_n\nferm_{-m}, \qquad n,m\in\oZ.
\end{equation}
 The  commutation relations in this algebra are then the usual ones as in a finite $\gl_N$ but now indices are in $\oZ$. The algebra $\glinf$ admits a central extension $\glinf \oplus \oC\one$  which we denote by $\glinf'$. This extension is defined using the two-cocylce $c:\glinf\times\glinf\to\oC$ given by (see~\cite{Kac-book})
\begin{equation}
c(\Egl_{n,m},\Egl_{m,n}) =  -c(\Egl_{m,n},\Egl_{n,m}) = 1, \quad\text{if}\; n\leq0, m\geq1,\quad \text{and}  \quad c(\Egl_{n,m},\Egl_{k,l}) = 0 \quad \text{otherwise}.
\end{equation}
Note that $c$ satisfies the cocycle condition that allows to modify the commutation relations in $\glinf$ as
\begin{equation}
[a+ x\one,b+ y\one] = [a,b] + c(a,b)\one,\qquad a,b\in\glinf, \; x,y\in\oC.
\end{equation}

The completed algebra $\bglinf$ is then defined as a  Lie algebra of    infinite matrices with
 a possibly infinite number of non-zero elements but the matrix still has a finite number of non-zero diagonals, those which are along the main diagonal containing the Cartan elements $\Egl_{n,n}$. This completion can be also described using a formal algebraic completion of the Cartan subalgebra in $\glinf$, see~\cite{Kac-book}.

 \medskip

 Recall that we introduced the Lie algebra $\interch$ in the end of Sec.~\ref{sec:comm-rel-interch} -- it is a Lie algebra of our local operators.
  We now define an embedding of $\interch$ into $\bglinf$, or more precisely to its central extension~$\bglinf'$ due to the presence of $L_0$ and $\bar{L}_0$ operators (see Sec.~\ref{sec:spinf}), in the following way. First, we introduce new fermionic modes $\nfer^{\alpha}_m$, for $\alpha=1, 2$ and $m\in\oZ$, and reorganize the modes of the chiral and anti-chiral symplectic fermions, $\psi^{\alpha}_n$ and $\bar{\psi}^{\alpha}_n$, such that the chiral modes  are identified with even modes of $\nfer$'s and the anti-chiral ones correspond to odd modes of $\nfer$'s:
\begin{equation}\label{nfer-sferm}
\psi_n^{\alpha} = \nfer^{\alpha}_{2n},\qquad \bar{\psi}^{\alpha}_n =  \nfer^{\alpha}_{-2n+1},\qquad n\in\oZ, \quad \alpha=1,2.
\end{equation}
Recall that we also have a pair of modes $\phi^{1,2}_0$ conjugate to $\nfer^{2,1}_0$ -- we do not use new notations for them.
Under the identification~\eqref{nfer-sferm}, it is then easy to see that the scaling limit of our $\gl_N$ algebras described in Sec.~\ref{sec:spinf}, and in more precise terms using direct limits in App.~\AppLim, is given by the basic representation~\eqref{basic-rep} of $\glinf$ where one also includes bilinears like $\nfer^{2}_n\phim_0$, {\it etc.} The Lie algebra $\interch$ which has basis elements as infinite sums of $\Egl_{n,m}$'s can be now embedded into the $\bglinf'$ algebra. Indeed, the Virasoro-like generators $L^{(l)}_{n}$ are the sums of bilinears in the $\nfer$ modes that have infinite number of terms like $\nord{\nferp_{2k}\nferm_{-2k+2n}}$ and all these terms belong to the same diagonal along the main one containing the Cartan elements  $\nferp_{k}\nferm_{-k}$. Therefore, any finite linear combination of  $L^{(l)}_{n}$ belongs to $\bglinf'$, and similarly for $\bar{L}^{(l)}_{n}$. Note then that the basis elements $\enrg^{(l)}_n$ are given by the sums containing  infinite number of terms like $\nferp_{2k}\nferm_{-2k-2n+1}$ and they also belong to $\bglinf'$. Finally, we obtain that any linear and finite combination of basis elements in $\interch$ belongs to $\bglinf'$.
%We use this embedding $\interch\hookrightarrow\bglinf'$ in the main text for defining appropriate completions of our modules such that the completions contain images under the action of the local operators from the interchiral algebra $\interchalg$.

\section*{Appendix C: Direct and inverse limits}\label{sec:direct-lim}
\renewcommand\thesection{C}
\renewcommand{\theequation}{C\arabic{equation}}
\setcounter{equation}{0}

In order to satisfy the more formal reader, we give here precise definition of scaling limits of spin chains and define the Lie algebra $\interinfin$ using direct-limit constructions. We first remind  a formal definition of the direct limit. A direct (or inductive) system is a pair $\{A_i,\phi_{ij}\}$ of a family of algebraic objects $A_i$ (vector spaces, algebras, {\it etc.}) indexed by an ordered set~$I$ and a family of homomorphisms $\phi_{ij}:A_i\rightarrow A_j$ for all $i\leq j$ satisfying the following properties: (i) $\phi_{ii}$ is the identity on $A_i$, and (ii) $\phi_{ik}=\phi_{jk} \phi_{ij}$ for all $i\leq j\leq k$. The \textit{direct limit} $A_{\infty}\equiv\varinjlim A_i$ of the direct system $\{A_i,\phi_{ij}\}$ is defined as the disjoint union $\coprod_i A_i/\sim$ modulo an equivalence relation:
two elements $a_i\in A_i$ and $a_j\in A_j$ in the disjoint union are equivalent if and only if they eventually become equal in the direct system, {\it i.e.}, $a_i \sim a_j$ if there is $k\in I$ such that $\phi_{ik}(a_i)=\phi_{jk}(a_j)$. We obtain from this definition canonical homomorphisms $\phi_i: A_i\rightarrow A_{\infty}$ mapping each element to its equivalence class. The algebraic operations on $A_{\infty}$ are then defined  using these maps.
Recall also that an inverse or projective system of algebraic objects  can be defined in a similar way reverting order and arrows everywhere. So, we  have a family of homomorphisms $\phi_{ij}:A_j\rightarrow A_i$, for all $i\leq j$, with similar to (i) and (ii) properties.
The \textit{inverse limit} is then defined as a particular subobject (subalgebra/subspace) in the direct product of the $A_i$'s:
\begin{equation}\label{eq:inv-lim-def}
A\equiv \varprojlim A_i = \bigl\{a\in\prod_i A_i \, | \, a_i = \phi_{ij}(a_j) \; \text{for all}\; i\leq j\bigr\}.
\end{equation}
The inverse limit $A$ is equipped with canonical projections $\proj_i:A\to A_i$ defined by taking the $i$th component of the direct product.

\subsection{The scaling limit of JTL algebras as a direct limit}\label{app:JTL-lim}
Consider a case of odd $L=N/2$ for simplicity and an embedding $\phi_L$ of the Clifford algebra  $\Clif{4L}$ into a bigger one  $\Clif{4(L+2)}$ defined by
\begin{align}\label{eq:phi-def}
\begin{split}
&\phi_L\bigl(\chi^{(\dagger)}_{p}\bigr) = \chi^{(\dagger)}_{p'}, \qquad\qquad\quad\; \phi_L\bigl(\eta^{(\dagger)}_p\bigr) = \eta^{(\dagger)}_{p'},\\
&\phi_L\bigl(\chi^{(\dagger)}_{\pi-p}\bigr) = \chi^{(\dagger)}_{\pi-p'},\qquad\qquad  \phi_L\bigl(\eta^{(\dagger)}_{\pi-p}\bigr) = \eta^{(\dagger)}_{\pi-p'},
\end{split}
\qquad\qquad \textit{for} \quad 1\leq m \leq \ffrac{L-1}{2}.
\end{align}
where we set
$p\equiv p(m)=\frac{m\pi}{L}$, with $1\leq m\leq L-1$, and for the indexing set in the bigger Clifford algebra we set $p'\equiv p'(m)=\frac{m\pi}{L+2}$, with $1\leq m\leq L+1$, and for the zero modes we set
 \begin{equation}\label{eq:phi-def-0}
\phi_L(\chi_0)=\sqrt{\ffrac{L}{L+2}}\,\chi_0, \quad
\phi_L(\chi^{\dagger}_0)=\sqrt{\ffrac{L+2}{L}}\,\chi^{\dagger}_0,\quad
\phi_L(\eta_0)=\sqrt{\ffrac{L+2}{L}}\,\eta_0, \quad
\phi_L(\eta^{\dagger}_0)=\sqrt{\ffrac{L}{L+2}}\,\eta^{\dagger}_0.
 \end{equation}
 We note that  $\Clif{4(L+2)}$ is generated by its subalgebra $\phi_L(\Clif{4L})$ and $8$ additional generators $\chi^{(\dagger)}_{\frac{\pi}{2}\pm\frac{\step'}{2}}$ and $\eta^{(\dagger)}_{\frac{\pi}{2}\pm\frac{\step'}{2}}$, with $\step'=\frac{\pi}{L+2}$. The additional fermionic operators generate in $\Hilb_{2(L+2)}$ eigenstates of highest eigenvalue of the Hamiltonian $H$ in the one-particle subspace.

 The embeddings $\phi_L$ of algebras induce the corresponding embedding of the $\Clif{4L}$-module $\Hilb_{2L}$  into the $\Clif{4L}$-module $\Hilb_{2(L+2)}$ by $\phi_L\bigl(\chi^{\dagger}_{p}\vac\bigr) = \chi^{\dagger}_{p'}\vac$,
 %only for $0\leq m \leq L/2-1$,
  {\it etc.}, and $\phi_L(ab\vac) = \phi_L(a)\phi_L(b\vac)$, for any $a,b\in\Clif{4L}$. Here, we denote the embeddings of the representation spaces of the Clifford algebras by the same letter $\phi_L:\Hilb_{2L}\to\Hilb_{2(L+2)}$. We thus construct two direct systems: the one for the Clifford algebras
\begin{equation*}
\Clif{4}\xrightarrow{\;\phi_1\;}\Clif{12}\xrightarrow{\;\phi_3\;}\dots\xrightarrow{\phi_{L-2}}\Clif{4L}\xrightarrow{\;\phi_{L}\;}\Clif{4(L+2)}\xrightarrow{\phi_{L+2}}\dots
\end{equation*}
with the direct limit an infinite-dimensional Clifford algebra $\Clif{}\equiv\varinjlim_L \Clif{4L}$ generated by the rescaled generators or the symplectic fermions modes $\sferm^{1,2}_n$, $\bsferm^{1,2}_m$, and $\phi^{1,2}_0$ introduced in~\eqref{eq:sferm-lat-def} and~\eqref{eq:sferm-zero-def} and satisfying~\eqref{sferm-rel} and~\eqref{sferm-0-rel}; the second direct system is for modules over the Clifford algebras or the spin-chains
% $\Clif{4L}$-modules $\Hilb_{2L}$
\begin{equation}\label{eq:Hilb-direct-sys}
\Hilb_{2}\xrightarrow{\;\phi_1\;}\Hilb_{6}\xrightarrow{\;\phi_3\;}\dots\xrightarrow{\phi_{L-2}}\Hilb_{2L}\xrightarrow{\;\phi_{L}\;}\Hilb_{2(L+2)}\xrightarrow{\phi_{L+2}}\dots
\end{equation}
Using the canonical homomorphisms in the definition of direct systems, the direct limit $\varinjlim_L \Hilb_{2L}$ is then a $\Clif{}$-module. This module coincides with the space $\Hilb$ of scaling states or finite-energy and finite-spin states described  in Sec.~\ref{Cliff-def}.

Notice then that the direct system of Clifford algebras defines a direct system of the matrix algebras $\gl_{2L}$ in the following way\footnote{After actually rearranging four rows and four columns of zeros for a simpler presentation which is just a matter of conventions.}
\begin{equation*}
\dots \; \xrightarrow{\phi_{L-2}}
\begin{pmatrix}
  \Egl_{m,n} &     \Egl_{m,L+n}   \\
%0 &  &  &            & \times  &   &         & \\
\Egl_{L+m,n}    &    \Egl_{L+m,L+n} \\
\end{pmatrix}_{2L\times 2L}
 \xrightarrow{\;\phi_L\;}
\begin{pmatrix}
0 & 0\,\dots\,0 &         0   & 0   &  0\,\dots\,0   &  0\\
0 & \Egl_{m,n} & 0 & 0    & \Egl_{m,L+n}    & 0\\
 0 & 0\,\dots\,0 &     0       &0   &      0\,\dots\,0  & 0 \\
0 & 0\,\dots\,0 & 0    & 0  & 0\,\dots\,0 &  0 \\
0 & \Egl_{L+m,n} & 0   & 0  &  \Egl_{L+m,L+n} &0 \\
0 & 0\,\dots\,0 &      0      & 0  &  0\,\dots\,0 &    0
\end{pmatrix}_{2(L+2)\times 2(L+2)}
\!\!\!\!\!\!\! \xrightarrow{\phi_{L+2}} \; \dots
\end{equation*}
where the elements $\Egl_{i,j}$ denote usual elementary matrices, with $0\leq n,m\leq L-1$,  defined in~\eqref{gl-stand-basis}
and we abuse notations denoting the embeddings of matrix algebras by the same $\phi_L$ as for the Clifford algebras. The direct limit $\varinjlim_L \gl_{2L}$ then gives the infinite-dimensional Lie algebra that we call $\glinf$ -- an algebra of infinite matrices  with a finite number of non-zero elements. An identification of this limit with the standard representation~\eqref{basic-rep} of $\glinf$ is given via the identification~\eqref{nfer-sferm} of the fermionic generators.

In order to get the central extension $\gl'_{\infty}$  algebra  having  the proper normally ordered basis, necessary for physics, see~\eqref{gl-inf-2}, we should
 %slightly modify the embeddings $\phi$'s by mapping
 map the normally ordered generators of $\gl'_{2L}=\gl_{2L}\oplus\oC\one$, i.e., we define
\begin{equation}\label{phi-prime}
\phi'_L:\gl'_{2L}\to\gl'_{2(L+2)},\qquad \text{with}\quad
  \phi'_L(\Egl'_{m,n})=\Egl'_{m,n}
\end{equation}
   and set $\phi'_L(\one)=\one$, where $\one$ is the identity or the central element.
In particular, mapping   the normally ordered generators of $\interfin{N}'\subset\gl'_{2L}$ subalgebra (defined in~\eqref{def-prime-alg} and Sec.~\ref{S_N-def})  as
  $\phi'_L(\Asp'_{m,n})=\Asp'_{m,n}$, etc.,
we get in the direct limit $\varinjlim_L \interfin{2L}'$
%$\ssp'_{\infty}$ and
what we call the $\interinfin$ Lie algebra
 and the identity is mapped under the canonical homomorphisms $\interfin{N}'\to\interinfin$  to the central element in $\interinfin$.

It turns out that the  Lie algebra $\interinfin$ has $\strt_{m,n}$, $\bstrt_{m,n}$ and $\enrg_{m,n}$ with $m,n\in\oZ$ as its basis or equivalently the universal enveloping algebra $U\interinfin$ has the symplectic-fermion representation
on $\Hilb$ generated by $\one$ and
\begin{equation}\label{app:sympl-ferm-interLie}
\enrg_{n,m} = S_{\alpha\beta} \psi^{\alpha}_n \bar{\psi}^{\beta}_m,\quad
\strt_{k,l}=J_{\alpha\beta}{\sferm^{\alpha}_{k}\sferm^{\beta}_{l}},\quad \bstrt_{r,s}=J_{\alpha\beta}{\bsferm^{\alpha}_{r}\bsferm^{\beta}_{s}}, \qquad\text{with}\quad n,m,k,l,r,s\in\oZ,
\end{equation}
as was also pointed out in Sec.~\ref{sec:interinfin}.

%the $\interinfin$ algebras containing the central term and having thus the proper basis.

 Using Thm.~\ref{Thm:US-JTL-iso}, we can  define the scaling limit of JTL algebras using the direct system for  $\interfin{N}'$ algebras.
  Indeed, the generators of $\interfin{N}'$ are also generators of the image $\repgl\bigl(\JTL{N}\bigr)$. Therefore, the direct system for the enveloping algebras of $\interfin{N}'$ defines the direct system for the images of the JTL algebras
\begin{equation}\label{JTL-direct}
\repgl\bigl(\JTL{2}\bigr)\xrightarrow{\;\phi'_1\;}\repgl\bigl(\JTL{6}\bigr)\xrightarrow{\;\phi'_3\;}\dots\xrightarrow{\phi'_{L-2}}
\repgl\bigl(\JTL{2L}\bigr)\xrightarrow{\;\phi'_{L}\;}\repgl\bigl(\JTL{2(L+2)}\bigr)\xrightarrow{\phi'_{L+2}}\dots
\end{equation}
where $\phi'_i$ are defined in~\eqref{phi-prime} and are embeddings of the finite-dimensional associative algebras.
Note that an existence of embeddings for JTL algebras is a non-trivial (comparing to the open case) result and it is very hard to express these embeddings in terms of diagrams or the initial generators $e_j$'s and $u^2$.
Nevertheless, we found a special basis in which a system of embeddings is easily constructed and the direct system~\eqref{JTL-direct} gives the limit -- an enveloping algebra of $\interinfin$ --  which we call  \textit{the scaling limit of JTL} algebras.

Now, we can consider the direct system~\eqref{eq:Hilb-direct-sys} of spin-chains as the direct system of JTL modules (by restriction of the action from $\Clif{4L}$ to $\JTL{2L}$) and therefore the direct limit space $\Hilb=\varinjlim_L \Hilb_{2L}$ has canonically the $U\interinfin$-module structure: an element $a$ of $\interinfin$ has its repesentative in the subalgebra $\interfin{N}$,  for some $N$, or simply saying $a\in\interfin{N}\subset\interinfin$ for large enough $N$, and then the action of $a$ is given by the action of this representative on $\Hilb$, with the latter defined by the direct limit construction of the $\interfin{N}$-modules. The result obviously does not depend on the choice of $N$.

 We then show that the monomorphisms $\phi_L$ in~\eqref{eq:phi-def}-\eqref{eq:Hilb-direct-sys} commute also with the action of the centralizer $\centJTL$ or the image of $\LQGodd$ in each term $\Hilb_{2L}$.

\begin{Thm}\label{thm:phi}
The embeddings $\phi_L$ in the direct system~\eqref{eq:Hilb-direct-sys} commute with the actions of the $\JTL{2L}$ and $\centJTL$ algebras, or equivalently~\eqref{eq:Hilb-direct-sys} defines a direct system of bimodules over $\bigl(\JTL{2L},\centJTL\bigr)$.
\end{Thm}
\begin{proof}
The statement about the $\JTL{2L}$ action was proved by the previous discussion around~\eqref{JTL-direct} noting that we have the equality $\phi'_L=\phi_L$ for homomorphisms of associative algebras.
Recall then the expressions~\eqref{eq:EE-chi-eta} for the $\LQGodd$ generators. We first note that for all $L\geq1$ the embeddings $\phi_L$ defined in~\eqref{eq:phi-def}-\eqref{eq:phi-def-0} commute with the action of $\h$ or the $S^z$ operator because $\phi_L$ do not change the number of fermionic operators in a homogeneous $v\in\Hilb_{2L}$. Using~\eqref{eq:phi-def-0}, we see that the $\phi_L$'s commute also with $\E=\EE 0=(-1)^{S^z}\sqrt{N}\chi_0^{\dagger}$ and $\F=\FF 0=-i\sqrt{N}\eta_0$ generators.
It is thus enough to show that $\phi_L$ commutes with the operators
\begin{equation*}
A_L=\sum_{p=\step}^{\pi-\step}\eta_p\chi_{\pi-p},\qquad\qquad
B_L=\sum_{p=\step}^{\pi-\step}
\chi^{\dagger}_{\pi-p}\eta^{\dagger}_p,
\end{equation*}
for any $L\geq1$ and the sum is with the step $\step=\frac{\pi}{L}$ as usual. For any $v\in\Hilb_{2L}$ and odd $L$ we have
\begin{equation*}
\phi_L(A_L v) = \sum_{\substack{p'=\step'\\p'\ne\frac{\pi}{2}\pm\frac{\step'}{2}}}^{\pi-\step'}\eta_{p'}\chi_{\pi-p'} \phi_L(v) = A_{L+2}\phi_L(v),
\end{equation*}
with $\step'=\frac{\pi}{L+2}$ and in the last equality we used the fact that $\chi_p$ is an annihilation operator and thus the image $\phi_L(v)$ is in the kernel of the operators $\chi_{\frac{\pi}{2}\pm\frac{\step'}{2}}$. Similar calculation for even $L$ and the $B_L$, $\f^L$ and $\e^L$ operators finishes our proof.
\end{proof}

We can of course construct direct limits of any submodules over $\JTL{N}$ in $\Hilb_{N}$.
In particular, for any simple $\JTL{N}$-subquotient $\AIrrTL{j}{(-1)^{j+1}}$ that appear in $\Hilb_N$, as a fundamental representation of $\ssp_{N-2}$ due to Cor.~\ref{cor:JTL-sp-simples}, we have an isomorphic submodule in (the socle of) $\Hilb_N$. This fact follows from the explicit analysis of the module structure in~\cite{GRS2} or see Fig.~\ref{FF-JTL-mod}. Fixing such an irreducible submodule, say in the fermionic basis, the  direct system~\eqref{eq:Hilb-direct-sys}
thus gives a direct system of the simple modules $\AIrrTL{j}{(-1)^{j+1}}$ for different values of $N$. We denote their direct limit by
$\interM{j}$, for any integer $j\geq 1$, and these infinite-dimensional spaces have canonically the action of $U \interinfin$ as explained above.
%We finish  discussion here by the proposition.

\begin{Prop}\label{prop:lim-simples}
The direct limit modules $\interM{j}$ over $U \interinfin$ are irreducible for any integer $j\geq 1$.
\end{Prop}
\begin{proof}
Assume that for some $j$ the $\interinfin$-module $V\equiv\interM{j}$ is reducible and has thus an invariant subspace $W\subset V$. Take any element $v\in V$ such that it belongs to the complement of $W$. Then, by definition of the direct limit, there exist a positive number $N$ such that the element $v$ has its representative in the subspace $\AIrrTL{j}{(-1)^{j+1}}[N]\subset V$, where $\AIrrTL{j}{(-1)^{j+1}}[N]$ is just notation for the simple module over $\interfin{N}$, or over $\ssp_{N-2}$. Simply saying $v$ is in the subspace $\AIrrTL{j}{(-1)^{j+1}}[N]$ for large enough $N$. Take then the intersection $W'$ of the $\interfin{N}$-module $\AIrrTL{j}{(-1)^{j+1}}[N]$ with $W$. The $\interfin{N}$ acts on $W'$, otherwise it would contradict to the fact that $W$ is the invariant subspace for $\interinfin$ and thus for $\interfin{N}\subset\interinfin$. We also have that $W'$ is a proper subspace of $\AIrrTL{j}{(-1)^{j+1}}[N]$, otherwise $v$ would belong to $W'$ and it contradicts to the assumption that $v$ is in the complement to $W$. We thus obtain a contradiction to the fact that $\AIrrTL{j}{(-1)^{j+1}}[N]$ is irreducible for $\interfin{N}$.
\end{proof}

Using the same idea of a reduction from $\Hilb$ to its subspace $\Hilb_N\subset\Hilb$ for large enough $N$ and transporting thus an assumption in $\Hilb$ to the corresponding assumption in $\Hilb_N$, one can similarly prove that the direct summands $\APrTL{j}$ in $\Hilb_N$ (these are indecomposable but reducible $\JTL{N}$-modules from Fig.~\ref{FF-JTL-mod}) have the direct limit denoted by $\interP{j}$ as an indecomposable but reducible $U\interinfin$-module and its subquotient structure (with simple subquotients) is described by the infinite tower in Fig.~\ref{FF-JTL-mod}.
%We refrain from discussing this more for simplicity.

\newcommand{\rodd}{\rho\bigl(\LQGodd\bigr)}

\subsection{Inverse limit of centralizers}\label{app:inv-lim}
By Thm.~\ref{thm:phi}, we have a direct system of bimodules over $\bigl(\JTL{2L},\LQGodd\bigr)$ and the direct limit space~$\Hilb$ is canonically a bimodule over $(U\interinfin,\LQGodd)$.
The action of $U\interinfin$ on $\Hilb$ is described in~\eqref{app:sympl-ferm-interLie} and the $\LQGodd$ action has the fermionic expressions computed in our first paper~\cite{GRS1}:
\begin{equation}\label{app:EE-FF}
  \EE n =
\left[\sum_{m>0}
\Bigl(\ffrac{\sfermp_{m}\sfermp_{-m}}{m} -
\ffrac{\bsfermp_{m}\bsfermp_{-m}}{m}\Bigr)\right]^n\psi_0^2,\qquad
\FF n =
\left[\sum_{m>0}
\Bigl(\ffrac{\sfermm_{m}\sfermm_{-m}}{m} -
\ffrac{\bsfermm_{m}\bsfermm_{-m}}{m}\Bigr)\right]^n\psi_0^1,
\end{equation}
with the representation of the Cartan element $\h$ as
\begin{equation}\label{app:h}
\h =- i/2 \bigl(\sfermp_0\phim_0 + \sfermm_0\phip_0\bigr) + \sum_{m>0}
\ffrac{1}{m}\bigl(\sfermp_{-m}\sfermm_{m} + \sfermm_{-m}\sfermp_{m} + \bsfermp_{-m}\bsfermm_{m} + \bsfermm_{-m}\bsfermp_{m}\bigr)
\end{equation}
while the generator $\K = (-1)^{2\h}$.
This action can be obtained by constructing an {\sl inverse} system for the spin-chain representations $\rho_N$ of $\LQGodd$
as\footnote{This inverse system actually corresponds to the usual one~\cite{[Donk]} for the $\q$-Schur algebras -- quotients of  $\LQG$.}
\begin{equation}\label{QG-inverse}
\rho_2\bigl(\LQGodd\bigr)\xleftarrow{\;p_1\;}\rho_6\bigl(\LQGodd\bigr)\xleftarrow{\;p_3\;}\dots\xleftarrow{p_{L-2}}\rho_{2L}\bigl(\LQGodd\bigr)\xleftarrow{\;p_{L}\;}
\rho_{2(L+2)}\bigl(\LQGodd\bigr)\xleftarrow{p_{L+2}}\dots
\end{equation}
where $\rho_N$ is defined in~\eqref{eq:EE-chi-eta} and $p_n$ are surjective homomorphisms defined in terms of the fermionic generators as
\begin{equation}\label{eq:p-def}
p_L\bigl(\chi^{(\dagger)}_{p'}\bigr)=\Bigl(1-\delta_{p',\frac{\pi}{2}\pm\frac{\step'}{2}}\Bigr)\chi^{(\dagger)}_p,\qquad
p_L\bigl(\eta^{(\dagger)}_{p'}\bigr)=\Bigl(1-\delta_{p',\frac{\pi}{2}\pm\frac{\step'}{2}}\Bigr)\eta^{(\dagger)}_p,
\qquad 1\leq m\leq L+1,
\end{equation}
where as usual
$p\equiv p(m)=\frac{m\pi}{L}$ and $p'\equiv p'(m)=\frac{m\pi}{L+2}$,
and for the zero modes as
 \begin{equation}\label{eq:p-def-0}
p_L(\chi_0)=\sqrt{\ffrac{L+2}{L}}\,\chi_0, \quad
p_L(\chi^{\dagger}_0)=\sqrt{\ffrac{L}{L+2}}\,\chi^{\dagger}_0,\quad
p_L(\eta_0)=\sqrt{\ffrac{L}{L+2}}\,\eta_0, \quad
p_L(\eta^{\dagger}_0)=\sqrt{\ffrac{L+2}{L}}\,\eta^{\dagger}_0.
 \end{equation}
In other words, we define first $p_L$ as a vector-space homomorphism $\Clif{4(L+2)}\to\Clif{4L}$ such that its kernel equals the cokernel of $\phi_L$.
Then using~\eqref{eq:EE-chi-eta}, it is easy to check that the surjective maps $p_L:\Clif{4(L+2)}\surj\Clif{4L}$ of vector spaces  induce homomorphisms of subalgebras $\rho_{2(L+2)}\bigl(\LQGodd\bigr)\subset\Clif{4(L+2)}$ onto $\rho_{2L}\bigl(\LQGodd\bigr)$ and correspondingly homomorphisms $p_L:\Hilb_{2(L+2)}\surj\Hilb_{2L}$ of modules over $\rho_{2(L+2)}\bigl(\LQGodd\bigr)$.
Then, the inverse limit
\begin{equation}\label{app:inv-lim-U}
\rho\bigl(\LQGodd\bigr)=\varprojlim \rho_{2L}\bigl(\LQGodd\bigr)
 \end{equation}
defines an infinite-dimensional associative algebra  which we identify with a quotient of $\LQGodd$ that admits among irreducible representations only one-dimensional ones, see representation theory of $\LQGodd$ in~\cite[Sec.~3]{GRS2}. We call this inverse limit as \textit{the scaling limit of the JTL centralizers} $\centJTL$. The corresponding inverse limit space $\cHilb=\varprojlim \Hilb_{2L}$ of the $\LQGodd$-modules is a module over $\rho\bigl(\LQGodd\bigr)$ in~\eqref{app:inv-lim-U}. The action is defined componentwise following the definition~\eqref{eq:inv-lim-def}. The explicit  action~\eqref{app:EE-FF} and~\eqref{app:h} of the generators of the inverse limit on~$\cHilb$ is calculated  using~\eqref{eq:p-def} and~\eqref{eq:p-def-0} and the identification in~\eqref{eq:sferm-lat-def} and~\eqref{eq:sferm-zero-def}.
% Finally, note that
% Note finally  that the central term of $\interinfin$ belongs also to the center of $\LQGodd$.

The direct and inverse systems of the spaces $\Hilb_{N}$ have a useful coherence property in the sense of the following simple lemma.

\begin{Lemma}\label{lem:coh}
The direct system $\phi_L:\Hilb_{2L}\inj\Hilb_{2(L+2)}$ from~\eqref{eq:Hilb-direct-sys} and defined in~\eqref{eq:phi-def} and~\eqref{eq:phi-def-0}
satisfies
\begin{equation}\label{eq:coherence}
p_L\circ\phi_L = \mathrm{id},\qquad\qquad \text{for}\quad L\geq1,
\end{equation}
where the projective system $p_L:\Hilb_{2(L+2)}\surj\Hilb_{2L}$ is defined in~\eqref{eq:p-def} and~\eqref{eq:p-def-0}.
\end{Lemma}

\begin{rem}\label{rem:coh}
As a consequence of Lem.~\ref{lem:coh}, the algebra $\interfin{2L}$ acts on the image of $p_L$ and this action is identical to the one on the subspace $\Hilb_{2L}\subset\Hilb_{2(L+2)}$.  Moreover, it is straightforward to check that the map $p_L$ intertwines the $\interfin{2L}$ action on  $\Hilb_{2(L+2)}$ and $\Hilb_{2L}$. %because the kernel of $p_L$ is the cokernel of $\phi_L$
\end{rem}

Now, we discuss a relation between the two limits, $\Hilb$ and $\cHilb$, so far constructed.
Recall that the Hamiltonian $H$ in the scaling limit~\eqref{L0-def} of the JTL algebras gives the grading (or energy) operator $E=L_0+\bar{L}_0$, which is a bounded below operator, while the momentum operator $P=P(0)$ in~\eqref{HPn-def} gives the conformal spin operator $S=L_0-\bar{L}_0$ (not bounded).
Following the definition of $\phi_L$ and $p_L$ and using commutation relations of the fermionic generators with $E$ and $S$, we conclude that the direct limit is a bi-graded vector space with the decomposition $\Hilb=\oplus_{n\geq0, m\in\oZ}\, \Hilb^{(n,m)}$ onto generalized eigenspaces or root spaces of $(E,S)$, while the projective limit $\cHilb$ is the direct product $\cHilb=\prod_{n\geq0, m\in\oZ} \Hilb^{(n,m)}$.
Therefore, the inverse limit $\cHilb$ contains the direct limit space $\Hilb$ as a dense subspace (in the formal topology discussed in Sec.~\ref{sec:interch-interinfin}) and the basis in $\Hilb$ is also a basis in $\cHilb$.
Thus, the space $\cHilb$ can be considered as a formal completion of $\Hilb$ and the canonical projections $\proj_N:\cHilb\twoheadrightarrow\Hilb_N$ reduce on the dense subspace $\Hilb$ to projections $\proj_N:\Hilb\twoheadrightarrow\Hilb_N$. Correspondingly, the action of $\rho\bigl(\LQGodd\bigr)$ reduces to the action on $\Hilb$ given by the same formulas~\eqref{app:EE-FF} and~\eqref{app:h}. Let $\Phi_N:\Hilb_{N}\inj\Hilb$ denotes the canonical embedding defined by the direct system~\eqref{eq:Hilb-direct-sys}. Using Lem.~\ref{lem:coh} and Rem.~\ref{rem:coh}, we formulate a corollary.
\begin{Cor}\label{cor:coh}
The canonical projections $\proj_N:\Hilb\surj\Hilb_N$ are in the coherence property with the canonical embeddings  $\Phi_N:\Hilb_{N}\inj\Hilb$:
\begin{equation}\label{eq:canon-coh}
\proj_N\circ\Phi_N = \mathrm{id},\qquad\qquad \text{for even}\quad N\geq2.
\end{equation}
Therefore, the image of the surjective map $\proj_N:\Hilb\surj\Hilb_N$ can be identified with the subspace $\Hilb_N\subset\Hilb$.
The map $\proj_N$ intertwines the $\interfin{N}$ action on  $\Hilb$ and $\Hilb_{N}$.
\end{Cor}

\medskip

It turns out that the centralizer of the algebra $U \interinfin$ action  on $\Hilb$ is given by the representation $\rho$ of $\LQGodd$ defined in~\eqref{app:inv-lim-U}.
% by~\eqref{app:EE-FF} and~\eqref{app:h}.

\begin{Thm}\label{thm:centr-interLie}
The centralizer $\cent_{U\interinfin}$ of the enveloping algebra $U\interinfin$ action~\eqref{app:sympl-ferm-interLie}
on $\Hilb$
 equals $\rodd$, with the action in~\eqref{app:EE-FF} and~\eqref{app:h}.
\end{Thm}
\begin{proof}
The centralizer of $U\interinfin$ obviously commutes with the normally ordered operators $\nord{\strt_{n,-n}}$ and $\nord{\bstrt_{n,-n}}$, see~\eqref{app:sympl-ferm-interLie} and also discussion in Sec.~\ref{sec:interinfin}, and therefore with the energy operator
\begin{equation*}
E=L_0+\bar{L}_0=-\sum_{n>0}\bigl(\nord{\strt_{n,-n}} + \nord{\bstrt_{n,-n}}\bigr) - \nord{\strt_{0,0}},
 \end{equation*}
which is an element of the completed algebra $\interinfinco$ but on each $v\in\Hilb$ the sum reduces to a finite sum. Let $\Hilb^{(n)}$ denotes the generalized eigenspace (or root space) of eigenvalue $n$ for the $E$ operator, i.e., we have $\Hilb=\oplus_{n}\Hilb^{(n)}$.
From the partition function analysis given in Sec.~\ref{sec:gen-func}, we conclude that $n$ is non-negative integer and each $\Hilb^{(n)}$ is finite-dimensional.
 %So, the endomorphism space $\Endo\bigl(\Hilb^{(n)}\bigr)$, for any $n\geq0$, is a finite matrix algebra.
 We thus have that  the centralizer is a subalgebra in the direct product
$\prod_{n\geq 0} \Endo\bigl(\Hilb^{(n)}\bigr)$ of finite matrix algebras.

Using the simple fact that the fermionic operators $\sferm^{1,2}_{-m}$ and $\bsferm^{1,2}_{-m}$ change an eigenvalue of $E$
 %$L_0+\bar{L}_0$
 by $m$,
we have by construction of the direct system~\eqref{eq:Hilb-direct-sys}, with the identifications in~\eqref{eq:sferm-lat-def} and~\eqref{eq:sferm-zero-def}, that for a fixed positive $j$ there exists a big enough $N>0$ such that any $v\in\Hilb^{(j)}\subset\Hilb$ has its representative in the $N$-site spin-chain $\Hilb_N\subset\Hilb$. Using the canonical embedding $\Phi_{N}:\Hilb_{N}\inj\Hilb$  we can consider any such $v$ as an element in $\Hilb_N$ (or in other words, we observe that the fixed energy subspaces $\Hilb^{(n)}$ stabilize with the rise of $N$.)

Assume then that there exists a non-zero operator $O\in\prod_{n\geq 0} \Endo\bigl(\Hilb^{(n)}\bigr)$ such that it commutes with the action of $U\interinfin$ but is not an element of the algebra $\rodd$. In other words, writing the operator $O$ componentwise as $O=(O_0,O_1,O_2,\dots)$, with $O_j\in\Endo(\Hilb^{(j)})$, there exists a positive number $j$ such that the $j$th component $O_j$ of $O$ can not be written as a finite linear combination of words in the generators~\eqref{app:EE-FF} and~\eqref{app:h}. Following the preceding discussion we can take a big enough $N$ such that the $\Hilb^{(j)}$ is in the subspace $\Hilb_N\subset\Hilb$ (under the canonical embedding $\Phi_N$).
By the coherence property~\eqref{eq:canon-coh} in Cor.~\ref{cor:coh}, this subspace is identified with the image of the canonical projection $\proj_N:\Hilb\surj\Hilb_N$ and therefore on the subspace $\Hilb_N\subset\Hilb$ we have the action of $\rodd$ that factorizes up to the action of $\rho_N\bigl(\LQGodd\bigr)$. The latter also acts on $\Hilb^{(j)}\subset\Hilb_N$.
% thus the image $\proj_N(\Hilb)$ is the $U\interfin{N}$-submodule $\Hilb_N$ in $\Hilb$.
 By the assumption on the operator $O$, we have that the composition $\proj_N O \proj_N\in\Endo\, \Hilb_N$ is not an element of $\rho_N\bigl(\LQGodd\bigr)$, otherwise its action on $\Hilb^{(j)}\subset\Hilb_N$ could be written in terms of the $\rodd$ generators and it contradicts to the initial assumption. On the other hand, the composition $\proj_N O \proj_N$ commutes with the action of the subalgebra $\interfin{N}\subset\interinfin$ or the $\JTL{N}$ algebra (recall Thm.~\ref{Thm:US-JTL-iso}) because $P_N$ intertwines the action of $\interfin{N}$ (see Cor.~\ref{cor:coh}) and $O$ commutes with any subalgebra of $\interinfin$ by the assumption. Therefore, the operator $\proj_N O \proj_N$ equals\footnote{For brevity, we write $\f^{N/2}$ instead $\rho_N(\f^{N/2})$, etc.}
 \begin{equation}\label{eq:aa}
 \proj_N O \proj_N = a_1\f^{N/2} + a_2\e^{N/2}, \qquad \text{for}\quad a_1,a_2\in\LQGodd,
  \end{equation}
 because the centralizer of $\JTL{N}$ on $\Hilb_N$ is the algebra generated by $\rho_N\bigl(\LQGodd\bigr)$ and $\f^{N/2}$ and $\e^{N/2}$ (see Sec.~\ref{subsec:cent}). It is then easy to show a contradiction. Indeed, taking similarly the projection onto $\Hilb_{N+2}$, which contains the $\Hilb_N$, we conclude that $P_{N+2}O P_{N+2}$ commutes with the $\interfin{N+2}$ or the $\JTL{N+2}$ action and by~\eqref{eq:aa} it equals $b+\rho_{N+2}(a_1\f^{N/2}+a_2\e^{N/2})$, where $b\in\cent_{\JTL{N+2}}$ such that $b\in\mathrm{ker}P_N$. By this
 equality we obtain that $\rho_{N+2}(a_1\f^{N/2}+a_2\e^{N/2})$ is in $\cent_{\JTL{N+2}}$ and it is possible only if (i) the operator $\rho_{N+2}(a_1\f^{N/2} + a_2\e^{N/2})$ is in $\rho_{N+2}\bigl(\LQGodd\bigr)$ and thus $\proj_N O \proj_N\in\rho_{N}\bigl(\LQGodd\bigr)$ or (ii) $a_1=a_2=0$ and thus $\proj_N O \proj_N=0$. The first assertion contradicts to the assumption on $O$ and the last assertion contradicts to the assumption that $\proj_N O \proj_N$ is a non-zero operator. Thus we obtain the statement of the theorem.
\end{proof}

\section*{Appendix D: Emergence of the interchiral algebra from the lattice}\label{sec:emerge-latVir}
\renewcommand\thesection{D}
\renewcommand{\theequation}{D\arabic{equation}}
\setcounter{equation}{0}

%\subsubsection{Higher modes}
 We look  at  the Fourier transforms of $e_j$'s introducing $q=n\pi/L$, for $n\in\oZ$, and defining the operators
\begin{equation}\label{inter}
H(n) = -\sum_{j=1}^N e^{-iqj} e_j=\sum_{p}
\left[1+e^{iq}+ie^{-ip}-ie^{i(p+q)}\right] \fermd_{p}\,
\ferm_{p+q+\pi}
\end{equation}
which were obtained  in our \first paper~\cite{GRS1}. These  can be rewritten using the more convenient $\chi$ and $\eta$ fermions from~\eqref{eq:chi-eta-def} as
\begin{multline}\label{lat-Vir-H}
H(n) =
2e^{iq/2}\Biggl(\sqrt{\sin{q}}\, \chi^{\dagger}_{0}\bigl(\chi_{q} +
\eta_{q}\bigr) + \sum_{\substack{p=\step\\\text{step}=\step}}^{\pi-q-\step}
\sqrt{\sin{(p)}\sin{(p+q)}} \bigl(\chi^{\dagger}_{p}\,\chi_{p+q} -
\eta^{\dagger}_{p}\,\eta_{p+q}\bigr)\\
+ \sum_{\substack{p=\step\\\text{step}=\step}}^{q-\step}
\sqrt{\sin{(p)}\sin{(q-p)}}\bigl(\chi^{\dagger}_{\pi-p}\,\eta_{q-p} +
\eta^{\dagger}_{\pi-p}\,\chi_{q-p}\bigr)
+ \sqrt{\sin{q}}\, \bigl(\chi^{\dagger}_{\pi-q} +
\eta^{\dagger}_{\pi-q}\bigr)\eta_{0}\Biggr),
\end{multline}
where  $0<n<L$, while a similar expression can be written for other values of the modes $n$.

In order to take the scaling limit for $n$ close to $L$ we define $k=L-n$ and consider operators $\tilde{H}(k)=H(L-k)$ and study their scaling limit for finite $k$. We thus first substitute $n\to L-k$ and $q\to \pi-q'$, with $q'=k\pi/L$, in~\eqref{lat-Vir-H}. Then,
using the limit~\eqref{eq:sferm-lat-def}-\eqref{eq:sferm-zero-def} to the fermions
$\sferm^{1,2}$ and $\bsferm^{1,2}$ and linearizing the dispersion relation, we obtain in
the scaling limit (keeping terms in the sums with  the momentum $p$ close to $0$ or to $\pi$), with finite positive
$k$,
\begin{multline*}
-i\ffrac{L}{2\pi}\tilde{H}(k)\mapsto\sfermp_{0}\bigl(\bsfermm_{k} - \sfermm_{-k}\bigr)
+ \sum_{m=1}^{k-1}\bigl(\sfermp_{-m}\bsfermm_{k-m} -
\bsfermp_{m}\sfermm_{m-k}\bigr)
+ \sum_{m\geq k+1}\bigl(\sfermp_{-m}\bsfermm_{k-m} -
\bsfermp_{m}\sfermm_{m-k}\bigr)\\
+ \sum_{m\geq1}\bigl(\sfermp_{m}\bsfermm_{m+k} -
\bsfermp_{-m}\sfermm_{-m-k})
+ \bigl(\sfermp_{-k}-\bsfermp_{k}\bigr)\sfermm_{0}.
\end{multline*}
Gathering all terms and using the relations~\eqref{sferm-rel}, we finally obtain the contribution corresponding to low-lying
excitations over the ground state,
\begin{equation*}
\ffrac{L}{2\pi i}\tilde{H}(k) \mapsto  S_{\alpha \beta}\sum_{m\in\oZ}\sferm^{\alpha}_{m}\bsferm^{\beta}_{m+k}  \equiv
\enrg_k,\qquad k>0,
\end{equation*}
which we denote in what follows by $\enrg_k$ or   $\enrg^{(0)}_k$. A similar computation gives the same scaling limit of $\ffrac{L}{2\pi i}\tilde{H}(k)$ for negative finite $k$.

\medskip

We then consider the Fourier transformation of the momentum operator $P$,
%% commutator $[e_j,e_{j+1}]$
\begin{equation*}
P(n)=\ffrac{i}{2}\sum_{j=1}^N e^{-iqj} [e_j,e_{j+1}], \qquad q=\ffrac{n\pi}{L},
\end{equation*}
which gives~\cite{GRS1} on a finite spin-chain the expression in $\chi$-$\eta$ fermions:
\begin{multline*}
P(n) = 2e^{iq}\Biggl(\cos{\ffrac{q}{2}}\sqrt{\sin{q}}\, \chi^{\dagger}_{0}\bigl(\chi_{q} -
\eta_{q}\bigr) + \sum_{\substack{p=\step\\\text{step}=\step}}^{\pi-q-\step}
\cos{\bigl(p+\ffrac{q}{2}\bigr)}\sqrt{\sin{(p)}\sin{(p+q)}} \bigl(\chi^{\dagger}_{p}\,\chi_{p+q} +
\eta^{\dagger}_{p}\,\eta_{p+q}\bigr)\\
- \sum_{\substack{p=\step\\\text{step}=\step}}^{q-\step}
\cos{\bigl(p-\ffrac{q}{2}\bigr)}\sqrt{\sin{(p)}\sin{(q-p)}}\bigl(\chi^{\dagger}_{\pi-p}\,\eta_{q-p} -
\eta^{\dagger}_{\pi-p}\,\chi_{q-p}\bigr)
- \cos{\ffrac{q}{2}}\sqrt{\sin{q}}\, \bigl(\chi^{\dagger}_{\pi-q} -
\eta^{\dagger}_{\pi-q}\bigr)\eta_{0}\Biggr)
\end{multline*}
Once again,  to take the scaling limit for $n$ close to $L$ we  consider operators $\tilde{P}(k)=P(L-k)$ and study their scaling limit for finite $k$.  It turns out that  the scaling limit (for finite $k$) is
\begin{equation}
-\ffrac{L^2}{\pi^2}\tilde{P}(k) \mapsto S_{\alpha
 \beta}\sum_{m\in\oZ}(2m+k)\sferm^{\alpha}_{m}\bsferm^{\beta}_{m+k}\equiv
 \enrg^{(1)}_k,\qquad k\in\oZ,
\end{equation}
where the operators $\enrg^{(l)}_k$ are introduced in~\eqref{enrgl-def}, for $l\geq 0$ and $k\in\oZ$.

\subsection{Higher Hamiltonians and the Lie algebra $\interfin{N}$}\label{app:Hln-Sn}
Recall that  in Sec.~\ref{Howe} we gave  a Lie algebraic $\interfin{N}$ description of the JTL algebra in terms of fermion bilinears.
We consider here  a different basis in $\interfin{N}$ that is spanned by the generalized higher Hamiltonians $H_l(n)$ introduced~\eqref{hham-def}.
These Hamiltonians for positive modes $n$ were expressed in~\eqref{Hln-pos} in the basis of $\interfin{N}$. In order to check that the linear span of all $H_l(n)$'s gives the vector space $\interfin{N}$ we need also negative-mode expressions. For $l\geq0$ and $1\leq n\leq L$, we
find a slightly different formula
\begin{multline}\label{Hln-neg}
H_l(-n) =
2e^{-iq\frac{l+1}{2}}\Biggl( \sum_{m=n+1}^{L-1}
\cos{l\bigl(p-\ffrac{q}{2}\bigr)}\sqrt{\sin{(p)}\sin{(p-q)}}\, \Asp_{m,m-n}\\
-\half\sum_{m=1}^{n-1}
\cos{l\bigl(p-\ffrac{q}{2}\bigr)}\sqrt{\sin{(p)}\sin{(q-p)}}\bigl(\Csp_{L-m,L+m-n}
+(-1)^l \Bsp_{m,n-m}\bigr)\\
+\cos{\ffrac{lq}{2}}\sqrt{\sin{q}}\bigl(\Egl_{0,L-n} - (-1)^l \Egl_{0,L+n} + \Egl_{n,L} -
 (-1)^l\Egl_{2L-n,L} \bigr) \Biggr).
\end{multline}

We checked then up to $N=20$ that a basis in the linear span of all $H_l(n)$'s, with $l\geq0$ and $n\in\oZ$,  resides  actually in the range $-\pi + l +1\leq q = n\pi/L \leq \pi + l + 2$ and $l$ -- the index for the family -- runs from $0$ to $L-1$. Moreover, the number of linearly independent $H_l(n)$'s in this range is given by the dimension of the Lie algebra $\interfin{N}$ introduced in Dfn.~\ref{S_N-def}. We thus conclude that the $H_l(n)$'s should be closed under commutations (which is really hard to compute explicitly) and that they  just give  a different basis in the Lie algebra $\interfin{N}$.

\section*{Appendix E: The character formulas}
\renewcommand\thesection{E}
\renewcommand{\theequation}{E\arabic{equation}}
\setcounter{equation}{0}

We provide here an  analysis of the left-right Virasoro $\VirN(2)=\Vir(2)\boxtimes\overline{\Vir}(2)$ content of the
simple $\JTL{N}$ modules in the scaling limit using a general strategy (via the XXZ spin chain) that will be useful for other models as well.
We recall the basic fact~\cite{Cardy} that
\begin{equation}
 \mathrm{Tr}\, e^{-\beta_R(H-Ne_0)}e^{-i\beta_I P}\to \mathrm{Tr}\, q^{L_0-c/24}\bar{q}^{\bar{L}_0-c/24}\label{charform}
 \end{equation}
where $H$ and $P$ are a lattice Hamiltonian (normalized such that
 the velocity of sound is unity) and momentum (see \cite{GRS1}
 and below), $e_0$ is the ground state energy per site in the
 scaling limit, $q(\bar{q})=\exp\left[-{2\pi\over N}(\beta_R\pm
 i\beta_I)\right]$ with $\beta_{R,I}$ real and $\beta_R>0$, and $N$ is
 the length of the chain. On the right-hand side, we have $L_0$ and $\bar{L}_0$ as zero modes of the stress-energy tensor, and $c$ is the corresponding central charge. The trace on the left is taken over a
 subspace (of scaling states) of the spin chain, and the trace on the right is over the
 states occurring in this subspace in the scaling limit.

The calculation of an expression such as the trace on the left
of~\eqref{charform} is most easily done when the spin chain is the
well known XXZ chain with appropriate value of the deformation
parameter $\q$ and appropriately twisted boundary conditions. To be
more specific, we consider the Hamiltonian
\begin{equation}
\HXXZ=\sum_{j=1}^{2L-1}
\bigl(\sigma^+_j\sigma^-_{j+1}+\sigma_j^-\sigma_{j+1}^+\bigr) + e^{2iK}\sigma^{+}_{2L}\sigma_{1}^{-}+ e^{-2iK}\sigma^{-}_{2L}\sigma_{1}^{+}
+\sum_{j=1}^{2L}\ffrac{\q+\q^{-1}}{4}\sigma_j^z\sigma_{j+1}^z,
\end{equation}
where the $2\times2$-matrices $\sigma^{\pm}$, $\sigma^z$ are Pauli
matrices.  The momentum operator $\PXXZ$ can be chosen either by
using the exact eigenvalues of the translation operator on the spin
chain, or by using the general formulas in~\cite{GRS1}. Parametrizing
$\q=e^{i\pi/(x+1)}$, it has been known for a long time thanks to
Coulomb gas and Bethe ansatz arguments~\cite{PasquierSaleur,Rittenbergetal} that %
 \begin{equation}
   \hbox{Tr}_{{\cal H}_{[j]}} e^{-\beta_R(\HXXZ-2Le_0)}e^{-i\beta_I \PXXZ}\to F_{j,e^{2iK}}
 \end{equation}
where
 \begin{equation}
  F_{j,e^{2iK}}=\ffrac{q^{-c/24}\bar{q}^{-c/24}}{P(q)P(\bar{q})}
     \sum_{n\in \oZ}q^{\left(xj+(x+1)(n+K/\pi)\right)^{2}-1\over 4x(x+1)}
    \bar{q}^{\left(xj-(x+1)(n+K/\pi)\right)^{2}-1\over 4x(x+1)}\label{contlimtr}
     \end{equation}
and the subspace ${\cal H}_{[j]}$ is the subspace of  spin $S^z=j$, with $-L\leq
j\leq L$, of the spin-chain of length $N=2L$ (also denoted by $\APrTL{j}$),
and $P(q)$ is defined in~\eqref{Pq}.

On the other hand, it is also well known \cite{PasquierSaleur} that this XXZ Hamiltonian can be written  as
\begin{equation}
\HXXZ=-\sum_{i=1}^{2L}e_i,
\end{equation}
where the $e_i$ are expressed in the spin-$1/2$
basis~\cite{PasquierSaleur} and, together with $u^2$, provide a
representation of the Jones--Temperley--Lieb algebra $\JTL{N}$
whenever $e^{2ijK}=1$, for $j\neq 0$, and $e^{2iK}=\q^2$, for $j=0$.
This representation has for dimension the usual binomial coefficient
$\binom{2L}{L+j}$, and is believed to be isomorphic to the standard
module $\AStTL{|j|}{e^{\pm i\phi}}$ for $\q$ generic and $\phi=-2K$,
where the `$+$' is for a positive $j$, and the `$-$' is for a negative
value of $j$. The standard modules (defined and discussed
in~\cite{GRS2}) on opposite sectors $S^z=\pm j$ are conjugate to each
other with respect to the  bilinear pairing
$\AStTL{|j|}{z}\times\AStTL{|j|}{z^{-1}}\to\oC$ from~\cite{GL} invariant with respect to the action of the affine TL algebra denoted there by $\ATL{N}$.  The
pairing has a trivial radical only in the generic cases.

When $\q$ is a root of unity, the representation obtained from the
twisted XXZ chain is not isomorphic to a standard module any
longer. But it is well established~\cite{PasquierSaleur} that the
traces over the modules and their scaling limit behave smoothly across
those points. This means that we can use~\eqref{contlimtr} to
obtain the generating function of conformal weights (that is,
eigenvalues of $L_0$ and $\bar{L}_0$) in the scaling limit of the modules
occurring in our $\gl(1|1)$ spin chain.

 Let us
 specialize now to $\q=i$ ($x=1$) where $c=-2$ and
 \begin{equation}
   F_{j,e^{2iK}}=\ffrac{q^{1/12}\bar{q}^{1/12}}{P(q)P(\bar{q})}
     \sum_{n\in \oZ}q^{\left(j+2(n+K/\pi)\right)^{2}-1\over 8}
    \bar{q}^{\left(j-2(n+K/\pi)\right)^{2}-1\over 8}
     \end{equation}
(note that $F_{0,e^{2iK}}=F_{0,e^{-2iK}}$). Recall the Kac formula for the value of the central charge $c=-2$
 \begin{equation}
 h_{r,s}={(2r-s)^2-1\over 8},
 \end{equation}

Introduce now
the character of the Kac representation
\begin{equation}
K_{r,s}={q^{h_{rs}}-q^{h_{r,-s}}\over q^{c/24}P(q)}={q^{(2r-s)^2/8}-q^{(2r+s)^2/8}\over\eta(q)}
\end{equation}
where  the Dedekind's eta function is $\eta(q)=q^{1/24}P(q)$. A short calculation establishes the crucial formula
 \begin{equation}
 F_{j,\q^{2j+2k}}-F_{j+k,\q^{2j}}=\sum_{r=1}^\infty K_{r,k}\bar{K}_{r,k+2j}\label{useful},
 \end{equation}
 with $\q=i$ here.

From the structure of the spin chain modules (see
Sec.~\ref{ind-chain-bimod-subsec} and Fig.~\ref{FF-JTL-mod}), we
deduce the traces
\begin{equation}
F_{j,(-1)^{j+1}}^{(0)}\equiv \lim_{L\to\infty} \mathrm{Tr}_{\AIrrTL{j}{(-1)^{j+1}}} e^{-\beta_R(H(0)-Le_0)}e^{-i\beta_I P(0)}
\end{equation}
 in each of
the simple subquotients $\AIrrTL{j}{(-1)^{j+1}}$
 appearing in the
(scaling limit of) $\JTL{}$-modules $\APrTL{k}$:
\begin{equation}
F_{j,(-1)^{j+1}}^{(0)}=\sum_{j'\geq j}
(-1)^{j'-j}\sum_{r=1}^\infty K_{r,1}\bar{K}_{r,2j'+1},\qquad j\leq k.
\end{equation}

On the other hand, recall the characters of the simples of Virasoro at $c=-2$
\begin{equation}
\chi_{j,1}={q^{(2j-1)^2/8}-q^{(2j+1)^2/8}\over\eta(q)}
\end{equation}
from which it follows that
\begin{equation}
K_{r,2j+1}=\sum_{s=-r+1}^r \chi_{j+s,1}.
\end{equation}
Moreover, we formally define the character when $r$ is negative by the same formula, so we have
$\chi_{-j,1}=-\chi_{j,1}$. It follows that
\begin{equation}
K_{r,1}=\chi_{r,1}.
\end{equation}
  Straightforward algebra then leads to the key result
  \begin{equation}\label{app:Fs-Vir-content-gen}
  F_{j,(-1)^{j+1}}^{(0)}=\sum_{j_1,j_2> 0}^* \chi_{j_1,1}\overline{\chi}_{j_2,1}
  \end{equation}
  where the sum is done with the following constraints:
  \begin{eqnarray}\label{app:cond-sum-star}
  |j_1-j_2|+1\leq j,\qquad %\nonumber\\
  j_1+j_2-1\geq j, \qquad %\nonumber\\
  j_1+j_2 - j = 1\; \text{mod} \; 2
  \end{eqnarray}
 (note this is equivalent to treating $j$ not as a spin but as a degeneracy, {\it i.e.}, setting $j=2s+1$, $j_i=2s_i+1$ and combining $s_1$ and $s_2$ to obtain spin $s$).

\end{document}